\documentclass[a4paper,11pt]{article}
\usepackage{jheppub}
\usepackage{lineno}
\usepackage{subcaption}
\usepackage{tcolorbox}
\usepackage{booktabs,tabularx,multirow,bm}
\usepackage[capitalize]{cleveref}
\usepackage{amsmath,amssymb}
\usepackage{amsthm}
\usepackage[percent]{overpic}
\usepackage{float}
\usepackage[utf8]{inputenc}
\tcbuselibrary{theorems}
\newtheorem{theorem}{Theorem}
\newtheorem{conjecture}{Conjecture}
\newtheorem{corollary}{Corollary}
\newtheorem{proposition}{Proposition}

\crefname{conjecture}{conjecture}{conjectures}
\Crefname{conjecture}{Conjecture}{Conjectures}
\crefname{corollary}{corollary}{corollaries}
\Crefname{corollary}{Corollary}{Corollaries}
\crefname{appendix}{Appendix}{Appendices}
\Crefname{appendix}{Appendix}{Appendices}

\newcommand{\R}{\mathbb R}

\newcommand{\dd}{\,\mathrm d}

\title{\boldmath Classical fractons with cosmological fixed points}
\author[a,b]{Akash Singh,}
\author[c]{Dileep P. Jatkar,}
\author[d]{S. L. Sondhi,}
\author[a,b]{and Abhishodh Prakash}
\affiliation[a]{Harish-Chandra Research Institute (HRI), Prayagraj (Allahabad) 211019, India}
\affiliation[b]{Homi Bhabha National Institute (HBNI),  Mumbai 400094, India}
\affiliation[c]{School of Mathematics and Natural Sciences, Chanakya University, Bengaluru 562165, India}
\affiliation[d]{Rudolf Peierls Centre for Theoretical Physics, University of Oxford, Oxford OX1 3PU, U.K.}
\emailAdd{akashsingh@hri.res.in}
\emailAdd{dileep.j@chanakyauniversity.edu.in}
\emailAdd{shivaji.sondhi@physics.ox.ac.uk}
\emailAdd{abhishodhprakash@hri.res.in}

\abstract{Classical fracton Hamiltonians are unusual examples of conservative many-body systems that can develop attractors after projection onto configuration or shape variables, although the full Hamiltonian phase space admits none. In this work, we study the scale-invariant, dipole-conserving two-parameter family of fracton Hamiltonians \(H_{\alpha,\beta}=\sum_{i<j}|\mathbf p_i-\mathbf p_j|^\alpha|\mathbf x_i-\mathbf x_j|^\beta\). By separating the coordinates into an overall scale and a residual shape, we obtain autonomous shape dynamics that admit fixed points which leave a purely scale evolution of the form \(R(t)\propto |t|^{\alpha/(\alpha-\beta)}\). These shape fixed points, which determine the spatial distribution of the expanding particles, are central configurations of power-law Riesz potentials with exponent \(\alpha+\beta\). The distinguished model \((\alpha,\beta)=(-2,1)\) is unique: its scale evolution takes the Einstein-de Sitter form \(R(t)\propto |t|^{2/3}\), its fixed-point equation is exactly the equal-mass Newtonian central-configuration equation, its large-\(N\) distribution is a homogeneous expanding ball, and its homothetic trajectories admit a zero-energy Newtonian gravitational dual. Combining analytic stability calculations with numerics, we show that the fixed points are locally stable; direct simulations at moderate \(N\) approach them from random initial data. At large \(N\), simulations instead reveal a richer class of fixed-points, bound clusters of approximately fixed physical size retain internal motion, while their coarse-grained centers approach unequal-mass Newtonian central configurations and preserve large-scale homogeneity. A scale-separation conjecture yields an effective unequal-mass fracton dynamics for the centers and a corresponding zero-energy Newtonian gravitational dual. We also find that trajectories generically exhibit a bidirectional arrow of time: scale and shape complexity grow away from a Janus point, while the leading contribution to Boltzmann entropy grows logarithmically. Together, the Einstein-de Sitter expansion of cluster centers, bound internal motion, large-scale homogeneity from central configurations, critical Newtonian duality and an emergent arrow of time reproduce the salient dynamical structure of a flat matter-dominated cosmology.  In the distinguished fracton model, all these cosmological analogues emerge as attractor properties, making it a toy model for cosmological dynamics without fine-tuning.}

\begin{document}
	\maketitle
	\flushbottom
	
	\section{Introduction}
	Attractor fixed points provide one of the clearest mechanisms by which complicated microscopic dynamics can produce simple and universal late-time behavior.  Trajectories in the same basin progressively lose memory of their initial conditions and approach a common invariant state, allowing a small amount of fixed-point data to govern a broad class of evolutions~\cite{Strogatz2015}.  The renormalization group famously applies the same idea in a different setting: coupling constants are treated as coordinates of a dynamical system, the logarithmic scale plays the role of time, and fixed points organize universal long-distance physics~\cite{WilsonKogut1974}.
	
	However, for ordinary Hamiltonian systems, this mechanism appears to be inaccessible.  A smooth Hamiltonian dynamics, including that generated by the familiar kinetic-plus-potential form \(H=\sum_i\mathbf p_i^2/(2m_i)+V(\{\mathbf x_i\})\), is symplectic and hence preserves phase-space volume in accordance with the Liouville theorem~\cite{Arnold1978}.  An asymptotically stable fixed point with an open basin would contract that volume and is therefore forbidden in the full Hamiltonian phase space~\cite{Arnold1978}.  The obstruction does not apply in the same way after projection or reduction: the contraction in the retained variables can be compensated by expansion along the discarded directions~\cite{bravetti2022scalingsymmetriescontactreduction,Sloan2018}.  A Hamiltonian system may consequently display attractor patterns in the configuration space, or in a space of relational variables, while its complete phase-space flow remains volume preserving~\cite{PrakashSadkiSondhi_Fractons_2024,bravetti2022scalingsymmetriescontactreduction}.
	
	Classical fractons offer striking examples of this possibility~\cite{PrakashGorielySondhi_Fractons_2024,PrakashSadkiSondhi_Fractons_2024}.  Their higher-moment conservation laws make mobility intrinsically collective, so that the motion of a subsystem depends on the presence and arrangement of the surrounding particles~\cite{PrakashGorielySondhi_Fractons_2024,PrakashSadkiSondhi_Fractons_2024}.  Although microscopic dynamics is Hamiltonian, its projection onto the configuration space can have attractor loci~\cite{PrakashSadkiSondhi_Fractons_2024}.  Earlier classical-fracton models exhibited attractor behavior in the configuration space, along with clustering and non-Gibbsian steady states~\cite{PrakashGorielySondhi_Fractons_2024,PrakashSadkiSondhi_Fractons_2024}. In addition, they exhibited the Machian dependence on the many-body environment~\cite{PrakashSadkiSondhi_Fractons_2024}, and local chaos, global ergodicity breaking, and an emergent arrow of time~\cite{BabbarSadkiPrakashSondhi_Fractons_2025}.  These phenomena suggest that attractors are a basic organizing principle of classical fracton dynamics. However, the earlier examples did not yield a general analytic classification of their fixed points: their nonlinear many-body dynamics was explored largely through numerical evolution~\cite{PrakashGorielySondhi_Fractons_2024,PrakashSadkiSondhi_Fractons_2024,BabbarSadkiPrakashSondhi_Fractons_2025}.
	
	In this work, we study a class for which an important set of attractor fixed points can be tracked analytically.  We consider the scale-invariant, dipole-conserving two-parameter family of Hamiltonians,
	\begin{equation*}
		H_{(\alpha,\beta)}
		=\sum_{i<j}
		|\mathbf p_i-\mathbf p_j|^\alpha
		|\mathbf x_i-\mathbf x_j|^\beta .
	\end{equation*}
	The mechanical similarity of this class of Hamiltonians permits a dynamical-similarity reduction~\cite{landau1976mechanics,bravetti2022scalingsymmetriescontactreduction,Sloan2018}.  After removing the center variables, the remaining degrees of freedom separate into an overall scale and dimensionless position-momentum shape variables evolving in a rescaled time.  The reduced shape equations are autonomous and may not preserve the symplectic volume of the original Hamiltonian system~\cite{bravetti2022scalingsymmetriescontactreduction,Sloan2018}.  They can therefore possess attractor fixed points without contradicting the Liouville theorem~\cite{Arnold1978,bravetti2022scalingsymmetriescontactreduction}.
	
	The radial fixed points of this reduced dynamics admit a simple geometric characterization.  Their spatial shapes are central configurations of a power-law Riesz potential with exponent \(q=\alpha+\beta\)~\cite{Battye_2003,DiacuPerezChavelaSantoprete2006,Maderna2013HomogeneousNBody}, while the remaining motion is a pure expansion or contraction, \(R(t)\propto |t|^{\alpha/(\alpha-\beta)}\).  Known results for Riesz equilibria and their particle approximation determine global regular large-\(N\) fixed-point profiles, including weighted balls and spherical shells~\cite{FrankMatzke2025,CarazzatoPratelliTopaloglu2026}.  On the \(\alpha=-2\beta\) branch, radial fixed points have the Einstein-de Sitter scale law \(R(t)\propto |t|^{2/3}\).  We prove a conditional local-stability theorem for the expanding regular extrema on this branch: the required positivity condition is established analytically for the equilateral three-body configuration, and the predicted stability is supported numerically for representative larger minima, while saddle-type configurations retain unstable directions.
	
	These results pick the model \((\alpha,\beta)=(-2,1)\), whose distinguishing feature is that it is the unique member of the Einstein-de Sitter branch for which the fracton fixed-point equation is identical to the equal-mass Newtonian gravitational central-configuration equation~\cite{Battye_2003}.  Its regular large-\(N\) profile is a homogeneous solid ball~\cite{FrankMatzke2025,CarazzatoPratelliTopaloglu2026} and each homothetic fixed-point trajectory admits an exact dual description as a trajectory in the Newtonian inverse-square gravitational force with zero Newtonian energy.  Both the critical energy and the effective gravitational coupling are selected by the fracton solution rather than imposed as independent Newtonian data.  This is an equivalence of the homothetic trajectories, not a canonical equivalence between the complete fracton and Newtonian theories.

	The numerical evolution indicates how this structure is selected beyond the linear neighborhood of a fixed point.  At moderate particle number (\(N\)), the random initial conditions studied here approach an expanding regular fixed point with the Einstein-de Sitter scale law. At larger \(N\), the simulations reveal a richer class of clustered attractors. The tight clusters retain internal dynamics while their coarse-grained centers approach unequal-mass Newtonian central configurations and separate as \(t^{2/3}\).  These motivate a conjectured effective unequal-mass fracton dynamics for these centers and, as a corollary, their homothetic center dynamics has a corresponding unequal-mass zero-energy Newtonian gravitational dual. Throughout the simulation, the clusters remain approximately fixed in physical size and coexist with large-scale homogeneity. This local dynamics has the Machian character because while an isolated pair cannot support such a bound motion,  pairs embedded in the many-body system can~\cite{PrakashSadkiSondhi_Fractons_2024,brans1961mach,barbour1995mach,sciama1953origin}.  Generic trajectories  exhibit a bidirectional scale-and-complexity arrow-of-time, {\em{i.e.}}, both grow away from a Janus point in the opposite time directions, while at large-\(R\), the leading contribution to the finite coarse-grained Boltzmann entropy grows logarithmically despite the time-reversal invariance of the microscopic Hamiltonian~\cite{BarbourKoslowskiMercati_PhysRevLett.113.181101,BabbarSadkiPrakashSondhi_Fractons_2025}.
	
	The resulting late-time dynamics closely parallels a flat matter-dominated cosmology~\cite{Ellis_2013,Ellis_2015}: the Einstein-de Sitter expansion, the critical Newtonian dynamics~\cite{Ellis_2013,Ellis_2015}, large-scale homogeneity~\cite{dodelson2020modern,Peebles1993}, bound local structures~\cite{einstein1945influence,carrera2010influence}, and an arrow of time~\cite{Penrose1979,albert2000time} are all emergent properties of the attractor point of the distinguished model.  In the corresponding purely Newtonian description of cosmology, several of these properties require special initial data~\cite{Ellis_2013,Ellis_2015,Battye_2003,Penrose1979,albert2000time}; here they arise through attraction within the basins explored analytically and numerically.  Thus, this Hamiltonian can be regarded as a controlled toy model showing how features that appear fine-tuned in a cosmological effective description can instead be emergent robust late-time consequences.

    The rest of the paper is organized as follows.  \Cref{sec:newtonian_fractons} reviews classical-fracton mechanics, develops two- and many-particle warm-ups, and then states the Newtonian-cosmology benchmark used later. \Cref{sec:scale_invt_fractons} develops the scale-shape reduction and radial fixed-point analysis, including the exact Newtonian dual in \cref{sec:full_newtonian_dual}.  \Cref{sec:distinguished_model} studies the distinguished model and its clustered large-\(N\) regime, while \cref{sec:structure} examines bound local dynamics, its Machian character and the emergent arrow of time.  \Cref{sec:conclusions} summarizes the results and discusses their implications for cosmological fine-tuning. The technical derivations and numerical details are collected in \cref{app:two_particle_3d,app:CentralConfigs,app:LargeNDistributions,app:StabilityDetails,app:NumericalImplementation,app:Entropy}.

\section{Classical Fracton Dynamics: Review and Warm-up}
	\label{sec:newtonian_fractons}
    
	\subsection{Classical Nonrelativistic Fractons: a Brief Review}
	Fracton phases were introduced as quantum many-body systems with excitations that are immobile or are confined to move on lower-dimensional submanifolds~\cite{Chamon2005,Haah2011,Vijay2015}.  Their restricted mobility, subextensive degenerate structure, and sensitivity to geometry are naturally organized by higher-moment conservation laws and admit descriptions in terms of tensor gauge theories, elasticity dualities, and multipole algebras~\cite{Pretko2017a,Pretko2017b,PretkoRadzihovsky2018,Gromov2019,Gromov2019curved,Gorantla2020,SeibergShao2021}.  Related constructions connect fractons to emergent gravity, holographic toy model, and non-Lorentzian geometric structures~\cite{Pretko2017c,Yan2019,SlagleKim2017,JainJensen2022,Bidussi2023}. 
    
    Classical non-relativistic fractons, whose study was initiated recently~\cite{PrakashGorielySondhi_Fractons_2024}, were found to exhibit several unusual many-body phenomena. To start with, the dynamics is Machian, {\em{i.e}}, isolated particles are immobile and motion is activated only by the presence of nearby particles. Secondly, ergodicity is strongly broken, and particles appear to cluster in an arbitrary dimension and with an arbitrary energy.  Finally, non-equilibrium steady states appear outside the usual Gibbsian description, and their dynamics spontaneously generates a global arrow of time~ \cite{PrakashGorielySondhi_Fractons_2024,PrakashSadkiSondhi_Fractons_2024,BabbarSadkiPrakashSondhi_Fractons_2025}.  Closely related works have extended the same logic to phase-space multipole conservation, continuum quantum mechanics, and freezing transitions in dipole-conserving chains~ \cite{SadkiPrakashSondhiArovas_PhaseSpaceFractons_2026,SadkiPrakashSondhi_ContinuumFractons_2025,ClassenHowesSenesePrakash_FreezingTransitions_2025}. The striking features of classical fracton dynamics can be attributed to the existence of attractor fixed points in their solution space. In spite of the fact that the fracton dynamics is governed by a Hamiltonian, which is constrained by the Liouville theorem, stable fixed points were shown to emerge when we track the dynamics in configuration space, spanned by positions and velocities rather than phase space spanned by positions and momenta, due to the unusual relationship between velocities and momenta imposed by symmetries. These configuration-space attractors were established analytically for small numbers of particles $N$ and inferred numerically for larger $N$. In this work, we introduce a new class of scale-invariant classical fracton Hamiltonians, whose fixed points are more tractable by using a so-called scale-shape reduction and whose steady states exhibit striking similarities to aspects of Newtonian cosmological dynamics. First, we begin with a review and warm-up using standard fracton Hamiltonians~\cite{PrakashGorielySondhi_Fractons_2024,BabbarSadkiPrakashSondhi_Fractons_2025,SadkiPrakashSondhiArovas_PhaseSpaceFractons_2026,SadkiPrakashSondhi_ContinuumFractons_2025}.
    
	We consider the Hamiltonian dynamics of \(N\) identical point particles.  In addition to total momentum conservation, we impose conservation of the total dipole moment.  For identical unit charges this is simply conservation of the `center-of-mass'~\footnote{We mean this in reference to an equivalent Newtonian system, where the particles have identical unit masses. Note that in our framework, `mass' is not well-defined.}
	\begin{equation}
		\mathbf{P}_{\text{C}} = \frac{1}{N} \sum_{i=1}^N \mathbf{p}_i,~\mathbf{X}_{\text{C}} = \frac{1}{N}\sum_{i=1}^N \mathbf{x}_i .\label{eq:conservation_posnmom}
	\end{equation}
	Throughout this work, bold symbols denote vectors, and the corresponding non-bold symbols their magnitudes.  The two conservation laws are equivalently implemented by invariance under translations in position and momentum space,
	\begin{equation}
		\mathbf{x}_i \mapsto \mathbf{x}_i + \mathbf{a},~\mathbf{p}_i \mapsto \mathbf{p}_i + \mathbf{b}.\label{eq:symmetry_actions}
	\end{equation}
	A minimal local Hamiltonian with these symmetries is
	\begin{equation}
		H = \frac{1}{2} \sum_{j<i} |\mathbf{p}_i - \mathbf{p}_j|^2 K(|\mathbf{x}_i - \mathbf{x}_j|). \label{eq:H_fracton}
	\end{equation}
	The dependence on momentum differences makes the collective character explicit: an isolated particle is immobile, while the surrounding particles supply its effective inertia~\cite{PrakashGorielySondhi_Fractons_2024,PrakashSadkiSondhi_Fractons_2024}. \(K(r)\) is the pair-inertia, or the mobility function, which is non-negative and decays at large \(r\), thus implementing the locality. For $K(x)$ with a strictly compact support, sufficiently separated particles cease to influence each other with inertia. The inverse-distance choice for $K(x)$ considered in \cref{sec:Warmup} is the simplest scale-invariant warm-up for the broader family studied in \cref{sec:scale_invt_fractons}.
	
	\subsection{Critical Newtonian Dynamics of a Pair of Fractons}
	\label{sec:Warmup}
    We first isolate the basic mechanism in an exactly solvable two-body setting.  Its dynamics already displays three many-body features encountered later: asymptotic `radialization',  scale dynamics and an auxiliary zero-energy Newtonian description.
	
	\subsubsection*{Exact One-dimensional Two-Fracton Dynamics}
	Consider the Hamiltonian given in \eqref{eq:H_fracton} for \(N=2\), that is \((i,j)=(2,1)\).  We begin with the dynamics in one spatial
	dimension, where every trajectory is necessarily radial and the
	Newtonian identification is exact.  The Hamiltonian is
	\begin{equation}
		H = \frac{1}{2} (p_1-p_2)^2 K(x_1 - x_2). \label{eq:H_2particles}
	\end{equation} 
	For initial conditions with conserved fracton energy \(E_f\neq0\), Hamilton's equations of motion are as follows.
	\begin{align}
		\dot{x}_1 &= - \dot{x}_2 = (p_1-p_2) K(x_1 - x_2)
		\label{eq:2_fracton_eom},\quad\\
		\dot{p}_1 &= - \dot{p}_2  = -\frac{(p_1 - p_2)^2}{2} K'(x_1 - x_2). 
	\end{align}
	Eliminating the momenta gives us
	\begin{equation}
		\ddot{x}_{a} =(p_1- p_2)^2 K(x_1 - x_2) \frac{\partial K}{\partial x_a}= 2 E_f \frac{\partial K}{\partial x_a}\ , \label{eq:Newtons_eom_2particles}
	\end{equation}
	where \(E_f\) is the conserved fracton energy. \Cref{eq:Newtons_eom_2particles} is precisely Newton's equation for unit-mass particles with effective interaction potential $U_{\text{eff}}=-2E_f K(x_1-x_2)$. This equivalence is not one of full phase spaces: only a special Newtonian energy surface is reached by fracton initial data.  Using the equations of motion in \cref{eq:2_fracton_eom}, one finds that the Newtonian energy vanishes,
	\begin{equation}
		E_N = \frac{\dot{x}_1^2 }{2} + \frac{\dot{x}_2^2 }{2} - 2 E_f K(x_1 - x_2) = 0.
	\end{equation}
	Thus, the Newtonian dynamics generated by two fractons is \emph{critical}, separating bound \((E_N<0)\) and unbounded trajectories \((E_N>0)\).  The mapping places every such fracton trajectory on the Newtonian surface \(E_N=0\), while the true conserved energy \(E_f\ge0\) remains unrestricted.
	
	Let us now specialize to the choice $K(x)=1/|x|$.  The effective potential is then $U_{\text{eff}}=-2E_f/|x_1-x_2|$, identical in form to the Newtonian gravitational interaction with an effective gravitational constant $G\propto E_f$.  The equations of motion \eqref{eq:2_fracton_eom} can be integrated exactly. Separation $r(t)\equiv |x_1(t)-x_2(t)|$ evolves as
	\begin{equation}
		r(t) = (18E_f)^{1/3}|t-t_0|^{2/3}. \label{eq:2particle_1d_r}
	\end{equation}
	where \(t_0\) is defined by \(r(t_0)=0\). Thus, the two-fracton dynamics with inverse-distance mobility function has homothetic Einstein-de Sitter evolution \(r(t)\propto |t-t_0|^{2/3}\). 
	
	\subsubsection*{Exact Three-dimensional Two-Fracton Dynamics}
	In three dimensions too the equations of motion can be integrated exactly. The canonical relative angular momentum
	\(\boldsymbol\ell=(\mathbf x_1-\mathbf x_2)\times
	(\mathbf p_1-\mathbf p_2)\) is conserved.  For \(K(r)=1/r\),
	its two most useful consequences are
	\begin{equation}
		r(t)^3=\frac{\ell^2}{2E_f}+18E_f(t-t_*)^2,
		\qquad
		\sin\chi=
		\frac{\ell}{\sqrt{\ell^2+36E_f^2(t-t_*)^2}},
		\label{eq:two_fracton_3d_highlights}
	\end{equation}
	where $r = |\mathbf{x}_1 - \mathbf{x}_2|$, \(\chi\) is the angle between the relative position and canonical
	momentum and $t_*$ is the time at minimum separation.  Thus every solution with finite \(\ell\) approaches a radial
	fixed point of the orientation dynamics and
	\begin{equation}
		r(t)\sim(18E_f)^{1/3}|t|^{2/3},
		\qquad |t-t_*|\longrightarrow\infty
	\end{equation}
	just as in \cref{eq:2particle_1d_r}. Nonzero angular momentum changes the finite-time orbit and prevents collision,
	but it does not change the asymptotic scale law.
	
	The Newtonian statement is more restrictive.  On the invariant radial
	submanifold \(\boldsymbol\ell=0\), introduce Newtonian canonical momenta
	\(\boldsymbol\pi_a=\dot{\mathbf x}_a\).  The radial fracton trajectory is
	then generated exactly by the pairwise Newtonian Hamiltonian
	\begin{equation}
		H^{\rm dual}
		=\frac12\sum_{a=1}^2|\boldsymbol\pi_a|^2
		-\frac{G_{\rm eff}^{(2)}}{|\mathbf x_1-\mathbf x_2|},
		\qquad
		G_{\rm eff}^{(2)}=2E_f,
		\qquad E_{N}=0 .
		\label{eq:two_fracton_3d_dual_hamiltonian}
	\end{equation}
	After removing the trivial center-of-mass motion, the required
	Newtonian dynamics is therefore both radial and fine-tuned to zero energy.
	The gravitational constant \(G_{\rm eff}^{(2)}\) is not a parameter of
	the fracton Hamiltonian: it is generated dynamically by the conserved
	fracton energy set by the initial conditions.  A fracton orbit with \(\ell\ne0\) is not generated by
	\cref{eq:two_fracton_3d_dual_hamiltonian} at finite time.  Its asymptotic curve can nevertheless be reproduced by the zero-energy
	radial Newtonian orbit.
	
	\subsection{Beyond Two Particles: Emergent Einstein-de Sitter Scale Dynamics}
	\label{sec:manybody_warmup}
    \begin{figure}[!htbp]
		\centering
		\includegraphics[width=\textwidth]{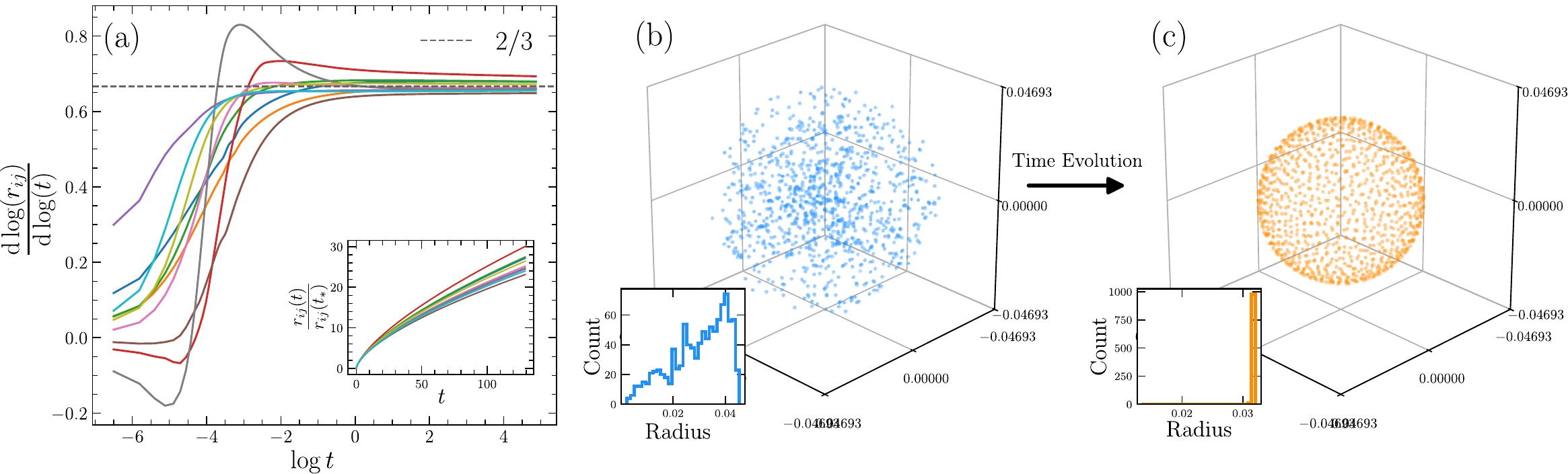}
			\caption{(a): \(N=256\) evolution of separations of ten randomly selected particle pairs for the pair-inertia model \(K(r)=1/r\) showing instantaneous logarithmic slope \(d\log r_{ij}/d\log t\), approaching the Einstein-de Sitter value \(2/3\), (inset) normalized separations of ten randomly selected particle pairs. (b,c): Time evolution from randomly generated initial conditions for \(N=1000\) (b).  The scale-normalized radial distribution settles into a thin spherical shell of radius \(1/\sqrt N\simeq0.032\) (c).}
		\label{fig:p2x-1}
	\end{figure}
    
	We next consider the \(N\)-body Hamiltonian in three dimensions with the same inverse-distance pair inertia,
	\begin{equation}
		H = \frac{1}{2}  \sum_{j<i} |\mathbf{p}_i - \mathbf{p}_j|^2 |\mathbf{x}_i-\mathbf{x}_j|^{-1} \label{eq:H_fracton_local}
	\end{equation}
	The two-particle calculation now suggests a sharper many-body question:
	which parts of the radialization, Einstein-de Sitter scaling, and
	critical Newtonian description survive when \(N>2\)? We begin this study numerically with randomly generated Gaussian-sampled phase-space data. As shown in \cref{fig:p2x-1}, the early motion is irregular, but the system then
	settles into an overall scale expansion close to the Einstein-de Sitter
	power law.  The figure follows the separations \(r_{ij}\) of several
	randomly chosen pairs.  After the initial transient evolution, these
	separations approach a common power law with logarithmic slope near
	\(2/3\).  Thus the asymptotic radial expansion seen in the two-body dynamics survives in the many-body problem as homothetic scale expansion for the random initial data studied.  Whether the resulting homothetic state also admits a full critical Newtonian dual will be answered later. The evolution also selects a nontrivial asymptotic distribution of the particles. As the initial transient settles and the dynamics enters a homothetic expansion, the scale-normalized configuration, or shape, converges to a hollow shell as shown in \cref{fig:p2x-1}.

	\subsection{Particle-based Newtonian Matter Cosmology as a Comparison}
	The \(t^{2/3}\) scaling of the two- and many-body trajectories, and the zero-energy Newtonian dual for two particles found above have a familiar counterpart in flat matter-dominated cosmology.  To state that comparison precisely, we now recall its Newtonian particle-mechanical formulation.  In the relativistic description, the cosmological principle of spatial homogeneity and isotropy leads to the Friedmann-Lema\^{\i}tre-Robertson-Walker ansatz, with a single scale factor \(R(t)\) governed by
	\begin{equation}
		\left(\frac{\dot{R}}{R}\right)^2 = \frac{8 \pi G \rho}{3} - \frac{k}{R^2} + \frac{\Lambda}{3}.\label{eq:Friedmann_equation}
	\end{equation}
	For the matter-dominated epoch, the same scale evolution follows from Newtonian gravity~\cite{10.1093/qmath/os-5.1.64,Ellis_2013,Ellis_2015}.  Consider \(N\) gravitating point particles in the homothetic form
	\begin{equation}
		\mathbf{x}_i(t)=R(t)\,\mathbf{q}_i,
		\qquad
		\sum_i m_i\mathbf q_i=0,
	\end{equation}
	where the dimensionless shape coordinates \(\{\mathbf q_i\}\) are time independent and form a comoving frame.  Newton's equations,
	\begin{equation}
		\ddot{\mathbf{x}}_i = \sum_{j \neq i} G m_j \frac{\mathbf x_j-\mathbf x_i}{|\mathbf x_j-\mathbf x_i|^3},\qquad i=1,\ldots,N, \label{eq:Newton_eom}
	\end{equation}
	close on the single scale factor only if the shape is a central configuration~\cite{Ellis_2013},
	\begin{equation}
		\sum_{j\neq i}Gm_j\frac{\mathbf q_j-\mathbf q_i}{|\mathbf q_j-\mathbf q_i|^3}
		= -\lambda\,\mathbf q_i,
		\qquad i=1,\ldots,N ,
		\label{eq:newtonian_central_config}
	\end{equation}
	for a constant \(\lambda\) independent of \(i\).  The scale then satisfies
	\begin{equation}
		\ddot R=-\frac{\lambda}{R^2},
		\qquad
		\frac{1}{2}\dot R^2-\frac{\lambda}{R}=\varepsilon ,
		\label{eq:newtonian_scale_energy}
	\end{equation}
	where \(\varepsilon\) is the Newtonian energy per unit scale degree of freedom.  Dividing the second equation by \(R^2\) gives the Friedmann form for dust, with \(\rho\sim R^{-3}\) and curvature proportional to \(-\varepsilon\). Thus, \(\varepsilon=0\) is the flat Einstein-de Sitter solution \(R(t)\propto t^{2/3}\), while \(\varepsilon<0\) and \(\varepsilon>0\) give closed recollapsing and open universes, respectively.

    Furthermore, the perturbed equations precisely replicate the linear structure growth rate characteristic of relativistic Friedmann-Lema\^{i}tre-Robertson-Walker models~\cite{Ellis_2015}.
	This construction underlies the intuition behind cosmological \(N\)-body simulations and provides a precise particle-mechanical version of matter-dominated FLRW evolution~\cite{Bagla_1997,Bertschinger,Ellis_2013}.  It also separates the ingredients combined in that solution: homothetic scale evolution, the spatial shape and its bulk distribution, and the scale energy are independent restrictions on Newtonian initial data.  Viewed as an autonomous Newtonian description, reproducing a flat, homogeneous matter-dominated solution together with an arrow of time associated with irreversible growth of structure therefore involves three distinct conditions:
	
	\textbf{A. Central configurations and homogeneity:} Generic Newtonian \(N\)-body initial data do not produce homothetic expansion.  The shape coordinates must satisfy \cref{eq:newtonian_central_config}, which defines a special subset of initial data.  Ellis and Gibbons emphasize this point~\cite{Ellis_2013}:
	
	\emph{``To good approximation, we currently see a FLRW type homothetic expansion. But in order to get such a flow, the initial positions of the particles must be constrained to satisfy [central configurations]. What kind of explanation can one give for such a fine-tuning of the initial data in Newtonian cosmology?''}
	
	A central configuration can, moreover, be highly inhomogeneous.  Obtaining an approximately homogeneous bulk profile requires selecting the appropriate central configurations, associated with distinguished extrema---in particular, global or near-global minima---of the gravitational shape potential~\cite{Battye_2003,Lourenco:2026lbr,Lourenco:2026uto}.  Homogeneity\footnote{In this discrete framework, ``homogeneity'' is restricted to a statistical uniformity of the discrete distribution when coarse-grained over suitable volumes, achievable only approximately in the interior for large N. Global exact homogeneity fails due to the finite boundary and unique centre. The cosmological principle therefore emerges as a restricted, approximate property rather than a globally exact postulate.} is therefore an additional condition within an already restricted class of shapes.
	
	\textbf{B. Zero energy and flatness:} Even after imposing homothetic motion, \cref{eq:newtonian_scale_energy} admits open, flat and closed solutions.  In the Newtonian formulation, curvature is controlled by the conserved total energy \(\varepsilon\): the flat Einstein-de Sitter universe is the critical solution \(\varepsilon=0\), for which the kinetic and potential terms balance and \(R(t)\propto t^{2/3}\); generic scale data have \(\varepsilon\neq0\). Thus the observed near-flatness of the universe translates into the statement that the Newtonian energy must lie very close to the critical surface.  This is the Newtonian version of the flatness problem.
	
	\textbf{C. Low-entropy conditions and the arrow of time:} In the standard cosmological account, the aligned arrows of scale expansion and structure growth are traced to a special smooth, low-entropy past---the past hypothesis~\cite{albert2000time,Penrose1979,Ellis_2015}. In the Newtonian formulation, it corresponds to a perturbed, expanding, homogeneous, homothetic initial condition that serves as a special, low-entropy past~\cite{Ellis_2015}. Alternative Janus-point models generate a bidirectional gravitational arrow dynamically~\cite{CarrollChen2004spontaneousinflationoriginarrow,CarrollChen_2005,BarbourKoslowskiMercati_PhysRevLett.113.181101,GoldsteinTumulkaZanghi_arrow_PhysRevD.94.023520}, although the constructions considered so far impose restrictions such as vanishing total energy, momentum and angular momentum and do not simultaneously select homothetic expansion {and hence cannot be regarded as precise cosmological models.}

   The inverse-distance fracton model therefore bears a partial but striking resemblance to the dynamics of a flat matter-dominated Newtonian cosmology.  Its exact two-particle solution asymptotically exhibits Einstein-de Sitter scaling, \(r(t)\propto |t|^{2/3}\), while its radial sector has an exact, critical, zero-energy Newtonian gravitational dual.  Our numerical results indicate that the Einstein-de Sitter scale law persists in the many-body system.  The late-time scale-normalized distribution approaches the shell shown in \cref{fig:p2x-1}, inconsistent with a homogeneous bulk profile to be expected from a valid cosmological model.  Moreover, although the critical Newtonian gravitational dual is exact for two particles, we will show in \cref{sec:full_newtonian_dual} that it does not extend to the many-body dynamics of this model.  In the next section, we embed the inverse-distance warm-up in a more general two-parameter family of scale-invariant Hamiltonians, whose fixed points can be analyzed systematically, thereby making precise both the similarities and the differences between this model and flat matter-dominated cosmology.  Within this family we will identify a different distinguished model whose attractor dynamics combines Einstein-de Sitter scaling, a homogeneous large-\(N\) distribution, Newtonian central configurations and a critical Newtonian gravitational dual, and which also exhibits an emergent arrow of time.  Thus, in the distinguished model, all the fine-tuned conditions of Newtonian cosmology identified above arise dynamically as attractor properties.

	\section{Scale-invariant Fractons and Their Fixed Points}
	\label{sec:scale_invt_fractons}
	
	\subsection{Scale-invariant Fractons: Hamiltonians, Two-Particle Dynamics and Symmetries} \label{Scale_inv}
	To make these comparisons precise and identify the distinguished model described above, we now embed the inverse-distance Hamiltonian \cref{eq:H_fracton_local} in a two-parameter family of scale-invariant fracton Hamiltonians,
	\begin{equation}
		H_{(\alpha,\beta)}\big(\{\mathbf{x}_i\},\{\mathbf{p}_i\}\big) = \sum_{j<i} 
		|\mathbf{p}_i-\mathbf{p}_j|^{\alpha}\;|\mathbf{x}_i-\mathbf{x}_j|^{\beta},
		\qquad \alpha,\beta\in\mathbb{R}. \label{eq:H_alpha_beta}
	\end{equation}
Unless otherwise stated, we will only consider $\alpha, \beta \neq 0$, $\alpha \neq \beta$, and assume that the system has non-vanishing energy.  All dynamical statements are made on time intervals contained in a smooth patch of phase space, away from any collisions or vanishing pair-momentum differences at which there is a singularity or non-differentiability. 

To get an immediate sense of \cref{eq:H_alpha_beta}, consider the two-particle dynamics in one spatial dimension.  As before, the trajectories are equivalent to those of a pair of Newtonian particles with vanishing Newtonian energy $E_N=0$.  The effective interaction and the separation $r = |x_1 - x_2|$ are
	\begin{align}
		U_{\text{eff}} &= -\alpha^2 E_f^{\frac{2(\alpha-1)}{\alpha}} r^{\frac{2\beta}{\alpha}}, \qquad
		r(t)= \left|2 (\alpha - \beta) E_f^{\frac{(\alpha -1)}{\alpha}} (t - t_0)\right|^{\frac{\alpha}{(\alpha - \beta)}}  .
		\label{eq:two_particle_alphabeta}
	\end{align}
	The three-dimensional problem has an additional invariant,
	\(\boldsymbol\ell=(\mathbf x_1-\mathbf x_2)\times
	(\mathbf p_1-\mathbf p_2)\).  For \(\alpha\ne\beta\), its exact solution
	shows that every finite-\(\ell\) trajectory approaches the radial scaling
	form shown in \cref{eq:two_particle_alphabeta}, {\em{i.e.}},
   \begin{equation}
		\sin\chi=O(|t|^{-1}),\qquad
		r(t)\sim
		\left|
		2(\alpha-\beta)E_f^{\frac{\alpha-1}{\alpha}}t
		\right|^{\frac{\alpha}{\alpha-\beta}},
		\qquad |t|\longrightarrow\infty ,
		\label{eq:alphabeta_two_body_3d_highlights}
	\end{equation}
	where \(\chi\) is the angle between the relative position and the canonical
	momentum. The exact Newtonian dual lives again on the invariant radial
	submanifold \(\boldsymbol\ell=0\).  With Newtonian canonical momenta
	\(\boldsymbol\pi_a=\dot{\mathbf x}_a\), define
	\begin{equation}
		\kappa:=\frac{2\beta}{\alpha},\qquad
		g_{\alpha,\beta}(E_f)
		:=\alpha^2E_f^{\frac{2(\alpha-1)}{\alpha}} .
	\end{equation}
	The radial fracton trajectories are generated by the pairwise
	Newtonian Hamiltonian
	\begin{equation}
		H^{\rm dual}
		=\frac12\sum_{a=1}^2|\boldsymbol\pi_a|^2
		-g_{\alpha,\beta}(E_f)
		|\mathbf x_1-\mathbf x_2|^\kappa,
		\qquad E_{N}=0 .
		\label{eq:alphabeta_two_body_dual_highlight}
	\end{equation}
	Thus an exact dual requires radial, zero-energy Newtonian initial data
	(after removing center-of-mass motion).  For nonzero \(\ell\), the
	finite-time fracton orbit is not generated by this position-only
	Hamiltonian, although \cref{eq:two_particle_alphabeta} shows
	that its leading asymptotic curve agrees with an appropriately selected
	radial zero-energy orbit.  The explicit solution is shown in
	\cref{app:two_particle_3d}.
	
	For the following one-parameter \emph{Einstein-de Sitter (EdS) branch}, 
	\begin{equation}
		\alpha = -2\beta. \label{eq:EdS_branch}
	\end{equation}
	one has \(\kappa=-1\) and
	\(\alpha/(\alpha-\beta)=2/3\).  Hence every finite-angular-momentum
	two-particle trajectory is asymptotically Einstein-de Sitter, while
	the invariant radial trajectories have the exact critical Newtonian
	gravitational dual \eqref{eq:alphabeta_two_body_dual_highlight}, with the
	dynamically generated constant
	\(G_{\rm eff}^{(2)}=g_{\alpha,\beta}(E_f)\).  The model in
	\cref{eq:H_fracton} with pair-inertia function \(K(x)=1/|x|\)
	corresponds, up to its factor of \(1/2\), to
	\((\alpha,\beta)=(2,-1)\) and lies on this line.  A second useful feature
	is the softening of non-locality. Unlike the original models in
	\cref{eq:H_fracton}, the Hamiltonian \cref{eq:H_alpha_beta} is not local
	for \(\beta>0\).  Nevertheless, when \(\alpha/\beta<0\), the radial
	two-particle trajectories are governed by an effective Newtonian
	potential that decays with separation.  This includes
	\eqref{eq:EdS_branch}.  We next ask how these features extend to
	many-body dynamics.

	The Hamiltonian family \cref{eq:H_alpha_beta} has independent translation
	symmetries in \(\mathbf{x}\) and \(\mathbf{p}\), and a common
	(diagonal) rotation symmetry
	\((\mathbf x_i,\mathbf p_i)\mapsto
	(O\mathbf x_i,O\mathbf p_i)\)~\footnote{Although the scalar function \(H\) is
	unchanged by rotating the two sets separately, independent position and
	momentum rotations are not canonical and do not map Hamiltonian
	trajectories to Hamiltonian trajectories.}.   The model also has an
	important scaling symmetry.  A classical Hamiltonian system is
	scale-invariant, or mechanically similar \cite{landau1976mechanics}, if
	there exists a rescaling of phase-space coordinates and time
	\begin{equation}
		\mathbf{x}_i \mapsto \tilde{\mathbf{x}}_i = a\,\mathbf{x}_i,\qquad
		\mathbf{p}_i \mapsto \tilde{\mathbf{p}}_i = b\,\mathbf{p}_i,\qquad
		t \mapsto \tilde{t} = c\,t,
		\label{eq:scaling}
	\end{equation}
	with positive constants \(a,b,c\), such that the equations of motion are covariant and every solution of Hamilton's equations is mapped to another. Hamilton's equations of motion for \eqref{eq:H_alpha_beta} are
	\begin{align}
		\dot{\mathbf{x}}_i &= \frac{\partial H_{(\alpha,\beta)}}{\partial \mathbf{p}_i}
		= \alpha\sum_{j\neq i}|\mathbf{p}_{ij}|^{\alpha-2}\,\mathbf{p}_{ij}\,
		|\mathbf{x}_{ij}|^{\beta}, \quad\\
		\dot{\mathbf{p}}_i &= -\frac{\partial H_{(\alpha,\beta)}}{\partial \mathbf{x}_i}
		= -\beta\sum_{j\neq i}|\mathbf{p}_{ij}|^{\alpha}\,
		|\mathbf{x}_{ij}|^{\beta-2}\,\mathbf{x}_{ij}, \label{eq:H_alpha_beta_eom}
	\end{align}
	with \(\mathbf{x}_{ij}\equiv\mathbf{x}_i-\mathbf{x}_j\) and \(\mathbf{p}_{ij}\equiv\mathbf{p}_i-\mathbf{p}_j\).  Under the scaling transformation in \eqref{eq:scaling},
	\begin{equation}
		H_{(\alpha,\beta)}(\{\mathbf{x}_i\},\{\mathbf{p}_i\}) \mapsto a^{\beta}b^{\alpha} H_{(\alpha,\beta)}(\{\mathbf{x}_i\},\{\mathbf{p}_i\}). \label{eq:Hscale}
	\end{equation}
	The scaled equations of motion \eqref{eq:H_alpha_beta_eom} are covariant provided
	\begin{equation}
		\frac{a}{c} = \frac{a^{\beta}b^{\alpha}}{b},\qquad
		\frac{b}{c} = \frac{a^{\beta}b^{\alpha}}{a} \implies c =  a^{1-\beta}\,b^{1-\alpha} \label{eq:tau}
	\end{equation}
	Thus any positive \(a\) and \(b\) define a scaling symmetry, once the time dilation is chosen as in \eqref{eq:tau}.  This two-parameter scaling freedom is stronger than the equivalent single-parameter one in the Kepler problem,
	\begin{equation}
		H_{\rm Kepler}=\frac{\mathbf p^2}{2m}-\frac{k}{r},\qquad
		\mathbf r\to\lambda\mathbf r,\quad
		\mathbf p\to\lambda^{-1/2}\mathbf p,\quad
		t\to\lambda^{3/2}t ,
		\label{eq:Kepler_scale}
	\end{equation}
	By contrast, in \cref{eq:H_alpha_beta} positions and momenta can be rescaled independently, producing the same trajectory with a different clock rate.

	\subsection{Dynamical Similarity Reduction and Scale-Shape Separation}
	The homogeneity of \eqref{eq:H_alpha_beta} allows the overall scale to be separated from the shape degrees of freedom.  This operation is a \emph{dynamical similarity reduction} \cite{bravetti2022scalingsymmetriescontactreduction}: the full Hamiltonian flow remains symplectic, but the reduced shape dynamics need not preserve a symplectic volume and may therefore possess attractors.  We begin with the following redefinitions of phase-space coordinates,
	\begin{equation}
		\mathbf{s}_i = \frac{\mathbf{x}_i-\mathbf{X}_{\rm C}}{R}, \quad \mathbf{u}_i = R^{\beta/\alpha}\left(\mathbf{p}_i-\mathbf{P}_{\rm C}\right),    \quad    R = \sqrt{\sum_{i=1}^N |\mathbf{x}_i-\mathbf{X}_{\rm C}|^2}, \label{eq:shape_coordinates}
	\end{equation}
	where $\mathbf{X}_{\rm C}$ and $\mathbf{P}_{\rm C}$ are defined in \cref{eq:conservation_posnmom}, $R>0$ is the overall position-space scale and the shape coordinates are thus constrained to satisfy
	\begin{equation}
		\sum_{i=1}^N |\mathbf{s}_i|^2=1, \qquad  \sum_{i=1}^N \mathbf{s}_i = \sum_{i=1}^N \mathbf{u}_i = 0. \label{eq:shape_coordinate_constraints}
	\end{equation}
	We also define a reparametrized time coordinate as
	\begin{equation}
		d\tau = \frac{dt}{R^{(\alpha - \beta)/\alpha}}.
		\label{eq:shape_t}
	\end{equation}
	In these variables, the Hamiltonian and thus energy become independent of the overall scale $R$,
	\begin{equation}
		H_{(\alpha,\beta)} = H^0_{(\alpha,\beta)}(\{\mathbf{s}_i,\mathbf{u}_i\}),\qquad
		H^0_{(\alpha,\beta)}=\sum_{j<i}|\mathbf{u}_i-\mathbf{u}_j|^\alpha|\mathbf{s}_i-\mathbf{s}_j|^\beta .
	\end{equation}
	The function \(H^0_{(\alpha,\beta)}\) is conserved along the full motion, although it is not itself the generator of the reduced \((\mathbf{s}_i,\mathbf{u}_i)\) dynamics.  The latter is determined from the equations of motion for the original coordinates, generated by the Hamiltonian, and then by making the transformation in \cref{eq:shape_coordinates}.  This gives
	\begin{align}
		\frac{d\mathbf{s}_i}{d\tau} &= \alpha\left(\mathbf F^i_{(\alpha,\beta)}-A_{(\alpha,\beta)}\mathbf{s}_i\right), \quad \frac{d\mathbf{u}_i}{d\tau} =- \beta\left(\mathbf G^i_{(\alpha,\beta)} - A_{(\alpha,\beta)}\mathbf{u}_i\right),\label{eq:shape_coordinates_eom}\\
		\frac{dR}{d\tau} &= \alpha A_{(\alpha,\beta)} R, \label{eq:scale_tau}
	\end{align}
	where
	\begin{align}
		\mathbf F^i_{(\alpha,\beta)}(\{\mathbf{s},\mathbf{u}\}) &=
		\sum_{j\neq i} 
		|\mathbf{s}_i-\mathbf{s}_j|^\beta |\mathbf{u}_i-\mathbf{u}_j|^{\alpha-2} (\mathbf{u}_i-\mathbf{u}_j),\\
		\mathbf G^i_{(\alpha,\beta)}(\{\mathbf{s},\mathbf{u}\}) &=
		\sum_{j\neq i}|\mathbf{u}_i-\mathbf{u}_j|^\alpha
		|\mathbf{s}_i-\mathbf{s}_j|^{\beta-2}(\mathbf{s}_i-\mathbf{s}_j),\\
		A_{(\alpha,\beta)} &= \sum_i \mathbf{s}_i\cdot\mathbf F^i_{(\alpha,\beta)}.
	\end{align}

    The Hamiltonian \eqref{eq:H_alpha_beta} is built only from pairwise differences \(\mathbf{x}_i-\mathbf{x}_j\) and \(\mathbf{p}_i-\mathbf{p}_j\), and is therefore invariant under translations and common rotations in both position and momentum space.  Together with the scaling symmetry described in \cref{Scale_inv}, this means that the physically relevant data can be expressed in terms of dimensionless ratios and relative angles in the combined position-momentum configuration.  In this sense the model implements a version of spatial relationalism familiar from relational particle mechanics~\cite{10.1098/rspa.1982.0102,Barbour_2003,Anderson_2006}, and extends the same idea to phase space. 
	
	\subsection{Radial Fixed Point Solutions are Central Configurations of Riesz Potentials}
	At this stage we can see how homothetic evolution can emerge. The key point is that the equations for \((\mathbf{s}_i,\mathbf{u}_i)\) in \cref{eq:shape_coordinates_eom}  close on themselves; the evolution of the scale \(R\) is determined only after the shape trajectory is known.  The shape dynamics is therefore autonomous. Suppose there exist fixed points for the latter: 
	\begin{equation}
		\frac{d\mathbf{s}_i}{d\tau} = 0 \implies \mathbf F^i_{(\alpha,\beta)}=A_{(\alpha,\beta)}\mathbf{s}_i, \quad \frac{d\mathbf{u}_i}{d\tau} = 0 \implies \mathbf G^i_{(\alpha,\beta)} = A_{(\alpha,\beta)}\mathbf{u}_i. \label{eq:fixed_point_equations}
	\end{equation}
	Then the shape coordinates $\{\mathbf{s}_i,\mathbf{u}_i\}$ would become time-independent, as would $A$. Thus the evolution of the remaining scale degree of freedom can be determined in the scaled time coordinate $\tau$:
	\begin{equation}
		R(\tau) \sim e^{\alpha A_{(\alpha,\beta)} \tau}. \label{eq:scale_exp}
	\end{equation}
	Depending on the sign of $\alpha A_{(\alpha,\beta)}$, we have an expanding or contracting solution. Our main interest is in the expanding branch of fixed points, which gives the following scale evolution in real time $t$:
	\begin{equation}
		R(t)  \sim |t|^{\alpha/(\alpha - \beta)}.\label{eq:scale_power_law}
	\end{equation}
	Remarkably, this is the same scaling dynamics as the one obtained from the two-particle trajectory in \cref{eq:two_particle_alphabeta}. In particular, every model on the EdS branch of \cref{eq:EdS_branch} exhibits Einstein-de Sitter scaling \(R\sim t^{2/3}\) at radial fixed points, consistent with the numerical observations of \cref{sec:Warmup}. The questions before us are
	\begin{enumerate}
		\item Do such fixed points exist for general ($\alpha,\beta$) and on the EdS branch \eqref{eq:EdS_branch} in particular?
		\item Are they stable (attractors) on the expanding branch?
		\item Do they lead to asymptotic homogeneous and isotropic distributions for large $N$?
	\end{enumerate}
	
	If such dynamical fixed points exist, the Einstein-de Sitter evolution becomes stable at late times.  We will show that the fixed points exist, use known Riesz-equilibrium results to identify their regular large-\(N\) distributions, and prove a local stability result on the EdS branch under a transparent auxiliary positivity condition.  
	
	The scale-shape reduction in \cref{eq:shape_coordinates,eq:shape_t} is the mechanism by which attractors can coexist with Hamiltonian evolution.  The full phase-space flow still satisfies the Liouville theorem.  However, after quotienting by scale, the shape flow is no longer a Hamiltonian flow with the original symplectic volume, so stable fixed points in the shape space are not forbidden.

	We look for radial fixed-point solutions using the ansatz
	\begin{equation}
		\mathbf{u}_i=\gamma\,\mathbf{s}_i,\qquad \gamma\in\mathbb R ,
		\label{eq:fixed_ansatz}
	\end{equation}
	with the same \(\gamma\) for every particle.  Substituting \cref{eq:fixed_ansatz} into \cref{eq:fixed_point_equations} gives $\mathbf G^i_{(\alpha,\beta)} = \gamma \mathbf F^i_{(\alpha,\beta)}$ and reduces the two fixed-point equations in \cref{eq:fixed_point_equations} to one,

	\begin{equation}
		\mathbf H_{\alpha + \beta}^i = S_{\alpha + \beta}\,\mathbf{s}_i, \qquad i=1,\ldots,N .
		\label{eq:shape_constraint_cc}
	\end{equation}
	We have introduced the following quantities indexed by the single real number \(q=\alpha+\beta\):
	\begin{align}
		\mathbf H_{q}^i &= 
		\sum_{j\neq i}
		\frac{\mathbf{s}_i-\mathbf{s}_j}
		{|\mathbf{s}_i-\mathbf{s}_j|^{2-q}}, \quad  S_{q}=\sum_i \mathbf{s}_i\cdot\mathbf H^{i}_q =\sum_{i<j} |\mathbf{s}_i - \mathbf{s}_j|^{q}. \label{eq:HS_definition}
	\end{align}
	Thus, the fixed-point condition in \cref{eq:fixed_point_equations} reduces to the purely geometric constraint \eqref{eq:shape_constraint_cc}.  If the shape reaches such a fixed point, the remaining dynamics is the scale evolution in \cref{eq:scale_power_law}.  The solutions of \cref{eq:shape_constraint_cc}, \emph{central configurations}, determine the asymptotic particle distribution.  We also have
	\begin{equation}
		\mathbf F_{(\alpha,\beta)}^i = \gamma|\gamma|^{\alpha-2}
		\mathbf H_{\alpha+\beta}^i, \qquad A_{(\alpha,\beta)}=\gamma|\gamma|^{\alpha-2}S_{\alpha + \beta} \label{eq:FA_def}
	\end{equation}
	The equations in \cref{eq:shape_constraint_cc} can be recast as the critical-point equations of a \emph{shape potential} $\Xi_q(\{ \mathbf{s}_i\})$~\cite{Battye_2003}, in shape coordinates $\{\mathbf{s}_j\}$ subject to the constraint in \cref{eq:shape_coordinate_constraints}
	\begin{equation}
		\nabla_{\!\mathrm{shape}} \Xi_q =0, \qquad \Xi_q = \begin{cases}
			\sum_{i<j} |\mathbf{s}_i - \mathbf{s}_j |^q & q \neq 0\\
			\sum_{i<j} \log|\mathbf{s}_i - \mathbf{s}_j | & q = 0
		\end{cases}. \label{eq:shape_potential}
	\end{equation}
	Note that \(\Xi_q=S_q\) for \(q\neq0\).  After blowing up the shape coordinates by \(\mathbf x_i=\sqrt N\,\mathbf s_i\), the fixed-size problem can be mapped exactly to the attractive-repulsive power-law energy classified by Frank and Matzke~\cite{FrankMatzke2025}.  The normalization and sign dictionary are given in \cref{app:LargeNDistributions}.  The shape potential is distinct from the effective two-particle Newtonian potential in \cref{eq:two_particle_alphabeta}, except at \((\alpha,\beta)=(-2,1)\) when both have the Newtonian gravitational exponent. This distinguished model will be the focus of the following sections.  For now, we keep \(\alpha,\beta\) general.
	
	\subsection{Regular Extremal Central Configurations and Their Large-\texorpdfstring{$N$}{N} distributions}
    \label{sec:regular_asymptotic_distributions}
	\begin{figure}[!ht]
		\centering
		\includegraphics[height=0.3\linewidth]{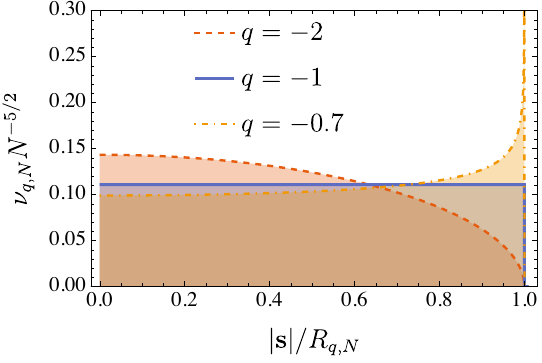}
		\includegraphics[height=0.3\linewidth]{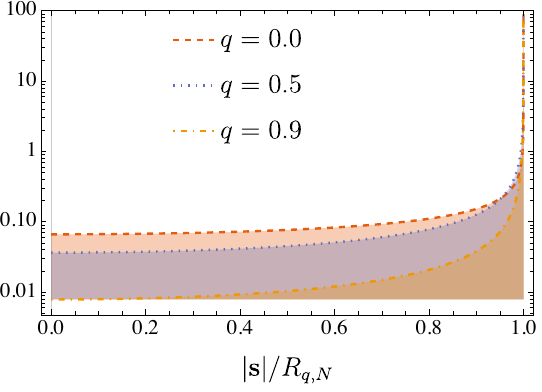}
		\caption{Large-\(N\) regular isotropic shape densities.  The filled-ball configuration has the radial density in \cref{thm:distribution}; the special point \(q=-1\) is a homogeneous solid ball, while \(q=1\) is the condensation point to a uniform spherical shell.}
		\label{fig:nu}
	\end{figure}
	Finite-\(N\) central configurations, corresponding to solutions of \cref{eq:shape_constraint_cc}, can take various forms: collinear configurations, clustered configurations, and regular non-clustered isotropic arrangements~\cite{Battye_2003} .  In this subsection we focus on the regular extrema of \(\Xi_q\) and their large-\(N\) continuum particle-number density
	\begin{equation}
		\nu_N(\mathbf s)
		=\sum_{i=1}^N\delta^{(3)}(\mathbf s-\mathbf s_i),
		\qquad
		\int_{\R^3}\nu_N(\mathbf s)\dd^3s=N.
		\label{eq:shape-empirical-density}
	\end{equation}
	where the constraints on the shape coordinates in \cref{eq:shape_coordinate_constraints} become
	\begin{equation}
		\int_{\R^3}\mathbf s~\nu_N(\mathbf s)\dd^3s=0,\qquad
		\int_{\R^3}|\mathbf s|^2~\nu_N(\mathbf s)\dd^3s=1.
	\end{equation}

	The regular global extrema depend only on the exponent \(q=\alpha+\beta\) in the shape potential \(\Xi_q\).  As a function of the squared pair distances, \(\Xi_q\) is strictly convex for \(q<0\) and \(q>2\), strictly concave for \(0<q<2\), and flat for \(q=2\).  The continuum formulas below follow from the classification of the corresponding attractive-repulsive power-law energy~\cite{FrankMatzke2025}, after imposing our unit-second-moment normalization as explained in \cref{app:LargeNDistributions}.
	
\begin{theorem}[Power-law equilibrium measures]
		\label{thm:distribution}
		Regular central configurations obtain at the global minimum \( (-3<q<0\) and \(2<q<4)\) and maximum   \( (0 \le q<2)\) of $\Xi_q$. The large-$N$ distributions weakly converge to isotropic forms as follows. For \(-3<q<1\), the distribution is supported on a weighted ball $\mathbb B^3$ of radius $R_{q,N}$ with density,
		\begin{equation}
			\nu_{q,N}(\mathbf s)
			=\begin{cases}
				C_{q,N}
				\left[(R_{q,N})^2-|\mathbf s|^2\right]^{-(q+1)/2} \quad & |\mathbf s| < R_{q,N}\\
				0  \quad &   |\mathbf s| \ge  R_{q,N} 
			\end{cases} \label{eq:nu_ball}
		\end{equation}
		where
		\begin{equation}
			C_{q,N}
			=\frac{N}{2\pi R_{q,N}^{\,2-q}
				B\!\left(\frac32,\frac{1-q}{2}\right)}, \qquad R_{q,N}
			=\sqrt{\frac{4-q}{3N}} \label{eq:Sphere_Normalization_Radius}
		\end{equation}
		and \(B(a,b)\) is the Euler beta function,
		\begin{equation}
			B(a,b)=\int_0^1t^{a-1}(1-t)^{b-1}\dd t. \label{eq:Beta_function}
		\end{equation}
		
		At \(q=1\), the distribution condenses onto its boundary.  For
		\(1\le q<2\) and \(2<q<4\), the distribution is supported on a sphere
		$\mathbb S^2$ of radius \(R_N=1/\sqrt N\):
		\begin{equation}
			\nu_{q,N}(\mathbf s) =   \frac{N^2}{4\pi}\,
			\delta\!\left(|\mathbf s|-\frac1{\sqrt N}\right). \label{eq:nu_shell}
		\end{equation}
        At \(q=2\), by contrast,
		\(\Xi_2=N\sum_i|\mathbf s_i|^2-|\sum_i\mathbf s_i|^2=N\)
		for every centered, normalized configuration.  The variational problem is
		therefore degenerate and does not select a large-\(N\) distribution.
	\end{theorem}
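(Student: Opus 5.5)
The plan is to reduce the constrained shape problem to the unconstrained attractive--repulsive power-law energy classified by Frank and Matzke~\cite{FrankMatzke2025}. The reduction uses homogeneity and the quadratic identity behind the $q=2$ statement. Afterwards one only needs to read off their minimizers and fix normalizations.

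First, I record the identity
\[
\sum_{i<j}|\mathbf s_i-\mathbf s_j|^2=N\sum_i|\mathbf s_i|^2-\Big|\sum_i\mathbf s_i\Big|^2 .
\]
This settles $q=2$ at once, since $\Xi_2=N$ on every centered normalized configuration. It also shows that the constraints in \cref{eq:shape_coordinate_constraints} amount to fixing the quadratic pair energy. Next, for $q\neq0$, I consider the unconstrained pair energy
\[
W_q(\{\mathbf x_i\})=\sum_{i<j}\Big(\tfrac12|\mathbf x_i-\mathbf x_j|^2-\tfrac{\sigma_q}{q}|\mathbf x_i-\mathbf x_j|^q\Big),
\]
where $\sigma_q=-1$ for $q<0$ or $q>2$ and $\sigma_q=+1$ for $0<q<2$. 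For $q=0$ the second term is replaced by $-\log|\mathbf x_i-\mathbf x_j|$. I write $\mathbf x=\rho\,\mathbf s$ with centered, normalized $\mathbf s$. Then $W_q$ splits as $N\rho^2/2$ minus a power of $\rho$ times $\pm\Xi_q(\mathbf s)/q$. Optimizing over $\rho$ first yields a strictly monotone function of $\Xi_q(\mathbf s)$. Because of the sign choice, unconstrained global minimizers of $W_q$ are, up to dilation, exactly the global minimizers of $\Xi_q$ for $-3<q<0$ and $2<q<4$, and the global maximizers for $0\le q<2$. The same scaling shows that every critical point of $W_q$ is a dilated solution of \cref{eq:shape_constraint_cc}, with the Lagrange multiplier matching $S_q$. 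The convexity and concavity of $\Xi_q$ in squared distances, noted before the theorem, confirm that min versus max is the right extremum in each range. This is the sign dictionary promised for \cref{app:LargeNDistributions}.

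Second, I pass to $N\to\infty$. The discrete minimizers of $W_q$, after the blow-up $\mathbf x_i=\sqrt N\,\mathbf s_i$ and rescaling by $N$, have empirical measures that converge weakly to the minimizer of the continuum interaction energy. This is the standard $\Gamma$-convergence and particle-approximation argument, which I would quote from~\cite{CarazzatoPratelliTopaloglu2026}. For the identification of the continuum minimizer I would invoke the Frank--Matzke classification. For a repulsive Riesz exponent $-3<q<1$ with quadratic attraction, the minimizer is the radial measure proportional to $(R^2-|\mathbf s|^2)^{-(q+1)/2}$ on a ball; this is the Riesz-type equilibrium, homogeneous at $q=-1$. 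For $1\le q<2$ and for $2<q<4$, the minimizer is the uniform measure on a sphere. Rotational invariance, together with uniqueness where it holds, gives isotropy.

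Third comes normalization, which is routine. Integrating in polar coordinates with $t=r^2/R^2$ gives
\[
\int(R^2-r^2)^{-(q+1)/2}\,4\pi r^2\,dr=2\pi R^{2-q}B\!\left(\tfrac32,\tfrac{1-q}2\right).
\]
This yields $C_{q,N}$ once the total mass is set to $N$. The second moment carries an extra factor
\[
R^2\,\frac{B(\tfrac52,b)}{B(\tfrac32,b)}=\frac{3R^2}{4-q},\qquad b=\tfrac{1-q}{2}.
\]
Setting $N\cdot 3R^2/(4-q)=1$ gives $R_{q,N}^2=(4-q)/(3N)$. For the shell, unit second moment forces radius $1/\sqrt N$, and total mass $N$ on a sphere of area $4\pi/N$ gives the prefactor $N^2/(4\pi)$. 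As $q\to1^-$ the exponent $-(q+1)/2\to-1$ and $B(3/2,(1-q)/2)\to\infty$, so the ball density concentrates at its boundary, consistent with the shell at $q=1$.

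The main obstacle is the first step: matching conventions and signs exactly with~\cite{FrankMatzke2025}. This includes the $q=0$ logarithmic case, the ranges where their minimizer is unique, and the endpoint exponent at which the shell replaces the ball. I would first check the reduction on the explicit $q=-1$ ball and the $q=3$ shell, and only then rely on the general classification.
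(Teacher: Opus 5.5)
Your overall route is the paper's. You use the quadratic identity to pin $\sum_{i<j}s_{ij}^2=N$, a dilation argument to reconcile Frank--Matzke's free radius with the fixed second moment, their Theorems 1--2 for the profiles, and Carazzato--Pratelli--Topaloglu for the particle limit. Your normalizations of $C_{q,N}$, $R_{q,N}$ and the shell density are correct.

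The gap is in the sign dictionary, which you yourself identify as the crux. As written it is wrong in exactly the ranges $q<0$ and $q>2$.
\begin{itemize}
\item For $q<0$ you take $\sigma_q=-1$, so the pair term is $\tfrac12 r^2-\tfrac{1}{|q|}r^{-|q|}$; at $q=-1$ it is $\tfrac12 r^2-1/r$. This is attractive at every scale. Then $W_q(\rho\mathbf s)=\tfrac N2\rho^2-|q|^{-1}\rho^{q}\,\Xi_q(\mathbf s)\to-\infty$ as $\rho\to0$, for every shape.
\item For $q>2$ your pair term is $\tfrac12 r^2+\tfrac1q r^q$. It is minimized by total collapse, with $W_q=0$, again for every shape.
\end{itemize}
In both cases the value optimized over $\rho$ carries no information about $\Xi_q$. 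So the claimed equivalence between global minimizers of $W_q$ and of $\Xi_q$ fails. Moreover, your $W_q$ is not the Frank--Matzke energy whose minimizers you go on to quote.

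The fix is as follows.
\begin{itemize}
\item For all $q<2$, $q\neq0$, take $\sigma_q=+1$, i.e.\ $\tfrac12r^2-\tfrac1q r^q$, which is their $\mathcal F_{2,q}$. Their normalization $r^b/b$ already flips sign across $b=0$, so no extra sign is needed. Then $\min_\rho W_q=\tfrac{q-2}{2q}N\rho_*^2$ with $\rho_*^{2-q}=\Xi_q/N$. This is increasing in $\Xi_q$ for $q<0$ and decreasing for $0<q<2$, giving minima and maxima respectively.
\item For $2<q<4$ you cannot keep the quadratic as the attractive part; the roles swap. The correct energy is $\tfrac1q r^q-\tfrac12 r^2$, their $\mathcal F_{q,2}$. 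Its optimized value $-\tfrac{(q-2)N}{2q}(N/\Xi_q)^{2/(q-2)}$ is increasing in $\Xi_q$.
\end{itemize}
This is exactly the paper's dictionary: $\mathcal F_{2,q}=\tfrac12-\tfrac1{2q}I_q$ for $q<2$ and $\mathcal F_{q,2}=\tfrac1{2q}I_q-\tfrac12$ for $q>2$. It is also why the shell for $2<q<4$ comes from Frank--Matzke's Theorem 1 with $(a,b)=(q,2)$ rather than $(2,q)$. With this correction the rest of your argument goes through as in the paper.
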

	
	\Cref{app:LargeNDistributions} gives the coordinate, normalization, and sign dictionary needed to connect \cref{thm:distribution} to Ref.~\cite{FrankMatzke2025}; it also states the precise scope of the large-\(N\) convergence claim.  \Cref{thm:distribution} explains the numerical observations of \cref{sec:Warmup}.  For \((\alpha,\beta)=(2,-1)\), \(q=1\), the large-\(N\) shape distribution is a shell of radius \(1/\sqrt N\), consistent with the radial histogram in \cref{fig:p2x-1}.   \Cref{fig:nu} shows representative weighted ball distributions for \(-3<q<1\).
	
	For the distinguished model \((\alpha,\beta)=(-2,1)\), we have \(q=-1\), and the distribution is a minimizer of $\Xi_{-1}$ and converges to a homogeneous solid ball of radius \(R_{-1,N}=\sqrt{5/(3N)}\).
	\begin{equation}
		\nu_{-1,N}(\mathbf s)
		=\begin{cases}
			\left(\frac{3}{5} \right)^{\frac{3}{2}} \frac{3}{4 \pi} N^{\frac{5}{2}}  \approx 0.11 N^{\frac{5}{2}}
			\qquad & |\mathbf s| < \sqrt{5/(3N)}\\
			0  \qquad &   |\mathbf s| \ge  \sqrt{5/(3N)} 
		\end{cases} \label{eq:nu_m1}
	\end{equation}
	
	Reintroducing the scale \(R(t)\), the radial fixed-point dynamics on the
	expanding branch produces an expanding ball for \(-3<q<1\), or an
	expanding shell for \(1\le q<2\) and \(2<q<4\), of radius
	\(\sim t^{\alpha/(\alpha-\beta)}N^{-1/2}\).  The case \(q=2\) is
	degenerate and selects no profile.  For the distinguished model
	\((\alpha,\beta)=(-2,1)\) on the EdS branch~\eqref{eq:EdS_branch}, the
	global continuum profile is an expanding homogeneous ball of radius
	\(\sim t^{2/3}N^{-1/2}\).  We still have to establish whether a
	corresponding finite-\(N\) fixed point is dynamically selected.

	\subsection{Conditional Local Stability on the EdS Branch}
	\label{sec:conditional_local_stability}
	We now specialize to the EdS branch~\eqref{eq:EdS_branch} of scale-invariant fracton dynamics.  Since \(q=\alpha+\beta\), the EdS condition \(\alpha=-2\beta\) is equivalently
	\begin{equation}
		\alpha=2q,\qquad \beta=-q .
	\end{equation}
	We prove the following local stability theorem.
	
	\begin{theorem}
		\label{thm:stability}
		On the EdS branch, let \(q\neq0\) and consider a collision-free radial fixed point \(\mathbf u_i=\gamma\mathbf s_i\) whose shape is a nondegenerate regular extremum of \(\Xi_q\).  Work on a fixed-energy surface and quotient the translation and common-rotation zero modes.  If the auxiliary quadratic form \(\mathcal{C}_q\) defined below is strictly positive, then the expanding fixed point is linearly asymptotically stable whenever the constrained Hessian sign is \(\mathcal{Q}_q<0\).  This includes the regular minima for \(-3<q<0\) and the regular maxima for \(0<q<2\). Under the same positivity condition, saddle-type central configurations are linearly unstable.
	\end{theorem}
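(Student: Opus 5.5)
The proof is a linearization of the autonomous shape flow \cref{eq:shape_coordinates_eom} about the radial fixed point \(\mathbf u_i=\gamma\mathbf s_i\), \(\mathbf s_i=\mathbf s_i^*\), with \(\alpha=2q\), \(\beta=-q\). We write perturbations as \(\mathbf s_i=\mathbf s_i^*+\boldsymbol\sigma_i\) and \(\mathbf u_i=\gamma(\mathbf s_i^*+\boldsymbol\sigma_i)+\boldsymbol\eta_i\). The \(\boldsymbol\eta_i\) are the \emph{non-radial} momentum deviations. The variations are restricted to the tangent space of the constraints \cref{eq:shape_coordinate_constraints}, to the fixed-energy surface \(H^0=E\), and to the complement of the infinitesimal common rotations. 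Since \(\Xi_q=S_q\) is homogeneous of degree \(q\) and \(\mathbf H^i_q=\tfrac1q\partial_{\mathbf s_i}\Xi_q\), the linearized drift of \(\boldsymbol\sigma\) contains two pieces. One is the Hessian of \(\Xi_q\) restricted to the unit sphere, \(\mathrm{Hess}\,\Xi_q-S_q\,\mathbb 1\) projected onto the constraint tangent space. The other is a term linear in \(\boldsymbol\eta\), coming from the \(|\mathbf u_{ij}|^{\alpha-2}\mathbf u_{ij}\) factor in \(\mathbf F^i\). The \(\boldsymbol\eta\) equation is obtained by subtracting \(\gamma\) times the \(\mathbf s\) equation from the \(\mathbf u\) equation. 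We use \(\mathbf G^i=\gamma\mathbf F^i\) at the fixed point and the EdS relation \(\alpha=-2\beta\), which makes the prefactors \(\alpha\) and \(-\beta\) commensurate. The aim is a block form
\[
\frac{d}{d\tau}\begin{pmatrix}\boldsymbol\sigma\\ \boldsymbol\eta\end{pmatrix}
=\alpha A\begin{pmatrix} -\,\mathcal Q_q\text{-type block} & \mathcal B\\ 0\text{ or small} & -c\,\mathbb 1-\mathcal C_q\end{pmatrix}\begin{pmatrix}\boldsymbol\sigma\\ \boldsymbol\eta\end{pmatrix},
\]
where \(\alpha A=\alpha\gamma|\gamma|^{\alpha-2}S_q>0\) on the expanding branch by \cref{eq:FA_def}. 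Here \(\mathcal C_q\) is the quadratic form generated by the \(\boldsymbol\eta\)–\(\boldsymbol\eta\) block, namely the second derivative of \(|\mathbf u_{ij}|^\alpha\) contracted with \(|\mathbf s_{ij}|^\beta\), minus the \(A\,\mathbb 1\) shift. I will take this as the definition of the auxiliary form, so that its positivity is exactly the hypothesis.

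Next, the radial/scale direction is eliminated. A perturbation \(\boldsymbol\eta\propto\mathbf s^*\) just changes \(\gamma\). It is removed by the fixed-energy constraint, because \(H^0\) evaluated on the ansatz is \(|\gamma|^{\alpha}S_q\), so at fixed \(E\) the scalar \(\gamma\) is pinned. The remaining zero modes are translations, already removed by centering, and common rotations, which are quotiented. The key structural step is to show that, after these projections, the Jacobian is block triangular up to a term that can be absorbed, or at least that its spectrum is the union of two sets. The first set is the \(\boldsymbol\sigma\)-block eigenvalues, which are proportional to \(\alpha A\) times the eigenvalues of \(\tfrac{1}{q}(\mathrm{Hess}_{\rm shape}\Xi_q)/S_q\) up to sign. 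The second set is the \(\boldsymbol\eta\)-block eigenvalues, which are controlled by \(\mathcal C_q\). To show this I would compute the \(\boldsymbol\sigma\)-dependence of the \(\boldsymbol\eta\) equation. The \(\mathbf G\) and \(\gamma\mathbf F\) contributions both involve \(\partial\mathbf H_q/\partial\mathbf s\), and on the EdS branch their coefficients \(-\beta\) and \(\alpha\gamma\cdot\gamma^{-1}\cdots\) should cancel. I would try to verify this cancellation first, using the fact that both reduce to \(\partial^2\Xi_q\) with the same factor \(|\gamma|^{\alpha}\).

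Given triangularity, stability follows from sign bookkeeping. The \(\boldsymbol\eta\) block is \(-\alpha A\,\mathcal C_q\) (times a positive constant), so it is strictly contracting when \(\mathcal C_q>0\). The \(\boldsymbol\sigma\) block is \(\alpha A\cdot\mathcal Q_q\)-like. Setting \(\mathcal Q_q:=\operatorname{sign}(q)\) times the constrained Hessian sign, all its eigenvalues are negative exactly when \(\mathcal Q_q<0\). That holds at nondegenerate minima when \(q<0\) and at maxima when \(0<q<2\), which matches \cref{thm:distribution}. At a saddle, the constrained Hessian has eigenvalues of both signs, so some \(\boldsymbol\sigma\)-eigenvalue is positive whatever the sign of \(q\). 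Triangularity then gives a positive eigenvalue of the full Jacobian, and hence linear instability. Nondegeneracy ensures that there are no zero eigenvalues beyond the quotiented ones.

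The main obstacle is the off-diagonal coupling. If the \(\boldsymbol\sigma\)-dependence of the \(\boldsymbol\eta\) equation does not cancel exactly, the Jacobian is not triangular. In that case I would fall back on a Lyapunov-function argument. I would take \(V=\tfrac12\langle\boldsymbol\sigma,(-\mathcal Q\text{-block})\boldsymbol\sigma\rangle+\tfrac{\mu}{2}|\boldsymbol\eta|^2\) and use \(\mathcal C_q>0\) together with a large weight \(\mu\) to dominate the cross terms. This is precisely where the positivity of \(\mathcal C_q\) would enter.
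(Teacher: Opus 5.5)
There is a genuine gap: the block-triangular structure your argument rests on does not exist. Write $\boldsymbol\sigma=\mathbf a$, $\boldsymbol\eta=\gamma\mathbf w$ with $\mathbf w:=\mathbf b/\gamma-\mathbf a$, and $c:=\gamma|\gamma|^{\alpha-2}$. In the $\mathbf w$ equation, the $\mathbf u$ equation (divided by $\gamma$) and the $\mathbf s$ equation each contribute a full copy of the constrained-Hessian operator $Q_q=L+(q-2)P-S_qI$ acting on $\mathbf a$, with coefficients $-\beta c$ and $-\alpha c$. These add rather than cancel:
\[
\frac{d\mathbf w}{d\tau}=-c\left[q\,Q_q\mathbf a+\{\alpha(L+(q-2)P)-\beta S_qI\}\mathbf w\right].
\]
On the EdS branch the net coefficient is $-(\alpha+\beta)=-q$, which vanishes only at the excluded value $q=0$. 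This coupling is neither small nor absorbable, since it is exactly the operator whose sign you want to read off. So the Jacobian spectrum is not the union of the diagonal-block spectra, and your saddle-instability argument, which also uses triangularity, fails with it.

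Two further steps fail as well. First, you cannot define the auxiliary form as the $\boldsymbol\eta$--$\boldsymbol\eta$ block. The theorem's hypothesis is about the specific operator $C_q=L-2P+S_qI$, whereas in the time $\rho=qc\tau$ the $\mathbf w$--$\mathbf w$ block is $-(2L+2(q-2)P+S_qI)$. Even a working argument of your type would therefore prove a different statement. Second, your fallback $V=\tfrac12\langle\boldsymbol\sigma,-Q_q\boldsymbol\sigma\rangle+\tfrac\mu2|\boldsymbol\eta|^2$ does not close for large $\mu$. Its derivative contains a cross term proportional to $\mu\langle\boldsymbol\eta,Q_q\boldsymbol\sigma\rangle$, which grows with $\mu$ at the same rate as the $\boldsymbol\eta$-damping, while the $-2|Q_q\boldsymbol\sigma|^2$ term does not grow at all.

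The missing idea is a change of variables that cancels the $Q_q\mathbf a$ coupling. In $\rho$-time the $\mathbf a$ equation contains $+2Q_q\mathbf a$ and the $\mathbf w$ equation contains $-Q_q\mathbf a$, so $\mathbf v:=\mathbf a+2\mathbf w$ obeys
\[
\frac{d\mathbf v}{d\rho}=-2C_q\mathbf w,\qquad
\frac{d\mathbf w}{d\rho}=-Q_q\mathbf v-3S_q\mathbf w .
\]
The $L$ and $P$ parts of the $\mathbf w$-damping cancel, leaving the scalar friction $3S_q$. The operator $C_q$ is what survives of the $\mathbf w$-terms when you add the $\mathbf a$ equation to twice the $\mathbf w$ equation; that is where the theorem's auxiliary operator comes from. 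Eliminating $\mathbf w$ gives
\[
\ddot{\mathbf v}+3S_q\dot{\mathbf v}-2C_qQ_q\mathbf v=0 .
\]
For $C_q>0$, the product $C_qQ_q$ is similar to the symmetric matrix $C_q^{1/2}Q_qC_q^{1/2}$, which is congruent to $Q_q$. Its eigenvalues are therefore real with the signs of $Q_q$. Directions with $Q_q<0$ give $\lambda^2+3S_q\lambda+2\kappa=0$ with $\kappa>0$, hence decay. A direction with $Q_q>0$ gives $\lambda^2+3S_q\lambda-2\kappa=0$, which has a positive root; this is the saddle instability. If you prefer a Lyapunov argument, the right function is $V=-\tfrac12\langle\mathbf v,Q_q\mathbf v\rangle+\langle\mathbf w,C_q\mathbf w\rangle$. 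It satisfies $dV/d\rho=-6S_q\langle\mathbf w,C_q\mathbf w\rangle\le0$, and LaSalle's principle finishes the proof.
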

	\begin{proof}
		We sketch the proof briefly here and give the details in \cref{app:StabilityDetails}. We use \(\mathbf z\) to denote the list of vectors \(\{\mathbf z_i\}\), define \(\langle \mathbf r,\mathbf z\rangle:=\sum_i\mathbf r_i\cdot\mathbf z_i\), and write \(\mathbf z_{ij}:=\mathbf z_i-\mathbf z_j\).  At a central configuration, define the linear operations
		\begin{align}
			&(L\mathbf z)_i
			:=\sum_{j\ne i}s_{ij}^{q-2}\mathbf z_{ij}, \qquad
			(P\mathbf z)_i
			:=\sum_{j\ne i}
			s_{ij}^{q-4}\mathbf s_{ij}
			\left(
			\mathbf s_{ij}\mathbin{\cdot}\mathbf z_{ij}
			\right),\qquad (I \mathbf z)_i := \mathbf z_i
			\label{eq:LP-definition}\\
			&Q_q:=L+(q-2)P-S_qI,\qquad C_q:=L-2P+S_qI \label{eq:CQ-definition}\\
			&\mathcal{Q}_q[\mathbf z] := \langle \mathbf z, Q_q \mathbf z \rangle, \qquad\mathcal{C}_q[\mathbf z] := \langle \mathbf z, C_q \mathbf z \rangle .
		\end{align} 
		Here \(s_{ij}=|\mathbf s_i-\mathbf s_j|\), and \(S_q\) is defined in \cref{eq:HS_definition}. We now consider infinitesimal perturbations of the fixed point ansatz in \cref{eq:fixed_ansatz} on the expanding branch,
		\begin{equation}
			\mathbf s_i\longmapsto\mathbf s_i+\epsilon\mathbf a_i,
			\qquad
			\mathbf u_i\longmapsto\gamma\mathbf s_i+\epsilon\mathbf b_i,
			\label{eq:perturbations}
		\end{equation}
		where \(\mathbf s\) satisfies the central-configuration equation~\eqref{eq:shape_constraint_cc}.  Fixed energy gives the additional linear constraint \(\langle \mathbf s,\mathbf b\rangle=0\).  The linearized equations for \(\mathbf a,\mathbf b\) follow from \cref{eq:shape_coordinates_eom}; it is convenient to work with
		\begin{equation}
			\mathbf w:=\frac{\mathbf b}{\gamma}-\mathbf a, \qquad \mathbf v:=\mathbf a+2\mathbf w
		\end{equation}
		and with the expanding time
		\begin{equation}
			\rho:=q\gamma|\gamma|^{2(q-1)}\tau ,
		\end{equation}
		which increases with \(\tau\) on the expanding branch.  The fixed-energy linearization reduces to
		\begin{equation}
			\frac{\dd\mathbf v}{\dd\rho}
			=-2C_q\mathbf w, \qquad \frac{\dd\mathbf w}{\dd\rho}
			=-Q_q\mathbf v-3S_q\mathbf w. \label{eq:vw_eom}
		\end{equation}
		The two equations in \eqref{eq:vw_eom} combine into
		\begin{equation}
			\frac{\dd^2\mathbf v}{\dd\rho^2}
			+3S_q\frac{\dd\mathbf v}{\dd\rho}
			-2C_qQ_q\mathbf v=0. \label{eq:v_DampedOscillator}
		\end{equation}
		\Cref{eq:v_DampedOscillator} is a damped oscillator equation with positive friction coefficient \(3S_q\).  Moreover \(\mathcal Q_q\) is the constrained Hessian of the shape potential:
		\begin{equation}
			\delta^2\Xi_q
			=q\,\mathcal Q_q[\mathbf a].
		\end{equation}
		For  \(-3<q<0\), the extremum is a minimum of \(\Xi_q\), and since \(q<0\) this gives \(\mathcal Q_q<0\).  For \(0<q<2\), the extremum is a maximum of \(\Xi_q\), and again \(\mathcal Q_q<0\).  If \(C_q\) is positive, \(C_qQ_q\) has the same sign as \(Q_q\).  Thus on each physical mode we may write \(C_qQ_q\mathbf v=-\kappa\mathbf v\), with \(\kappa>0\).  For \(\mathbf v\sim e^{\lambda\rho}\), \cref{eq:v_DampedOscillator} gives
		\begin{equation}
			\lambda^2+3S_q\lambda+2\kappa=0 \implies \lambda_\pm
			=\frac{-3S_q\pm\sqrt{9S_q^2-8\kappa}}{2}.
		\end{equation}
		Both roots have negative real part.  Conversely, if \(Q_q\) has a positive direction, then the corresponding characteristic equation has a negative constant term and one positive root.  This gives the advertised stability and instability statements.
	\end{proof}
	
	The auxiliary positivity condition in \cref{thm:stability} is not vacuous.  The simplest regular central configuration already satisfies it analytically:
	\begin{proposition}[Equilateral three-body stability]
		\label{prop:equilateral_stability}
		For the centered unit equilateral \(N=3\) configuration on the EdS branch, \(C_q\) is positive on the physical perturbation space.  More explicitly, after removing translations, the tangent modes split into common rotations and two genuine shape distortions; on the shape modes
		\begin{equation}
			C_q=3I,\qquad Q_q=\frac32(q-2)I .
		\end{equation}
		Hence for every nondegenerate EdS value \(q<2\), \(q\ne0\), the expanding radial equilateral fixed point is linearly asymptotically stable after quotienting the common rotations.  For the distinguished model \(q=-1\), the shape-mode eigenvalues in the \(\rho\)-time of \cref{eq:vw_eom} are
		\begin{equation}
			\lambda_\pm=-\frac92\pm\frac{3\sqrt3}{2}i .
		\end{equation}
	\end{proposition}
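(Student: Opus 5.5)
The plan is to evaluate the operators $L$, $P$, $Q_q$, $C_q$ of \cref{thm:stability} explicitly on the equilateral triangle, using its symmetry to diagonalize them. Then we check the sign hypotheses of \cref{thm:stability} and solve the characteristic equation.

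\textbf{Geometry.} The centered configuration normalized by $\sum_i|\mathbf s_i|^2=1$ has $|\mathbf s_i|^2=1/3$. Its circumradius is therefore $1/\sqrt3$ and its side length is $1$, so $s_{ij}=1$ for every pair. Consequently every power $s_{ij}^{q-2}$, $s_{ij}^{q-4}$ equals $1$, and $S_q=\sum_{i<j}s_{ij}^q=3$. The $q$-dependence of $L$ and $P$ disappears. One checks that $\mathbf H^i_q=\sum_{j\neq i}\mathbf s_{ij}=3\mathbf s_i=S_q\mathbf s_i$, so the central-configuration equation~\eqref{eq:shape_constraint_cc} holds for all $q$.

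\textbf{Operators.} $L$ is the graph Laplacian of $K_3$, so $(L\mathbf z)_i=3\mathbf z_i-\sum_j\mathbf z_j=3\mathbf z_i$ on the centered (translation-quotiented) perturbations. Hence $L=3I$ and
\begin{equation*}
C_q=6I-2P,\qquad Q_q=(q-2)P .
\end{equation*}
Everything reduces to the spectrum of $P$. As a quadratic form, $P$ is $\sum_{i<j}(\mathbf e_{ij}\cdot\mathbf z_{ij})^2$ with unit edge vectors $\mathbf e_{ij}$; that is, $P$ is the rigidity Gram matrix of the triangle. On the 4-dimensional centered planar space we obtain:
\begin{itemize}
\item The common rotation has eigenvalue $0$, since it is an infinitesimal isometry.
\item The dilation $\mathbf z=\mathbf s$ has eigenvalue $3$, because $(P\mathbf s)_i=\sum_j\mathbf s_{ij}=3\mathbf s_i$. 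This mode is removed by the normalization constraint $\langle\mathbf s,\mathbf a\rangle=0$ together with fixed energy.
\item The trace of $P$ is $\sum_{\text{edges}}2|\mathbf e_{ij}|^2=6$, so the remaining two-dimensional complement carries total eigenvalue $3$.
\end{itemize}
$P$ commutes with the $D_3$ symmetry of the triangle. This complement is the two-dimensional irreducible representation $E$: dilation and rotation span the one-dimensional $A_1$ and $A_2$ pieces. By Schur's lemma, $P=\tfrac32 I$ on the shape modes. Out-of-plane perturbations, if $d=3$ is kept, should be handled similarly; they are either rotations of the plane or lie in the kernel of $P$, where $C_q=6I$ and $Q_q=0$ become zero modes that are quotiented. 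I expect this bookkeeping of which modes count as physical to be the only delicate point. With it,
\begin{equation*}
C_q=3I,\qquad Q_q=\tfrac32(q-2)I
\end{equation*}
on the shape modes, as claimed.

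\textbf{Stability and eigenvalues.} Since $C_q=3I>0$, and $Q_q<0$ precisely when $q<2$, \cref{thm:stability} gives linear asymptotic stability for every nondegenerate $q<2$, $q\neq0$. Here $C_qQ_q\mathbf v=-\kappa\mathbf v$ with $\kappa=\tfrac92(2-q)$. The characteristic equation from~\eqref{eq:v_DampedOscillator} is
\begin{equation*}
\lambda^2+9\lambda+9(2-q)=0 .
\end{equation*}
At $q=-1$ this becomes $\lambda^2+9\lambda+27=0$, whose roots are $\lambda_\pm=-\tfrac92\pm\tfrac{3\sqrt3}{2}i$.
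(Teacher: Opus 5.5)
Your proof is correct and has the same skeleton as the paper's. Unit side length gives $S_q=3$ and unit pair weights, so $L=3I$ on centered lists and hence $C_q=6I-2P$, $Q_q=(q-2)P$; everything then reduces to the spectrum of $P$.

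The real difference is how you get $P=\tfrac32 I$ on the shape modes. The paper writes these modes as $\mathbf z_i=M\mathbf s_i$, with the two symmetric traceless matrices $M_1,M_2$, and substitutes them directly into the definition of $P$. You instead combine three facts:
\begin{itemize}
\item the eigenvalues $0$ on the rotation and $3$ on the dilation;
\item the trace of $P$, which is $6$ and is unchanged by restricting to centered lists, because translations lie in the kernel of the symmetric operator $P$;
\item Schur's lemma on the irreducible two-dimensional $E$ component of $D_3$.
\end{itemize}
Your route is coordinate-free and explains why the two shape modes are degenerate. The paper's is more elementary and can be checked line by line. Your decomposition also shows why the dilation must be excluded: there $C_q=3-6+3=0$. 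Finally, your explicit characteristic equation $\lambda^2+9\lambda+9(2-q)=0$ gives stability for all $q<2$ directly, without appealing to the general theorem.

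One piece of bookkeeping needs correcting. Consider each rotation direction: the in-plane one and, in three dimensions, the two out-of-plane directions, which are exactly the rotations about in-plane axes and lie in $\ker P$. On each of these $Q_q=0$ and $C_q=6$, so the $(\mathbf v,\mathbf w)$ system reads $d\mathbf v/d\rho=-12\mathbf w$, $d\mathbf w/d\rho=-9\mathbf w$, with roots $\lambda=0$ and $\lambda=-9$. Only the $\lambda=0$ root, with $\mathbf w=0$, is the common-rotation zero mode that gets quotiented. The $\lambda=-9$ companion rotates momenta relative to positions and carries reduced angular momentum $\delta\mathbf j=\gamma\sum_i\mathbf s_i\times\mathbf w_i$. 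It is a physical mode and is not removed by the symmetry quotient. So these are not all ``zero modes that are quotiented''. You need one extra line showing that the companion decays, which the paper records. It is simply the decay $\mathbf j=\mathbf J/R^{3/2}\propto e^{-9\rho}$, since $R\propto e^{6\rho}$ at the fixed point.
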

	\begin{proof}
		Choose the centered equilateral configuration with side length one, so that \(\sum_i|\mathbf s_i|^2=1\) and \(S_q=3\).  On centered perturbations one has \(L=3I\).  The angular operator \(P\) vanishes on common rotations and equals \((3/2)I\) on the two shape-distortion modes.  Therefore \(C_q=L-2P+S_qI\) gives \(C_q=6I\) on rotations and \(C_q=3I\) on shape distortions, while \(Q_q=L+(q-2)P-S_qI\) gives \(Q_q=0\) on rotations and \(Q_q=\frac32(q-2)I\) on shape distortions.  The zero rotational modes are quotiented; the nonzero rotational companions have negative decay rate, and the shape modes have the stable sign for \(q<2\).  Substituting \(q=-1\), \(S_q=3\), \(C_q=3I\), and \(Q_q=-9I/2\) into \cref{eq:v_DampedOscillator} gives the displayed roots.  The same calculation is written out in coordinates in \cref{app:TriangleCheck}.  
	\end{proof}
	
	For larger $N$, we verify \(C_q>0\) numerically for the distinguished model~\footnote{We directly verify the local stability of central configurations forming minima of $\Xi_{-1}$ as fixed points; the positivity of $C_q$ naturally follows.}, but we do not yet have a uniform finite-\(N\) proof.  We therefore formulate the following conjecture.
	\begin{conjecture}
		\label{conj:Cq}
		On the EdS branch, \(\alpha=2q,\beta=-q\), the regular extrema with \(-3<q<2\) and \(q\neq0\) have \(C_q>0\) after quotienting the symmetry zero modes.  Hence their expanding radial fixed points are locally asymptotically stable.
	\end{conjecture}
	\Cref{thm:distribution,thm:stability,prop:equilateral_stability,conj:Cq} explain the shell-forming \((\alpha,\beta)=(2,-1)\) model of \cref{sec:Warmup} and motivate the distinguished homogeneous-ball model \((\alpha,\beta)=(-2,1)\), which we study for the rest of the paper.

	Some comments are in order here.  First, \cref{thm:stability} is a local result: it shows stability of radial fixed points whose shapes are regular central configurations for part of the EdS family, but it does not rule out other stable fixed points.  This caveat matters for the cluster-forming large-\(N\) dynamics discussed in \cref{sec:structure}.  Second, away from \(\alpha=-2\beta\) we do not yet have a comparable general stability criterion. Third, when \(\alpha=-2\beta\), for the range $2<q<4$ where regular distributions are possible, at the maximum of $\Xi_q$ as shown in \cref{thm:distribution,eq:nu_shell}, the fixed points are stable if $C_q<0$. However, even the central configuration of the equilateral triangle has $C_q>0$ and is therefore unstable. Similarly, for $-3<q<2$, saddle-type configurations are unstable. The corresponding \(N=3\) collinear instability is worked out in \cref{app:CollinearCheck}.

	\subsection{Flatness Without Fine-tuning: Dual Critical Newtonian Dynamics}
	\label{sec:full_newtonian_dual}
	We now return to a question anticipated by the two-particle discussion
	in \cref{sec:Warmup}: \emph{when can a radial fracton fixed point be
	lifted to a full pairwise Newtonian dynamics?}  To answer it, it is
	useful to begin with an independent fact about ordinary Newtonian
	mechanics.

	Consider \(N\) equal unit masses governed by the pairwise
	Hamiltonian
	\begin{equation}
		H_N^{(\kappa)}
		=\frac12\sum_i|\boldsymbol\pi_i|^2
		-G_\kappa\sum_{i<j}
		|\mathbf x_i-\mathbf x_j|^\kappa ,
		\qquad G_\kappa>0 .
		\label{eq:kappa_newtonian_hamiltonian}
	\end{equation}
	For \(\kappa\ne0\), Newtonian \(N\)-body systems with homogeneous potentials have
	homothetic solutions generated by central configurations
	\cite{DiacuPerezChavelaSantoprete2006,Maderna2013HomogeneousNBody};
	the inverse-distance case is reviewed in
	\cite{Battye_2003,Ellis_2013,Ellis_2015}.  In our normalization, choose
	initial data of the form
	\begin{equation}
		\mathbf x_i(t_0)=R_0\mathbf s_i,\qquad
		\boldsymbol\pi_i(t_0)=\dot R_0\mathbf s_i,\qquad
		\sum_i|\mathbf s_i|^2=1 ,
		\label{eq:kappa_homothetic_initial_data}
	\end{equation}
	and require the shape to be \(\kappa\)-central,
	\begin{equation}
		\mathbf H_\kappa^i
		=\sum_{j\ne i}
		\frac{\mathbf s_i-\mathbf s_j}
		{|\mathbf s_i-\mathbf s_j|^{2-\kappa}}
		=S_\kappa\mathbf s_i .
		\label{eq:dual_kappa_central}
	\end{equation}
	Newton's equations then preserve the shape:
	\begin{equation}
		\mathbf x_i(t)=R(t)\mathbf s_i,\qquad
		\ddot R
		=\kappa G_\kappa S_\kappa R^{\kappa-1}.
		\label{eq:kappa_newtonian_scale_equation}
	\end{equation}
	{The Newtonian energy for such homothetic solutions then reduces to}
	\begin{equation}
		E_N
		=\frac12\dot R^2-G_\kappa S_\kappa R^\kappa .
		\label{eq:kappa_newtonian_energy}
	\end{equation}
	On the critical surface \(E_N=0\), the radial velocity is fixed by
	\begin{equation}
		\dot R^2=2G_\kappa S_\kappa R^\kappa ,
	\end{equation}
	and, for \(\kappa\ne2\), integration gives the scaling solution
	\begin{equation}
		R(t)
		=\left[
		\frac{|2-\kappa|}{2}\sqrt{2G_\kappa S_\kappa}\,
		|t-t_0|
		\right]^{\frac{2}{2-\kappa}} .
		\label{eq:kappa_zero_energy_scaling}
	\end{equation}
	For \(\kappa=2\) the corresponding solution is exponential.  The case
	\(\kappa=0\) is degenerate: the pair potential is constant, the force
	vanishes, and homothetic free motion does not require a central 
	configuration.  Apart from this free case, a
	power law \(R\sim|t-t_0|^{2/(2-\kappa)}\) is not by itself enough to
	define a Newtonian \(N\)-body trajectory: the initial shape must also
	satisfy \cref{eq:dual_kappa_central}, and the initial radial velocity
	must place the solution on \cref{eq:kappa_newtonian_energy} with
	\(E_N=0\).

	We can now compare this standard Newtonian construction with our radial
	fracton fixed point.  From \cref{eq:scale_tau,eq:shape_t}, this obeys
	\begin{equation}
		\dot R=\alpha A_*R^{\beta/\alpha},
		\qquad
		R(t)\sim |t-t_0|^{\alpha/(\alpha-\beta)} .
		\label{eq:fracton_fixed_point_scale_recalled}
	\end{equation}
	This is precisely the scaling in
	\cref{eq:kappa_zero_energy_scaling} after identifying
	\begin{equation}
		\kappa=\frac{2\beta}{\alpha},
		\qquad
		G_\kappa S_\kappa=\frac{\alpha^2A_*^2}{2}.
		\label{eq:fracton_newtonian_identification}
	\end{equation}
	The scale equation therefore has a critical Newtonian interpretation.
	The full position-space trajectory, however, carries an additional
	condition.  The fracton fixed-point equations do not impose
	\(\kappa\)-centrality; by
	\cref{eq:shape_constraint_cc,eq:HS_definition}, they impose
	\begin{equation}
		\mathbf H_q^i=S_q\mathbf s_i,
		\qquad q=\alpha+\beta .
		\label{eq:dual_q_central}
	\end{equation}
	Consequently, a generic radial fracton fixed point has the same
	\emph{scale law} as a zero-energy Newtonian solution but cannot be
	identified with a solution of the full Hamiltonian
	\eqref{eq:kappa_newtonian_hamiltonian}.  Apart from the degenerate free case \(\kappa=0\), such an identification is automatic only when the two central-configuration problems are the same,
	\begin{equation}
		q=\kappa .
		\label{eq:q_equals_kappa}
	\end{equation}
	A particular highly symmetric shape may happen to satisfy both
	conditions even when their exponents differ; \cref{eq:q_equals_kappa}
	is the condition for equality of the fixed-point problems themselves.

	Finally impose the Einstein-de Sitter condition
	\(\alpha=-2\beta\).  Equations
	\eqref{eq:fracton_newtonian_identification} and
	\eqref{eq:dual_q_central} then give
	\begin{equation}
		\kappa=-1,\qquad q=-\beta .
	\end{equation}
	The matching condition \eqref{eq:q_equals_kappa} therefore requires
	\(\beta=1\) and \(\alpha=-2\).  Among the models with
	\(\alpha=-2\beta\), the distinguished model is the unique one whose
	radial fixed-point equation automatically produces the full
	zero-energy Newtonian trajectory, rather than only its scale evolution.

	At \((\alpha,\beta)=(-2,1)\),
	\cref{eq:kappa_newtonian_hamiltonian} becomes
	\begin{equation}
		H^{\mathrm{dual}}
		=\frac12\sum_i|\boldsymbol\pi_i|^2
		-G_{\mathrm{eff}}\sum_{i<j}
		\frac{1}{|\mathbf x_i-\mathbf x_j|},
		\qquad
		\boldsymbol\pi_i=\dot{\mathbf x}_i ,
		\label{eq:full_dual_hamiltonian}
	\end{equation}
	where \cref{eq:fracton_newtonian_identification,eq:FA_def} give an expression for the dual Newtonian gravitational constant,
	\begin{equation}
		G_{\mathrm{eff}}
		=\frac{2A_*^2}{S_{-1}}
		=\frac{2S_{-1}}{|\gamma|^6},
		\qquad
		A_*=\gamma|\gamma|^{-4}S_{-1}.
		\label{eq:full_dual_Geff}
	\end{equation}
	Using \(\dot R=-2A_*R^{-1/2}\), the Newtonian energy along the
	homothetic {trajectory} is
	\begin{equation}
		E_N
		=\frac12\dot R^2
		-\frac{G_{\mathrm{eff}}S_{-1}}{R}
		=0 .
		\label{eq:full_dual_zero_energy}
	\end{equation}
	Both the critical energy and \(G_{\mathrm{eff}}\) are therefore fixed by
	the fracton solution rather than chosen as independent Newtonian data.
	The equivalence applies to these homothetic fixed-point trajectories; it
	is not a canonical equivalence between the complete fracton and
	Newtonian Hamiltonian systems.

    \cref{thm:distribution,thm:stability,conj:Cq} and the dual Newtonian description above taken together tell us that for a range of initial conditions within the basin of attraction, the dynamics converges asymptotically to a critical Newtonian one giving us the Newtonian version of flatness discussed in \cref{sec:newtonian_fractons} without fine-tuning initial data. 

	\section{Large-Scale Analysis of the Distinguished Model}
    \label{sec:distinguished_model}
	\subsection{Why the Distinguished Model is Special}
	The EdS branch of \cref{eq:EdS_branch} already has two key Newtonian
	features: by \cref{eq:two_particle_alphabeta}, the two-particle effective
	potential is \(U_{\rm eff}\propto-1/|x_1-x_2|\), and radial fixed points
	have the scale law \(R(t)\sim t^{2/3}\).  These facts alone do not select a
	unique many-body model.  The remaining selection comes from the shape
	distribution and the fixed-point equation.  Along the EdS branch,
	\(q=\alpha+\beta=-\beta\).  \Cref{thm:distribution} shows that for the \(q=1\)
	model with which we started, the regular extremum approaches a shell, whereas for \(q=-1\) it {converges to} a homogeneous ball; the same value \(q=-1\) is where the geometric fixed-point
	equation acquires the Newtonian inverse-square form.  Thus \(q=-1\) on the
	EdS branch uniquely picks out
	
	\begin{equation}
		(\alpha,\beta)=(-2,1).
	\end{equation}
	For this choice, the shape fixed-point condition \eqref{eq:shape_constraint_cc} becomes
	\begin{equation}
		\mathbf H_{-1}^i
		= S_{-1}\,\mathbf{s}_i,\qquad \mathbf H_{-1}^i=\sum_{j\neq i}\frac{\mathbf{s}_i-\mathbf{s}_j}{|\mathbf{s}_i-\mathbf{s}_j|^3},\quad
		S_{-1}:=\sum_{j<k}\frac{1}{|\mathbf{s}_j-\mathbf{s}_k|},\quad i=1,\ldots,N .
		\label{central_config}
	\end{equation}
	\Cref{central_config} is the equal-mass Newtonian central-configuration
	equation, up to the conventional sign choice for gravitational acceleration
\cite{Battye_2003}.  As shown in \cref{sec:full_newtonian_dual}, it is
	also the only model on the EdS branch whose radial fixed-point dynamics
	can be lifted automatically to a full pairwise Newtonian dual at exactly
	zero Newtonian energy.  The distinguished model therefore aligns all the
	Newtonian ingredients reviewed in \cref{sec:newtonian_fractons}: the
	two-particle interaction has the gravitational \(1/r\) form, the homothetic
	scale law is Einstein-de Sitter, the many-body fixed-point equation is the
	Newtonian central-configuration equation, and the regular large-\(N\)
	minimum  approximates a homogeneous ball of radius
	\(R_{-1,N}=\sqrt{5/(3N)}\).  In discrete Newtonian cosmology, such
	homothetic gravitating solutions satisfy the dust Friedmann equations, with
	the Einstein-de Sitter case corresponding to the critical Newtonian energy
	\(E_N=0\) \cite{Ellis_2013,Ellis_2015}.  The remaining finite-\(N\)
	question, anticipated by \cref{thm:stability,conj:Cq}, is whether this
	homogeneous-ball fixed point is dynamically selected.	
	
	\label{sec:EdS_model}
	
	\subsection{Lyapunov Stability Analysis}
    \begin{figure}[!ht]
		\centering
		\includegraphics[width=\textwidth]{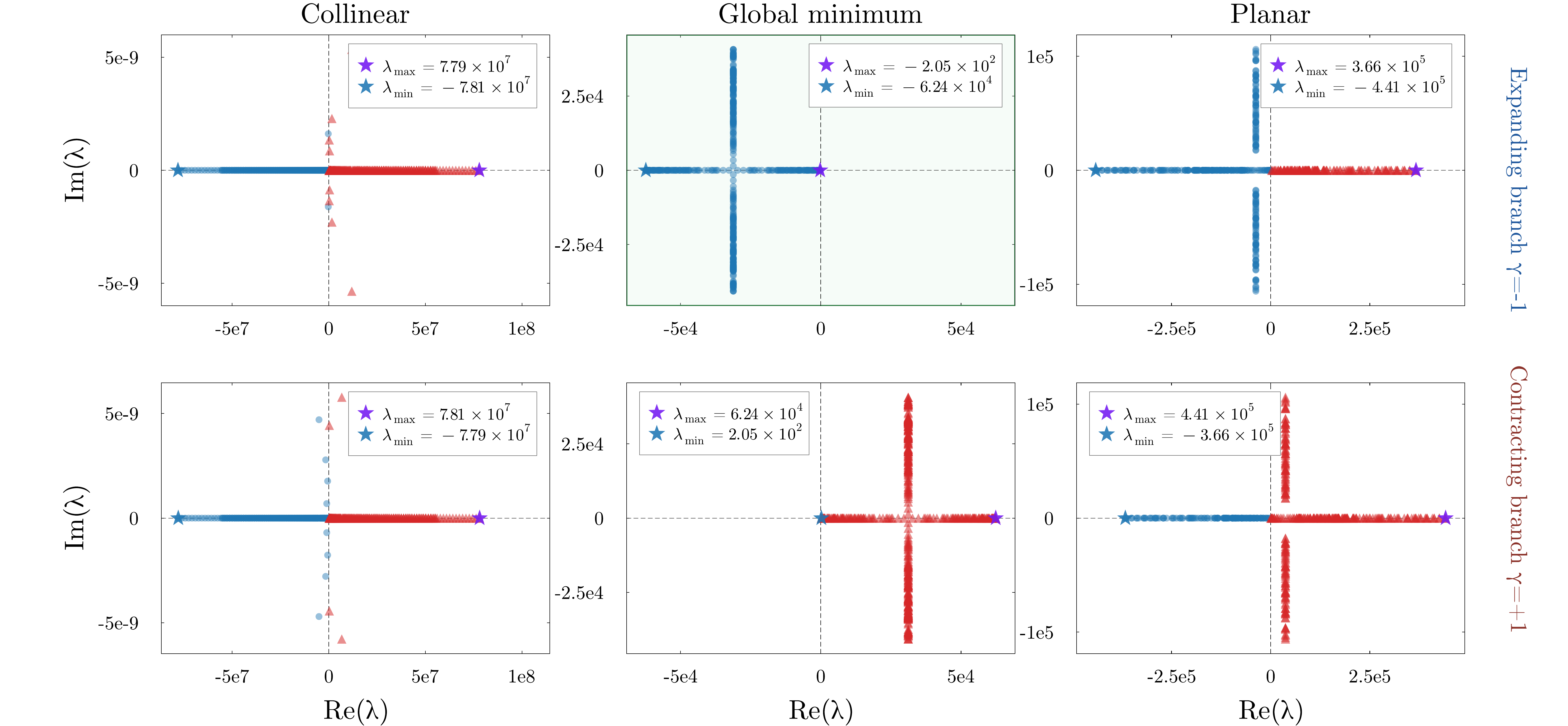}
		\caption{Eigenvalue spectra of the reduced Jacobian for \(N=75\) fixed points corresponding to three representative central configurations: a collinear saddle, the global minimum, and a planar saddle. The top row corresponds to the expanding branch, while the bottom row shows the corresponding contracting branch. The minimum and maximum real parts of the eigenvalue spectrum are indicated in each panel. Stable and unstable eigenvalues are denoted by circles and triangles, respectively. \label{fig:Eig_spec}}
	\end{figure}
	We now test the finite-\(N\) consequences of \cref{thm:stability}.  For \(q=-1\), the regular ball configurations are minima of \(\Xi_{-1}\), so the theorem predicts stability of the expanding minimum branch if \(C_{-1}\) is positive on the physical perturbations, and instability for saddle-type central configurations.  For \((\alpha,\beta)=(-2,1)\), \cref{eq:shape_coordinates_eom} defines an autonomous \(6N\)-dimensional system before constraints and symmetry directions are removed.  For a generic non-collinear fixed point, the translation constraints in \(\mathbf{s}\) and \(\mathbf{u}\), the unit-shape constraint, a \(\gamma\)-family zero mode, and the common-rotation zero modes remove eleven directions, leaving a reduced stability problem of dimension \(6N-11\).  The same projection procedure is used for the collinear example, with any linearly dependent generators discarded during orthonormalization.  The equilateral \(N=3\) case was handled analytically in \cref{prop:equilateral_stability}; for larger \(N\) we compute the reduced Jacobian numerically, as summarized in \cref{app:NumericalImplementation}.
	
	The numerical examples below use representative \(N=75\) central configurations: the estimated global minimum configuration,\footnote{The energy landscape of the gravitational shape potential \(\Xi_{-1}\) is highly rugged, with numerical optimization routines typically converging to one of many local minima. Consequently, certifying that a given configuration is the absolute global minimum, rather than merely a local minimum, for \(N=75\) is a nontrivial computational task. However, the stability results presented here have been verified across multiple distinct local minima obtained from independent optimization runs. This is expected, since \cref{thm:stability} requires only that the constrained Hessian satisfy \(Q_q<0\), a condition that is fulfilled by every local minimum.} together with two saddle configurations: a collinear configuration and a planar configuration. The global minimum estimate is the maximally spread configuration; for \(N=75\) it exhibits a multilayer shell structure, while in the large-\(N\) limit it converges to a homogeneous ball (\cref{thm:distribution}). By contrast, the two saddle configurations are representative of highly inhomogeneous particle distributions. These configurations are the finite-\(N\) analogues of the cases distinguished in \cref{thm:distribution,thm:stability}.   The parameter \(\gamma\) in \eqref{eq:fixed_ansatz} distinguishes expanding and contracting radial fixed points.  Indeed, for \((\alpha,\beta)=(-2,1)\),
	\begin{equation}
		\frac{dR}{dt}=-2A_{(-2,1)} R^{-1/2},
		\label{eq:scale_eqn}
	\end{equation}
	while \eqref{eq:FA_def} gives
	\begin{equation}
		A_{(-2,1)}=\gamma|\gamma|^{-4}S_{-1},\qquad
		S_{-1}=\sum_i\mathbf{s}_i\cdot\mathbf H_i
		=\sum_{i<j}\frac{1}{|\mathbf{s}_i-\mathbf{s}_j|}>0 .
	\end{equation}
	Hence \(\gamma<0\) gives the expanding branch and \(\gamma>0\) gives the contracting branch.
	
	\Cref{fig:Eig_spec} shows the eigenvalue spectrum of the reduced Jacobian for these three classes of central configurations and for both signs of \(\gamma\). The expanding global minimum has all eigenvalues confined to the left half-plane, with no eigenvalue possessing a positive real part, and is therefore asymptotically stable in the reduced dynamics. Its contracting counterpart has the reflected spectrum and is correspondingly unstable. In contrast, saddle-type central configurations possess unstable directions even on the expanding branch. These numerical results are in complete agreement with the stability predictions of \cref{thm:stability,conj:Cq}, namely that regular extrema of \(\Xi_{-1}\) are stable on the expanding branch, whereas saddle configurations are unstable. Thus, we see that the stability analysis singles out the expanding minimum branch, indicating a dynamical preference for homogeneous fixed points over inhomogeneous ones in the distinguished \((\alpha,\beta)=(-2,1)\) model. We revisit this dynamical selection of homogeneity in the following section, where we investigate the large-\(N\) evolution.
    
    Since linear stability establishes only a local basin of attraction, we first test numerically whether these locally stable fixed points indeed emerge dynamically from randomly generated initial conditions.
	
	\subsection{Dynamical Convergence Tests for Small \texorpdfstring{$N$}{N}}
	We now explore the basin of attraction for the dynamics of a small number of particles ($N<100$). We present our results for $N=25$.

	\subsubsection*{Convergence to Newtonian Central Configurations}
    	\begin{figure}[htbp]
		\centering
		\includegraphics[width=\textwidth]{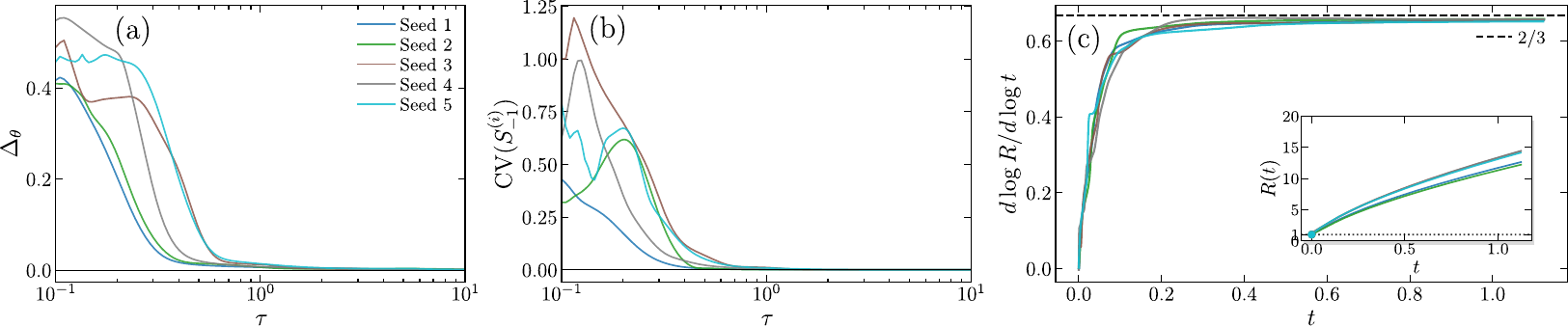}
		\caption{Evolution of $N=25$ particles with shape phase-space $\{\mathbf{s}_i,\mathbf{u}_i\}$ initializations selected from a Gaussian random number generator with mean \(0\) and variance \(1\) but then re-centered and re-scaled to satisfy constraints \cref{eq:shape_coordinate_constraints}. Panels (a) and (b) show angular misalignment and the variation of $S_{-1}^{(i)}$ that diagnose convergence to the $q=-1$ Newtonian central configurations. Panel (c) shows the corresponding scale evolution with \(R(0)\) set to \(1\), the logarithmic slope \(d\log R/d\log t\) approaches \(2/3\) at late times. The inset shows the scale \(R(t)\) settling into expansion for each seed, hence showing that the expanding branch is dynamically selected.}
		\label{fig:min_converge}
	\end{figure}
 We start by verifying if the shape evolution dynamically selects Newtonian central configurations. We use two dimensionless diagnostics to check for the departure from \eqref{central_config}.  The angular misalignment is
	\begin{equation}
		\Delta_\theta = \sqrt{\frac{1}{N}\sum_i \sin^2\theta_i},\qquad
		\sin\theta_i=\frac{|\mathbf H_{-1}^i\times\mathbf{s}_i|}{|\mathbf H_{-1}^i||\mathbf{s}_i|},
	\end{equation}
	and the non-uniformity of the particlewise estimates of \(S_{-1}\) is measured by
	\begin{equation}
		\mathrm{CV}(S_{-1}^{(i)})=\frac{\sigma_{S_{-1}}}{|\overline S_{-1}|},\qquad
		S_{-1}^{(i)}=\frac{\mathbf H_{-1}^i\cdot\mathbf{s}_i}{|\mathbf{s}_i|^2},\qquad
		\overline S_{-1}=\frac{1}{N}\sum_i S_{-1}^{(i)} .
	\end{equation}
	\cref{central_config} is satisfied precisely when both diagnostics vanish.  \Cref{fig:min_converge}(a) and (b) show that random \(N=25\) initial conditions approach this condition within numerical accuracy. Hence, it seems that, at least for the moderate particle count in \cref{fig:min_converge}, the basin of attraction is broad enough to include phase-space initial conditions sampled from a normal distribution.
    Next, we verify whether the system dynamically chooses the expanding branch as is anticipated from \cref{thm:stability} and numerical linear stability results.  
	
	\subsubsection*{Convergence of the Scale Evolution}
	We now reconstruct back the full Hamiltonian flow from reduced shape dynamics and study the evolution of scale \(R(t)\).  \Cref{fig:min_converge}(c) shows that the random seeds studied here evolve onto expanding solutions, as expected from the stability of the \(\gamma<0\) near-minimum fixed point.  It also shows the logarithmic slope \(d\log R/d\log t\) converges to the Einstein-de Sitter value \(2/3\) at late times~\footnote{This convergence plot also has an effect of an offset that is due to not starting at \(R=0\) at \(t=0\). The effect of this offset fades away at late times. The point of this plot is to show convergence indeed happens to EdS scale evolution for various random seeds, and not the exact nature of convergence which is muddled due to offset.}.

    \subsubsection*{Convergence to Global Minimum of the Shape Potential}
    	\begin{figure}[!htbp]
		\centering
		\includegraphics[width=0.5\textwidth]{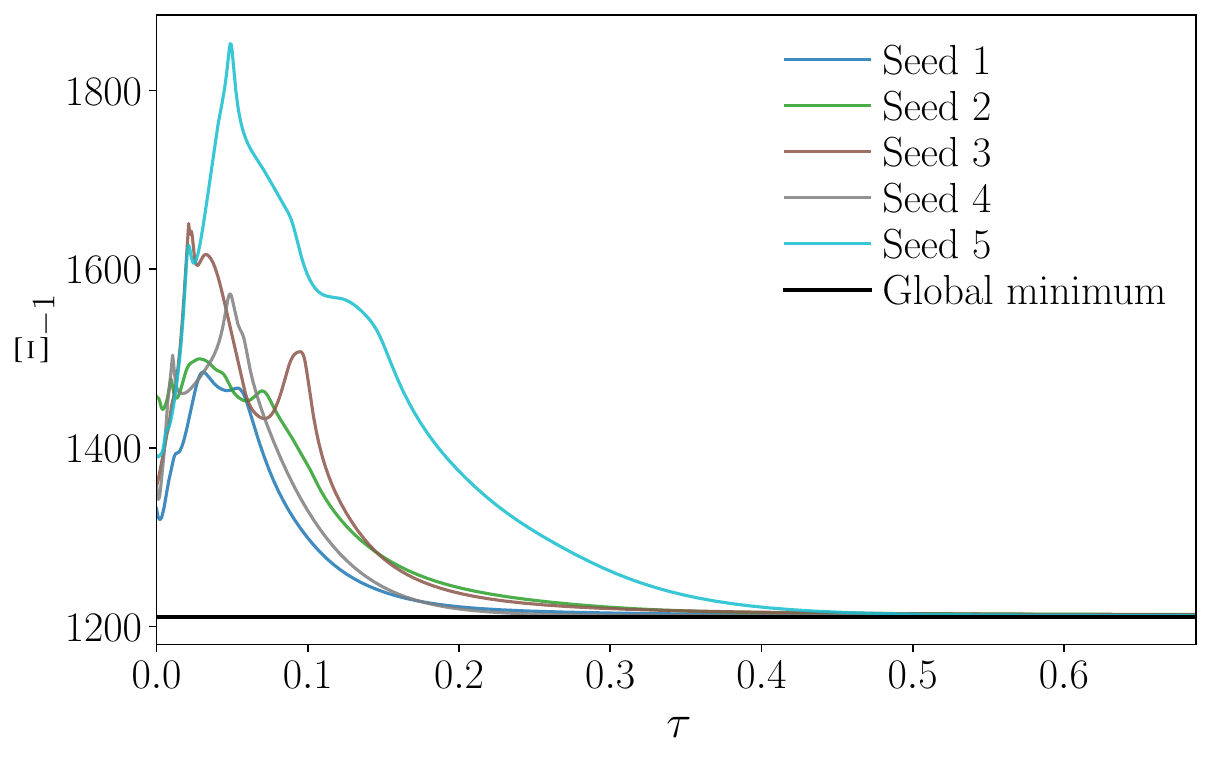}
		\caption{Evolution of shape potential $\Xi_{-1}$ from the same random Gaussian initializations (\(N=25\)) as in \cref{fig:min_converge} showing convergence toward the numerically determined estimate of the global minimum.}
		\label{fig:potential_converge}
	\end{figure}
    
	The global minimum of \(\Xi_{-1}\), the gravitational shape potential for the distinguished model, can be estimated numerically using L-BFGS-B optimization~\cite{LMBFGSB_1995SJSC...16.1190B}.  We then initialize the reduced shape system from random initial data and track \(\Xi_{-1}\) as a function of \(\tau\).  \Cref{fig:potential_converge} shows that, for the system sizes displayed, the trajectories converge close to the ``best-found'' minimum value. This is consistent with \cref{thm:stability} and our preceding numerical stability analysis, which classifies global minimum central configurations as local attractors. 

	These numerical results support the central claim for the moderate \(N\) systems studied here: randomly generated initial data are attracted to the expanding global minimum shape potential fixed point, and the reconstructed scale approaches \(R(t)\propto t^{2/3}\) at late-times. We will next see that a new class of attractor fixed points emerges when increasing the number of particles.
	
	\subsection{Dynamical Convergence Tests for Large \texorpdfstring{$N$}{N} and a Global Conjecture}
	\begin{figure}[!htbp]
		\centering
		\includegraphics[width=\textwidth]{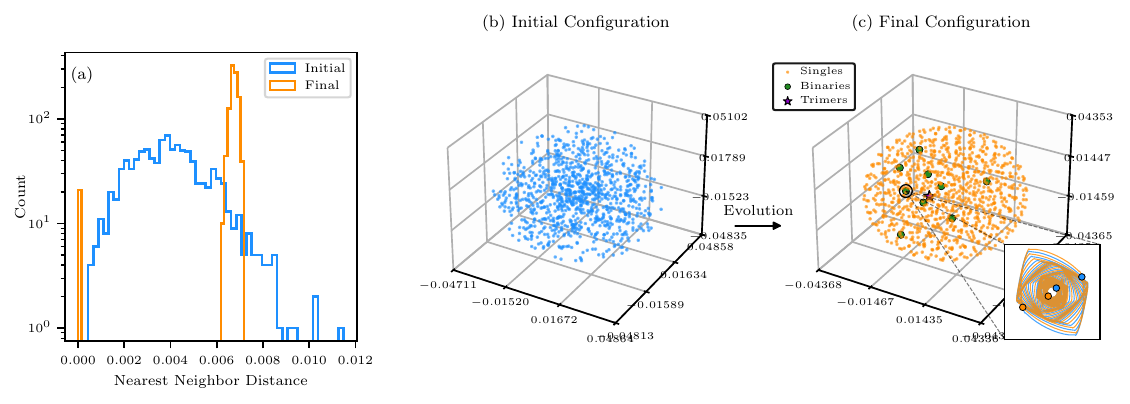}
		\caption{Time evolution of reduced shape dynamics from randomly generated initial conditions for \(N=1000\). (a) Nearest neighbor distance distributions for initial and final states. The time-evolved configuration has two widely separated peaks corresponding to intra- and inter-cluster length scales. (b) An initialization $\{\mathbf{s}_i,\mathbf{u}_i\}$ selected from a Gaussian random number generator with mean \(0\) and variance \(1\) but then re-centered and re-scaled to satisfy constraints \cref{eq:shape_coordinate_constraints}. (c) Late-time configuration with several tightly bound clusters; the inset shows a binary's inspiraling orbital motion. The binary size shrinks, and the ratio of intra- to inter-cluster length scales tends to zero with time. }
		\label{fig:Time_evol_1000_t}
	\end{figure}

	The small-\(N\) simulations indicate convergence to the attractor point corresponding to radial expansion.  For larger \(N\), we discover a new class of attractor fixed points. The reduced dynamics remains dissipative-like in shape space, but instead of all particles freezing independently, a small number of particles form tightly bound clusters with internal motion.  The cluster centers, together with the unclustered particles, then become approximately stationary in shape space.\footnote{It is not that the fixed points identified in \cref{sec:scale_invt_fractons} and \cref{sec:EdS_model} cease to exist. They do, and the expanding minimum branch is still a local attractor. The basin of attraction of this class of fixed points is not large enough to accommodate generic initial phase space data.} Moreover, the clusters monotonically shrink in shape space, and the ratio of the intra-cluster to inter-cluster length scales tends to zero with time. A typical \(N=1000\) evolution from randomly generated initial conditions, illustrating convergence to this new class of fixed points, is shown in \cref{fig:Time_evol_1000_t}.  
		
	We will now take $N>100$ and subject the evolution to the same convergence tests. As shown in \cref{fig:cc_largeN} we find that for all random initializations and particle counts $N>100$, the late-time state settles to a \emph{renormalized Newtonian central configuration} if a cluster of \(m_a\) particles is renormalized as a single point of mass \(m_a\) located at the cluster center $\bar{\mathbf s}_a$,
	
	\begin{equation}
		\widetilde{\mathbf H}_{-1}^{\,a}
		= \widetilde S_{-1}\,\bar{\mathbf{s}}_a,\qquad
		\widetilde{\mathbf H}_{-1}^{\,a}
		:=\sum_{b\neq a}m_b
		\frac{\bar{\mathbf s}_a-\bar{\mathbf s}_b}
		{|\bar{\mathbf s}_a-\bar{\mathbf s}_b|^3},
		\qquad
		\widetilde S_{-1}
		:=\sum_{a<b}\frac{m_am_b}
		{|\bar{\mathbf s}_a-\bar{\mathbf s}_b|}.
		\label{mass_central_config}
	\end{equation}
	instead of the equal-mass version in \cref{central_config}.  The same diagnostics as used for small \(N\) provide evidence for this coarse-grained convergence\footnote{Clusters
	are identified from the late-time state and the same membership is then used to construct the coarse-grained diagnostics at earlier times.} in \cref{fig:cc_largeN}. 
	\begin{figure}[htbp]
		\centering
		\includegraphics[width=\textwidth]{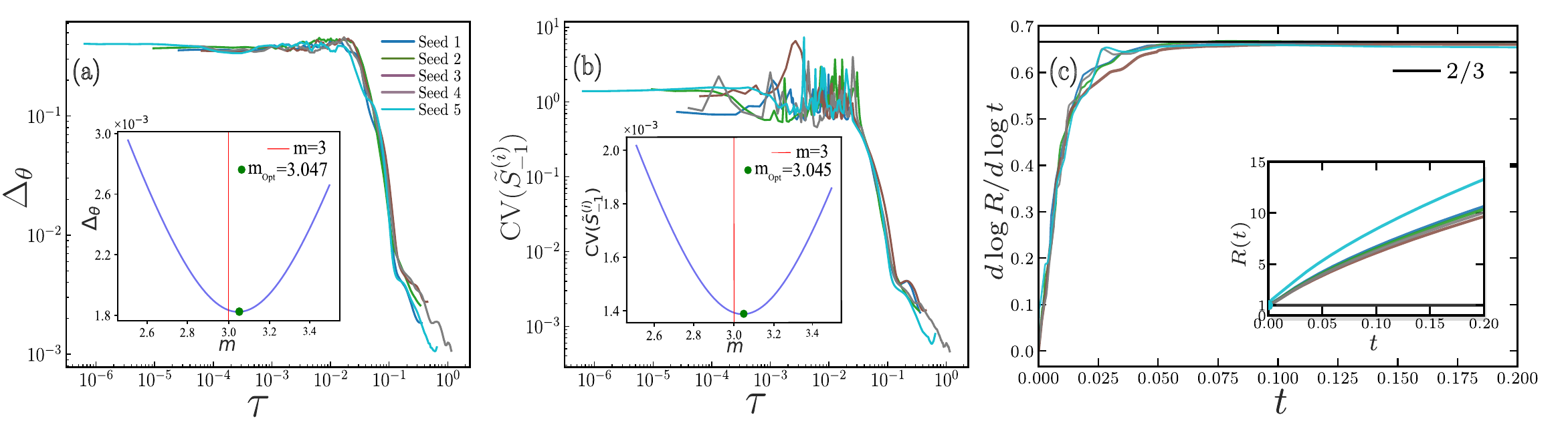} 
		\caption{Evolution of $N=200$ particles from Gaussian shape phase-space $\{\mathbf{s}_i,\mathbf{u}_i\}$ initializations with mean 0 and variance 1, but then re-centered and re-scaled to satisfy \cref{eq:shape_coordinate_constraints}. (a)--(b) Angular misalignment and the variation of $\tilde{S}_{-1}^{(i)}$ that diagnose convergence to Newtonian central configurations of \cref{mass_central_config}, the insets show the same diagnostics for a late-time state when the mass of a trimer cluster is varied \((2.5<m<3.5)\). It is clear that the late-time renormalized configuration is the optimal central configuration if a trimer is treated as a mass \(3\) point particle. Similar renormalization is found to work for clusters of all sizes. (c) Scale evolution (\(R(0)\) set to \(1\)) and the logarithmic slope \(d\log R/d\log t\), which approaches \(2/3\) at late times.}
		\label{fig:cc_largeN}
	\end{figure}
	The numerical evidence suggests a statement about the coarse-grained
	dynamics, rather than only about its fixed points.  We formulate this
	stronger statement as our second conjecture.

	\begin{conjecture}[Effective unequal-mass cluster dynamics]
		\label{conj:renormalized_dynamics}
		For the distinguished model \((\alpha,\beta)=(-2,1)\), every
		late-time attractor state of the reduced shape dynamics admits an asymptotic
		decomposition into tightly bound clusters \(C_a\), with unclustered
		particles regarded as singleton clusters, such that the ratio of the
		intra-cluster to inter-cluster length scales tends to zero.  Once this
		separation of scales has developed, the cluster centers and cluster momenta, defined as
		\begin{equation}
			\bar{\mathbf x}_a
			=\frac{1}{m_a}\sum_{i\in C_a}\mathbf x_i,\qquad
			\mathbf P_a=\sum_{i\in C_a}\mathbf p_i,\qquad
			m_a=|C_a|,
		\end{equation}
		evolve, to leading order, as an otherwise unclustered unequal-mass
		 system governed by
		\begin{equation}
			H^{(m)}_{(-2,1)}
			=\sum_{a<b}m_am_b
			\frac{|\bar{\mathbf x}_a-\bar{\mathbf x}_b|}
			{|\mathbf P_a/m_a-\mathbf P_b/m_b|^2}.
			\label{eq:renormalized_hamiltonian}
		\end{equation}
		In the centered scale-shape variables defined below, this effective
		motion approaches an expanding radial fixed point.  Corrections to the
		postulated dynamics vanish with the ratio of the intra-cluster to
		inter-cluster length scales.
	\end{conjecture}

	The corresponding scale-shape separation is defined by
	\begin{align}
		M&:=\sum_a m_a,\qquad
		\mathbf X_{\rm C}^{(m)}
		:=\frac{1}{M}\sum_a m_a\bar{\mathbf x}_a,\qquad
		\boldsymbol\pi_{\rm C}^{(m)}
		:=\frac{1}{M}\sum_a\mathbf P_a,
		\notag\\
		R_m^2&:=\sum_a m_a
		|\bar{\mathbf x}_a-\mathbf X_{\rm C}^{(m)}|^2,
		\notag\\
		\bar{\mathbf s}_a
		&:=\frac{\bar{\mathbf x}_a-\mathbf X_{\rm C}^{(m)}}{R_m},
		\qquad
		\bar{\mathbf u}_a
		:=R_m^{-1/2}
		\left(\frac{\mathbf P_a}{m_a}
		-\boldsymbol\pi_{\rm C}^{(m)}\right),
		\qquad
		\dd\tau:=\frac{\dd t}{R_m^{3/2}} .
		\label{eq:renormalized_shape_variables}
	\end{align}
	These variables satisfy
	\begin{equation}
		\sum_a m_a\bar{\mathbf s}_a=0,\qquad
		\sum_a m_a|\bar{\mathbf s}_a|^2=1,\qquad
		\sum_a m_a\bar{\mathbf u}_a=0.
		\label{eq:renormalized_shape_constraints}
	\end{equation}
	In the limit of well-separated tight clusters, they agree, up to
	finite-size corrections, with the corresponding microscopic shape
	averages:
	\begin{equation}
		\bar{\mathbf s}_a
		\simeq\frac{1}{m_a}\sum_{i\in C_a}\mathbf s_i,\qquad
		\bar{\mathbf u}_a
		\simeq\frac{1}{m_a}\sum_{i\in C_a}\mathbf u_i .
	\end{equation}
	In these centered, scale-normalized variables, the Hamiltonian and thus energy become independent of the overall scale $R$,
	\begin{equation}
		H^{0,(m)}_{(-2,1)}
		=\sum_{a<b}m_am_b
		\frac{|\bar{\mathbf s}_a-\bar{\mathbf s}_b|}
		{|\bar{\mathbf u}_a-\bar{\mathbf u}_b|^2}.
		\label{eq:renormalized_shape_hamiltonian}
	\end{equation}
	The function \(H^{0,(m)}_{(-2,1)}\) is conserved along the full motion, although it is not itself the generator of the reduced \((\bar{\mathbf{s}}_a,\bar{\mathbf{u}}_a)\) dynamics.
	
	We can verify directly that the renormalized central configurations are
	radial fixed points of this effective dynamics.  Writing
	\(\bar{\mathbf s}_{ab}:=\bar{\mathbf s}_a-\bar{\mathbf s}_b\) and
	\(\bar{\mathbf u}_{ab}:=\bar{\mathbf u}_a-\bar{\mathbf u}_b\), the
	reduced equations obtained from \cref{eq:renormalized_hamiltonian} are
	\begin{align}
		\frac{\dd\bar{\mathbf s}_a}{\dd\tau}
		&=-2\left(\mathbf F_a^{(m)}-A_m\bar{\mathbf s}_a\right),
		&
		\frac{\dd\bar{\mathbf u}_a}{\dd\tau}
		&=-\left(\mathbf G_a^{(m)}-A_m\bar{\mathbf u}_a\right),
		\label{eq:renormalized_shape_eom}\\
		\mathbf F_a^{(m)}
		&:=\sum_{b\neq a}m_b
		|\bar{\mathbf s}_{ab}|
		|\bar{\mathbf u}_{ab}|^{-4}\bar{\mathbf u}_{ab},
		&
		\mathbf G_a^{(m)}
		&:=\sum_{b\neq a}m_b
		|\bar{\mathbf u}_{ab}|^{-2}
		|\bar{\mathbf s}_{ab}|^{-1}\bar{\mathbf s}_{ab},
		\notag\\
		A_m&:=\sum_a m_a\bar{\mathbf s}_a\cdot\mathbf F_a^{(m)} .
		\notag
	\end{align}
	The scale and physical time evolve simultaneously according to
	\begin{equation}
		\frac{\dd R_m}{\dd\tau}=-2A_mR_m,
		\qquad
		\frac{\dd t}{\dd\tau}=R_m^{3/2}.
		\label{eq:renormalized_scale_eom}
	\end{equation}
	On the radial ansatz
	\(\bar{\mathbf u}_a=\gamma\bar{\mathbf s}_a\), with
	\(c:=\gamma|\gamma|^{-4}\), these quantities reduce to
	\begin{equation}
		\mathbf F_a^{(m)}
		=c\,\widetilde{\mathbf H}_{-1}^{\,a},\qquad
		\mathbf G_a^{(m)}
		=\gamma\mathbf F_a^{(m)},\qquad
		A_m=c\,\widetilde S_{-1},
		\label{eq:renormalized_radial_reduction}
	\end{equation}
	where $\widetilde{\mathbf H}_{-1}^{\,a}$ and $\widetilde S_{-1}$ are as defined in \cref{mass_central_config}.	It follows that both right-hand sides of
	\cref{eq:renormalized_shape_eom} vanish whenever
	\(\widetilde{\mathbf H}_{-1}^{\,a}
	=\widetilde S_{-1}\bar{\mathbf s}_a\).  Conversely, the first fixed-point
	equation on the radial ansatz gives
	\(\widetilde{\mathbf H}_{-1}^{\,a}
	=(A_m/c)\bar{\mathbf s}_a\); taking its mass-weighted scalar product
	with \(\bar{\mathbf s}\) and using
	\cref{eq:renormalized_shape_constraints} yields
	\(A_m/c=\widetilde S_{-1}\).  Thus the radial fixed-point condition is
	equivalent to the renormalized central-configuration equation
	\eqref{mass_central_config}.
	
	\begin{corollary}[Renormalized central configurations]
		\label{cor:renormalized_cc}
		Under \cref{conj:renormalized_dynamics}, replace each cluster \(C_a\)
		by a point of mass \(m_a=|C_a|\) at its center.  After centering and
		normalizing the resulting weighted configuration, the renormalized points
		\(\{\bar{\mathbf s}_a\}\) satisfy the Newtonian central-configuration
		equation \eqref{mass_central_config}, up to corrections that vanish as
		the ratio of the intra-cluster to inter-cluster length scales tends to
		zero.  The ordinary equal-mass radial fixed points are the cluster-free
		special case \(m_a=1\) for every \(a\).
	\end{corollary}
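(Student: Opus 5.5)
The argument has two parts. First we show that, under the conjecture, the late-time cluster centres converge to a radial fixed point of the effective dynamics. Second, we use the computation above showing that such fixed points are exactly renormalized central configurations.

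Step one is to apply \cref{conj:renormalized_dynamics} directly. At late times the cluster centres $\bar{\mathbf x}_a$ and momenta $\mathbf P_a$ evolve under the effective Hamiltonian \eqref{eq:renormalized_hamiltonian}, up to corrections controlled by the ratio $\epsilon$ of intra- to inter-cluster length scales. In the variables \eqref{eq:renormalized_shape_variables} this motion approaches an expanding radial fixed point, $\bar{\mathbf u}_a=\gamma\bar{\mathbf s}_a$ with $\gamma<0$. Because the effective system has the same homogeneity as $H_{(-2,1)}$, with mass weights inserted, the reduction to \eqref{eq:renormalized_shape_eom} is carried out exactly as in the derivation of \cref{eq:shape_coordinates_eom}. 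The only changes are that sums carry factors $m_b$, and inner products are mass-weighted so as to respect \eqref{eq:renormalized_shape_constraints}.

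Step two is to evaluate the fixed-point conditions on the radial ansatz. By \eqref{eq:renormalized_radial_reduction} these reduce to $\widetilde{\mathbf H}^{\,a}_{-1}=(A_m/c)\bar{\mathbf s}_a$. Taking the mass-weighted scalar product with $\bar{\mathbf s}_a$ and using $\sum_a m_a|\bar{\mathbf s}_a|^2=1$ gives $A_m/c=\sum_a m_a\bar{\mathbf s}_a\cdot\widetilde{\mathbf H}^{\,a}_{-1}$. Pair symmetrization turns this into $\widetilde S_{-1}$, provided the mass factors are placed consistently. Here one must check that the weighting in \eqref{mass_central_config} matches the weighting in $A_m$ exactly; this bookkeeping is the only place an error could hide. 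The result is \eqref{mass_central_config} at the limit point.

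Step three converts this limit statement into the stated approximate statement. The microscopic averages $m_a^{-1}\sum_{i\in C_a}\mathbf s_i$ differ from $\bar{\mathbf s}_a$ by $O(\epsilon)$. Both sides of \eqref{mass_central_config} are smooth in the centres wherever the centres stay separated, so the residual of the central-configuration equation for the microscopic averages is $O(\epsilon)$ plus the conjecture's error term, and both vanish as $\epsilon\to0$.

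The main obstacle is making this uniformity honest. It requires a lower bound on inter-centre distances along the attractor, so that the Lipschitz constants of $\widetilde{\mathbf H}^{\,a}_{-1}$ stay bounded. I would take this bound directly from the scale-separation hypothesis of the conjecture rather than try to derive it.

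Finally, the cluster-free special case follows by setting every $m_a=1$. Then \eqref{mass_central_config} reduces to \eqref{central_config} and \eqref{eq:renormalized_shape_variables} reduces to \eqref{eq:shape_coordinates}, recovering the equal-mass radial fixed points.
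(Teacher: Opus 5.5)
Your proposal is correct and follows the paper's own argument. \cref{conj:renormalized_dynamics} supplies convergence to an expanding radial fixed point; on the radial ansatz the fixed-point equation \(\widetilde{\mathbf H}_{-1}^{\,a}=(A_m/c)\bar{\mathbf s}_a\), after a mass-weighted scalar product, gives \(A_m/c=\sum_{a<b}m_am_b/|\bar{\mathbf s}_{ab}|=\widetilde S_{-1}\), so the bookkeeping you flag does close. Your Step 3 is extra care the paper omits (it simply inherits the error terms from the conjecture); there the discrepancy is only the overall factor \(R_m/R\), since \(\mathbf X_{\rm C}^{(m)}=\mathbf X_{\rm C}\) when singletons count as clusters, and \(R_m/R\to1\) by \cref{eq:microscopic_cluster_scale_decomposition}.
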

	
    \begin{corollary}[Critical unequal-mass Newtonian dual]
		\label{cor:renormalized_newtonian_dual}
		Under \cref{conj:renormalized_dynamics}, the asymptotic radial
		fixed-point motion of the cluster centers is, to leading order, dual to
		a zero-energy Newtonian gravitational system with masses
		\(m_a=|C_a|\).  Its Hamiltonian is
		\begin{equation}
			H^{\mathrm{dual}}
			=\sum_a\frac{|\boldsymbol\Pi_a|^2}{2m_a}
			-G_{\mathrm{eff}}^{(m)}
			\sum_{a<b}\frac{m_am_b}
			{|\bar{\mathbf x}_a-\bar{\mathbf x}_b|},
			\qquad
			\boldsymbol\Pi_a:=m_a\dot{\bar{\mathbf x}}_a ,
			\label{eq:renormalized_newtonian_dual}
		\end{equation}
		where \(\boldsymbol\Pi_a\) is the dual Newtonian momentum, not
		the fracton cluster momentum \(\mathbf P_a\), and
		\begin{equation}
			G_{\mathrm{eff}}^{(m)}
			=\frac{2A_*^2}{\widetilde S_{-1}}
			=\frac{2\widetilde S_{-1}}{|\gamma|^6},
			\qquad
			A_*=\gamma|\gamma|^{-4}\widetilde S_{-1}.
			\label{eq:renormalized_Geff}
		\end{equation}
		Along the homothetic solution
		\(\bar{\mathbf x}_a=R_m\bar{\mathbf s}_a\),  {the Newtonian energy vanishes}:
		\begin{equation}
			E_{N}
			=\frac12\dot R_m^2
			-\frac{G_{\mathrm{eff}}^{(m)}
			\widetilde S_{-1}}{R_m}
			=0 .
			\label{eq:renormalized_dual_zero_energy}
		\end{equation}
	\end{corollary}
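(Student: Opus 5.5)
The plan is to work entirely within the effective dynamics of \cref{conj:renormalized_dynamics}, at a radial fixed point of \cref{eq:renormalized_shape_eom}, and to mimic the equal-mass construction of \cref{sec:full_newtonian_dual} with masses inserted. By \cref{cor:renormalized_cc} the centers satisfy \(\bar{\mathbf u}_a=\gamma\bar{\mathbf s}_a\) with \(\{\bar{\mathbf s}_a\}\) solving \cref{mass_central_config}. By \cref{eq:renormalized_radial_reduction}, \(A_m\) takes the constant value \(A_*=c\,\widetilde S_{-1}\) with \(c=\gamma|\gamma|^{-4}\). Combining \cref{eq:renormalized_scale_eom} and eliminating \(\tau\) gives
\begin{equation*}
\dot R_m=-2A_*R_m^{-1/2},
\end{equation*}
so the physical trajectory is \(\bar{\mathbf x}_a(t)=\mathbf X_{\rm C}^{(m)}+R_m(t)\bar{\mathbf s}_a\), with fixed shape. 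The first step is to drop the centre term: \(\mathbf X_{\rm C}^{(m)}\) moves at most uniformly, since \(\boldsymbol\pi_{\rm C}^{(m)}\) and the weighted dipole are conserved, and the Galilean boost can be removed exactly as in the equal-mass case.

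The second step is to show that this curve solves Newton's equations for the Hamiltonian \cref{eq:renormalized_newtonian_dual}. Differentiating the displayed scale equation gives
\begin{equation*}
\ddot R_m=-2A_*^2R_m^{-2}.
\end{equation*}
The Newtonian acceleration of body \(a\) on the homothetic ansatz is \(-G^{(m)}_{\rm eff}R_m^{-2}\widetilde{\mathbf H}_{-1}^{\,a}\). The mass \(m_a\) cancels between \(\boldsymbol\Pi_a=m_a\dot{\bar{\mathbf x}}_a\) and the force, which carries \(m_am_b\), leaving exactly the \(m_b\)-weighted sum \(\widetilde{\mathbf H}_{-1}^{\,a}\). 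Using \cref{mass_central_config}, this becomes \(-G^{(m)}_{\rm eff}\widetilde S_{-1}R_m^{-2}\bar{\mathbf s}_a\). Matching it with \(\ddot R_m\bar{\mathbf s}_a\) forces \(G^{(m)}_{\rm eff}\widetilde S_{-1}=2A_*^2\), which is the first form of \cref{eq:renormalized_Geff}. Substituting \(A_*=\gamma|\gamma|^{-4}\widetilde S_{-1}\) gives \(A_*^2=\widetilde S_{-1}^2|\gamma|^{-6}\), hence the second form \(2\widetilde S_{-1}/|\gamma|^6\).

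The third step is the energy. The kinetic term is \(\sum_a m_a\dot R_m^2|\bar{\mathbf s}_a|^2/2=\dot R_m^2/2\), by the weighted normalization in \cref{eq:renormalized_shape_constraints}. The potential is \(-G^{(m)}_{\rm eff}\widetilde S_{-1}/R_m\). With \(\dot R_m^2=4A_*^2/R_m\) and \(G^{(m)}_{\rm eff}\widetilde S_{-1}=2A_*^2\), the two terms cancel, which gives \cref{eq:renormalized_dual_zero_energy}. The claim concerns only the homothetic trajectories, as in \cref{sec:full_newtonian_dual}, so no canonical map between the full theories is needed.

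The main obstacle is the phrase ``to leading order''. The argument above is exact for the effective Hamiltonian \cref{eq:renormalized_hamiltonian}. Transferring it to the true cluster centres requires that the conjectured corrections, which vanish with the ratio of intra- to inter-cluster length scales, give only subleading changes to \(\ddot{\bar{\mathbf x}}_a\) and to \(E_N\) relative to the \(R_m^{-2}\) and \(R_m^{-1}\) terms. I would state this as a direct inheritance from \cref{conj:renormalized_dynamics} rather than try to bound it independently.
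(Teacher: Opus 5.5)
Your proposal is correct and follows essentially the same route as the paper: you derive $\ddot R_m=-2A_*^2R_m^{-2}$ from the effective scale equation, match it to the Newtonian homothetic reduction $\ddot R_m=-G^{(m)}_{\mathrm{eff}}\widetilde S_{-1}R_m^{-2}$ obtained from the unequal-mass central-configuration equation, substitute the resulting $G^{(m)}_{\mathrm{eff}}$ into the energy, and inherit the leading-order caveat from \cref{conj:renormalized_dynamics}. Your explicit checks of the mass cancellation, of the weighted normalization in the kinetic term, and of the centre-of-mass motion supply details the paper leaves implicit.
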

	\begin{proof}
		By \cref{cor:renormalized_cc}, the limiting weighted shape satisfies
		the unequal-mass Newtonian central-configuration equation
		\eqref{mass_central_config}. { Hamilton's equations generated by
		\cref{eq:renormalized_newtonian_dual} allow homothetic solutions that preserve this shape, and the dynamics therefore reduces to a single degree of freedom, the scale \(R_m\): }
		\[
			\ddot R_m
			=-\frac{G_{\mathrm{eff}}^{(m)}
			\widetilde S_{-1}}{R_m^2}.
		\]
		On the other hand,
		\cref{eq:renormalized_eds_scale} with
		\(A_m\to A_*\) gives
		\(\dot R_m=-2A_*R_m^{-1/2}\) and hence
		\(\ddot R_m=-2A_*^2R_m^{-2}\).  The two equations agree for
		\cref{eq:renormalized_Geff}; substituting the same relation into the
		Newtonian energy gives \cref{eq:renormalized_dual_zero_energy}.
		The corrections are those already specified in
		\cref{conj:renormalized_dynamics}.
	\end{proof}

	Finite-size effects and the internal motion of each cluster enter through
	the corrections described in \cref{conj:renormalized_dynamics}.  The
	conjecture concerns the coarse-grained cluster centers; a clustered
	attractor may therefore retain nontrivial internal motion and need not be
	a pointwise fixed point of the microscopic reduced variables.  The
	unequal-mass fixed-point construction and the corresponding conditional
	stability criterion are given in
	\cref{app:UnequalMassCentralConfigs,app:UnequalMassStability},
	respectively.
    
    The same reduced equations also determine the overall expansion.  On the
	expanding radial branch \(\gamma<0\), so
	\(A_m\to A_* =c\widetilde S_{-1}<0\) under
	\cref{conj:renormalized_dynamics}.  Transforming
	\cref{eq:renormalized_scale_eom} back to physical time gives
	\begin{equation}
		\frac{\dd R_m}{\dd t}=-2A_mR_m^{-1/2},
		\qquad
		\frac{\dd}{\dd t}R_m^{3/2}=-3A_m.
		\label{eq:renormalized_eds_scale}
	\end{equation}
	Consequently, \(A_m\to A_*<0\) implies
	\begin{equation}
		R_m(t)^{3/2}=-3A_*t+o(t),
		\qquad
		R_m(t)\sim(-3A_*t)^{2/3}\propto t^{2/3} ,
	\end{equation}
	which is the Einstein-de Sitter scaling.  Moreover, the microscopic and
	cluster-center scales obey the exact decomposition
	\begin{equation}
		R^2=R_m^2+
		\sum_a\sum_{i\in C_a}|\mathbf x_i-\bar{\mathbf x}_a|^2.
		\label{eq:microscopic_cluster_scale_decomposition}
	\end{equation}
	
    Thus \(R/R_m\to1\) when the intra-cluster scale is negligible compared
	with the inter-cluster scale, and the full microscopic scale inherits the
	same \(t^{2/3}\) law. As shown in \cref{fig:cc_largeN}(c) randomly generated initial conditions evolve onto expanding solutions and the logarithmic slope \(d\log R/d\log t\) approaches \(2/3\) at late times, which is consistent with \cref{conj:renormalized_dynamics} in view of \cref{eq:renormalized_scale_eom,eq:renormalized_eds_scale}.

\subsection{Emergence of Homogeneity}
\begin{figure}[!htbp]
	\centering
	\includegraphics[width=\textwidth]{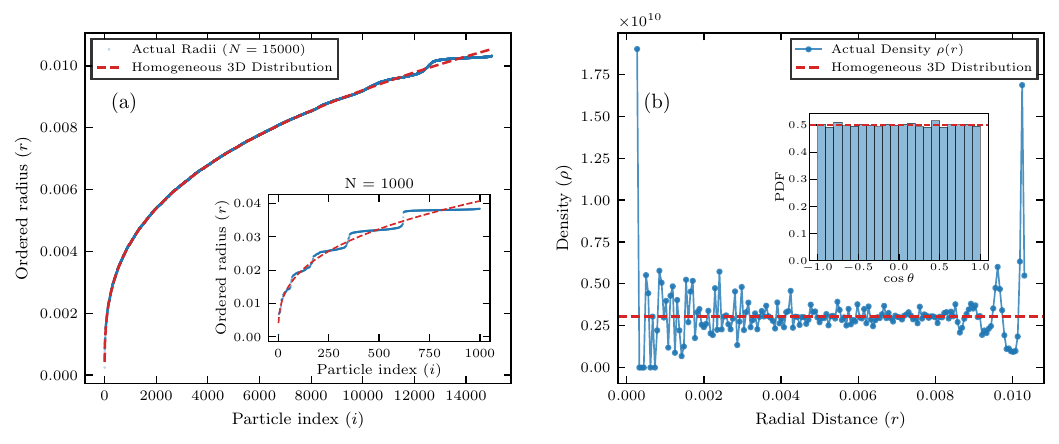}
	\caption{Homogeneity diagnostics for an \(N=15000\) configuration evolved from a Gaussian-sampled random phase-space point .  Panel (a): ordered radii $r_i$ (sorted distances from the centre) versus particle index $i$ for $N=15000$.
		The dashed red line shows the best‑fit homogeneous 3D distribution $r_i \propto i^{1/3}$.
		The inset displays the same quantities for a smaller system with $N=1000$ particles.
		Panel (b): radial number density $\rho(r)$ obtained from spherical shell binning of the $N=15000$ configuration (blue circles with connecting lines).
		The dashed red line indicates the constant density expected for a uniform sphere. The inset displays the probability density function (PDF) for angular distribution, along $\cos\theta$,
        where $\theta$ is the polar angle measured from the $z$-axis. The constant value of $1/2$   (dashed red line) indicates a perfectly isotropic distribution.
		}
	\label{fig:Homogeneity}
\end{figure}

It is clear from \cref{thm:distribution} that global minimum central configurations of \((\alpha,\beta)=(-2,1)\) converge to a homogeneous bulk density as \(N\to\infty\). Moreover, \cref{thm:stability,conj:Cq} indicate these configurations are local attractors. But we now know that the large-$N$ dynamics generically converges to the cluster-forming class of fixed points. We now ask if these late-time states can still exhibit homogeneity at larger length scales.  As shown in \cref{fig:Homogeneity}, cluster-forming trajectories for large-\(N\) still retain a homogeneous large-scale distribution. The reason for this, as explained in ~\cite{Battye_2003}, follows directly from the conjectured global fixed point forming an unequal-mass central configuration shown in \cref{mass_central_config}. It was argued that a particle of mass $m$ evacuates a volume around it proportional to $m$, and the distribution is identical to the homogeneous one resulting from the equal-mass central configuration at large length scales~\footnote{Although \cite{Battye_2003} considers only extremum central configurations, we have independently verified this to be true for our late-time renormalized central configurations.}.  This highlights the mechanism in this model through which small-scale structures can coexist with large-scale bulk homogeneity as in standard cosmology~\cite{Peebles1993,dodelson2020modern}.

\section{Bound Local Dynamics, Mach's Principle and an Arrow of Time}
\label{sec:structure}

In the previous section, we saw that the distinguished model \((\alpha,\beta)=(-2,1)\), in its reduced shape dynamics, has attractors whose characteristic feature is that their spatial arrangement is a Newtonian central configuration. We see that as the number of particles $N$ is increased, the late-time states evolving from randomly generated initial conditions change from equal-mass to cluster-renormalized central configurations. Moreover, we find that the scale evolution follows the Einstein-de Sitter power law where the cluster centers drift apart as \(r_{ab}\propto t^{2/3}\) while having internal motions. At large \(N\) we see the emergence of large-scale homogeneity along with clusters at smaller scales.      

To get a rough sense of how the number and variety of bound structures change with $N$, \Cref{tab:i} reports the cluster count in one representative run for several $N$.  The counts increase across these particular runs, but they are not ensemble averages. We see that as we increase $N$, so do the number and composition of the clusters that form, but these remain a very small fraction of the total particles. We next look at the dynamics of the particles comprising these clusters and highlight some of their noteworthy features. 
\begin{table}[htbp]
	\centering
	\begin{tabular}{cc|c}
		\hline
		$N$ & Cluster Count & Composition\\
		\hline
		100 & 1 & 1 binary\\
		200 & 4 & 4 binaries\\
		500 & 6 & 5 binaries and 1 tetramer\\
		1000 & 11 & 10 binaries and 1 trimer\\
		1500 & 18 & 16 binaries and 2 trimers\\
		\hline
	\end{tabular}
	\caption{Clusters identified at the final recorded time in one representative run for each \(N\). \label{tab:i}}
\end{table}

\subsection{Intra-Cluster Dynamics}
\begin{figure}[!htbp]
	\centering
	\includegraphics[width=0.7\textwidth]{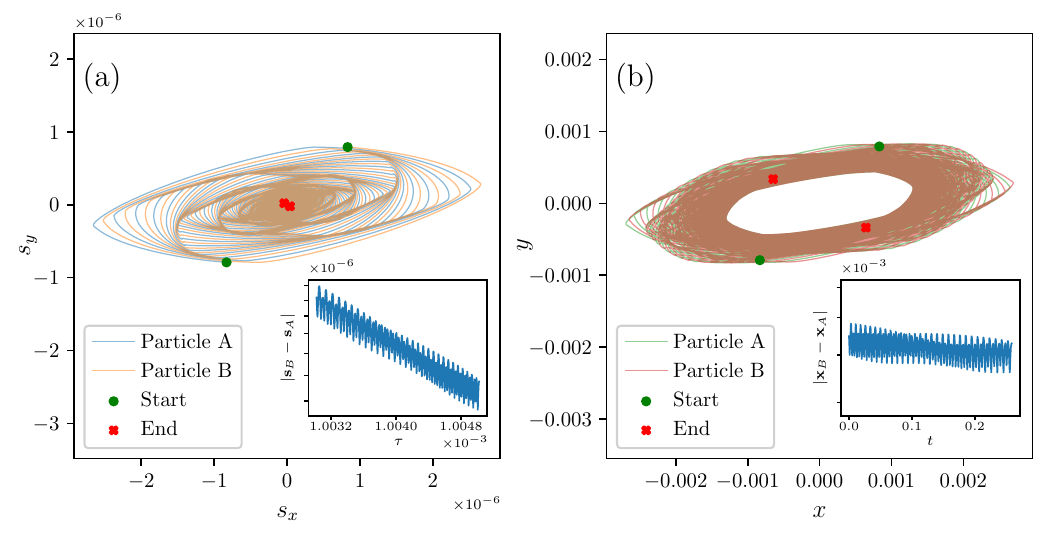}
	\caption{Trajectories in the center-of-mass frame of a binary, shown in (a) shape and (b) physical coordinates.  Insets show the separation, on a logarithmic scale, between the two particles in shape and physical coordinates.  The separation decreases approximately linearly on the logarithmic scale in shape coordinates, while the corresponding drift in physical coordinates is small.}
	\label{fig:Trajectory}
\end{figure}

\Cref{fig:Trajectory} shows a representative binary trajectory in its
	center-of-position frame.  In shape coordinates, its separation shrinks
	approximately exponentially over the displayed reduced-time interval \(\tau\).  The resulting separation between the intra-cluster and inter-cluster scales is consistent with the asymptotic cluster decomposition in \cref{conj:renormalized_dynamics}. On the expanding branch, \eqref{eq:scale_tau} gives
\begin{equation}
	R(\tau)\propto e^{-2A\tau},\qquad A<0.
\end{equation}
Thus, the growth of the physical scale largely compensates for the shrinkage in shape coordinates.  In the run shown, the physical binary size has only a small residual drift~\footnote{We expect this drift to vanish either as we increase the simulation time or $N$, both of which are not accessible within our current numerical analysis.}.  From a relational point of view, such bound subsystems can serve as internal rods and clocks, with orbital separations and periods defining local standards against which the global expansion of larger structures is measured. This behavior parallels gravitationally bound systems which decouple from the expansion of the universe: co-moving separations shrink, while physical sizes remain nearly fixed~\cite{einstein1945influence,carrera2010influence}.

\subsection{Mach's Principle}
The dynamics of cluster-forming late-time states also has a close analog in a familiar regime of Newtonian cosmology. Consider an exact zero-energy expanding homothetic solution of Newtonian \(N\)-body gravity, and replace a few point masses by tightly bound subsystems whose centers of mass initially coincide with the original point particles. At leading multipole order, the centers follow the point-mass motion; finite-size corrections are tidal and are suppressed by powers of the intra- to inter-system length ratio, provided the internal moments remain controlled. The internal motion need not be exactly Keplerian, since the background produces time-dependent tidal perturbations, but these too decay as the intra- to inter-cluster length scale vanishes. Newtonian \(N\)-body gravity therefore has a dynamical regime in which bound subsystems remain localized while their centers of mass form an effective central configuration expanding as \(t^{2/3}\). In summary, Newtonian gravity permits localized subsystems because its force admits a controlled multipole expansion at large separation. How does this compare with the formation of bound structures of the distinguished fracton model? We will now argue that their mechanism of formation is distinct from Newtonian gravity. 

To see this, consider the formation of binaries. We know from the previous discussion that \emph{every} two-fracton
solution in \cref{eq:alphabeta_two_body_3d_highlights} for the
distinguished model \((\alpha,\beta)=(-2,1)\) is asymptotically radial with
\(r(t)\to\infty\) as \(|t|\to\infty\), and hence cannot realize a bound
orbit. Moreover, as shown in \cref{ang_sweep}, every nonradial
\((\ell\neq0)\) solution undergoes a universal \(120^\circ\) total
angular sweep and never completes a full revolution. This means that the long-lived compact binaries with multiple orbits and nearly constant physical size that appear in large-\(N\) dynamics cannot exist in isolation but emerge only as a part of a larger set. Local bound motion is therefore not an intrinsic property of a pair alone; it is induced by the surrounding many-body background. In this sense, the cluster regime has a Machian character. Local inertial and bound dynamics are not imposed as independent microscopic input but arise as properties of subsystems embedded in a global relational background. 

Mach's principle broadly associates local inertial properties with the rest of the matter distribution rather than treating them as wholly intrinsic~\cite{brans1961mach,barbour1995mach,sciama1953origin}. Fracton models represent a particular realization of this theme. The connection of fracton dynamics with Mach's principle has already been pointed out in several previous works~\cite{Pretko2017c,PrakashGorielySondhi_Fractons_2024,BabbarSadkiPrakashSondhi_Fractons_2025,PrakashSadkiSondhi_Fractons_2024}. The Machian nature is evident from the very nature of the Hamiltonians~\cref{eq:H_fracton,eq:H_alpha_beta}, which vanish for a single particle, leaving it immobile, and require others in its vicinity to move. The Machian nature of fracton systems highlighted in this section above is concerned with the \emph{nature} of motion in isolation compared with the presence of a collective background, rather than the ability to move itself. This feature can also be seen in other fracton models--- in Refs.~\cite{PrakashGorielySondhi_Fractons_2024,PrakashSadkiSondhi_Fractons_2024}, it was shown that two-particle bound oscillations are absent for Hamiltonians of the form \cref{eq:H_fracton}, but emerge in the vicinity of one additional particle. 

\subsection{Janus Point and Bidirectional Arrow of Time}
\begin{figure}[htbp]
	\centering
	\includegraphics[width=\textwidth]{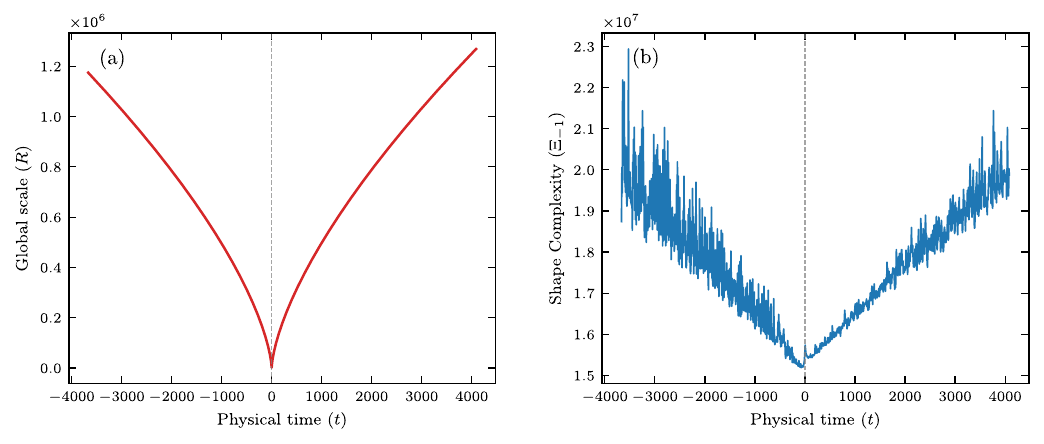}
	\caption{Bidirectional arrow observed in a randomly selected N=1000 trajectory: (a) evolution of the scale with physical time and (b) evolution of the shape complexity with physical time.}
	\label{fig:arrow}
\end{figure}

The Hamiltonian of the distinguished model \((\alpha,\beta)=(-2,1)\), which was shown to have stable expanding attractors,  is also time-reversal invariant, \((\mathbf x,\mathbf p,t)\mapsto(\mathbf x,-\mathbf p,-t)\). If we evolve a randomly initialized trajectory for $t \in (-\infty, +\infty)$, we should see expanding attractors for both $t \rightarrow \pm \infty$ when the scale diverges $R(\pm \infty) \rightarrow \infty$. This naturally generates a ``Janus point"~\cite{BarbourKoslowskiMercati_PhysRevLett.113.181101} on the time axis when the scale is at a minimum. On either side of the Janus point, $t_J^R$, we can define a geometric arrow-of-time that tracks the scale expansion, as shown in \cref{fig:arrow}(a).

The growth of scale $R$ provides an intuitive notion of the flow of time. This can be formalized as a \emph{statistical}  arrow of time by treating $R$ as a macroscopic observable and computing the corresponding Boltzmann entropy. For fixed conserved charges and sufficiently large \(R\), this gives:~\footnote{The $\ldots$ in the argument indicates other macroscopic observables that may be fixed in order to ensure that the phase space volume related to the Boltzmann entropy is finite. See \cref{app:Entropy} for more details}, 
\begin{equation}
    S_B(R,\ldots) = k_B a
		\log\!\left(\frac{R}{R_*}\right) + \text{const}, \qquad a=\left(\frac{9N-20}{2}\right) \label{eq:Entropy}
\end{equation}
where $R_*$ is some arbitrary reference scale. From \cref{eq:Entropy}, we see that Boltzmann entropy too has a minimum at $t_J^R$ and produces, on either side, \emph{statistical} arrows of time characterized by entropy growth.

A complementary analysis can be performed following~\cite{BarbourKoslowskiMercati_PhysRevLett.113.181101}, using the equal-mass shape potential $\Xi_{-1} = \sum_{j<k} |\mathbf{s}_j - \mathbf{s}_k|^{-1}$ as a measure of `shape complexity'. Since we know that the global minimum of \(\Xi_{-1}\) corresponds to homogeneous cluster-free distributions, for trajectories containing bound clusters (which are typical as we increase $N$), $\Xi_{-1}$ increases~\footnote{This should not be confused with the \emph{renormalized} unequal-mass shape potential $\widetilde{\Xi}_{-1} = \sum_{a<b}m_a m_b |\bar{\mathbf s}_a-\bar{\mathbf s}_b|^{-1}$, whose \emph{minima} describe the fixed points near the attractors. The effective clusters, and $\widetilde{\Xi}_{-1}$ are different on the two ends of the time axis, whereas $\Xi_{-1}$ is well-defined everywhere.} as the clusters reduces in overall size in shape space due to divergent intra-cluster contributions to $\Xi_{-1}$ within each cluster ${C}_a$, $\sum_{i<j \in C_a}|\mathbf s_j -\mathbf s_k|^{-1} \rightarrow \infty$ as $|\mathbf s_j -\mathbf s_k| \rightarrow 0$. The minimum of $\Xi_{-1}(t)$ for any given trajectory occurs when the clustering is least developed, maximally away from both asymptotic fixed points $t \rightarrow \pm \infty$. The growth of complexity around this new Janus point, $t_J^{\Xi}$, quantified by $\Xi_{-1}$, defines another bidirectional arrow of time as shown in \cref{fig:arrow}(b). Since the absence of clustering and minimization of overall scale occur maximally away from the fixed points, we expect the two Janus points to occur close to each other, $t_J^{R} \approx t_J^{\Xi}$.

In standard cosmology, the arrows of scale expansion and structure formation are usually traced to a special low-entropy early state, often formulated as a `past hypothesis'~\cite{Penrose1979,albert2000time}. An alternative paradigm~\cite{CarrollChen2004spontaneousinflationoriginarrow,CarrollChen_2005,BarbourKoslowskiMercati_PhysRevLett.113.181101} traces these arrows to the existence of generic Janus points. These are realized when the system operates in an unbounded phase space with a scale surplus~\cite{gryb2025accountarrowtimescale}. Our distinguished model also belongs to this paradigm. However, unlike other models, it has several other robust features relevant to cosmological dynamics discussed above.

	\section{Summary and Conclusions}
	\label{sec:conclusions}
	We have analyzed the scale-invariant, dipole-conserving family of classical fracton Hamiltonians \(H_{(\alpha,\beta)}\).  The exactly solvable two-body problem already displays key features of these models: radial trajectories have a Newtonian description with power-law potential exponent \(2\beta/\alpha\) and identically vanishing Newtonian energy, while in three dimensions every finite-angular-momentum trajectory asymptotically radializes.  On the Einstein-de Sitter (EdS) branch, \(\alpha=-2\beta\), this dual potential is inverse-distance and the separation grows as \(|t|^{2/3}\). The mechanical similarity extends this structure to the many-body problem.  Separating an overall scale from dimensionless position-momentum variables gives an autonomous shape dynamics, while the full dynamics remains Hamiltonian and obeys the Liouville theorem~\cite{Arnold1978,bravetti2022scalingsymmetriescontactreduction,Sloan2018}.  Radial fixed points of the reduced flow are central configurations of the Riesz shape potential with exponent \(q=\alpha+\beta\), and their remaining motion is homothetic, \(R(t)\propto |t|^{\alpha/(\alpha-\beta)}\).  Using the corresponding Riesz-equilibrium problem, we classified the regular large-\(N\) profiles into weighted balls and spherical shells~\cite{FrankMatzke2025,CarazzatoPratelliTopaloglu2026}.  On the EdS branch, we also obtained a conditional local-stability criterion for expanding regular extrema.  The auxiliary positivity condition was proved for the equilateral three-body configuration and verified numerically for the larger regular configurations examined here; its general finite-\(N\) validity is formulated in \cref{conj:Cq}.
	
	These results single out \((\alpha,\beta)=(-2,1)\).  It is the unique model on the EdS branch for which the fracton fixed-point equation coincides with the Newtonian inverse-square central-configuration equation.  Its regular continuum minimum is a homogeneous solid ball, and every homothetic fixed-point trajectory has an exact zero-energy Newtonian gravitational dual, with the effective gravitational coupling fixed by the fracton initial conditions. At moderate \(N\), the random initial conditions studied numerically approach an expanding near-minimum fixed point, pass the Newtonian central-configuration diagnostics, and converge to the EdS scale law. At larger \(N\), the simulations reveal a new and more general class of late-time states.  Tight clusters retain internal motion and remain approximately fixed in physical size, while their coarse-grained centers separate as \(t^{2/3}\), approach unequal-mass Newtonian central configurations, and preserve homogeneity at large scales.  These observations motivate \cref{conj:renormalized_dynamics}: after an asymptotic separation of intra- and inter-cluster scales, the centers should obey an effective unequal-mass fracton Hamiltonian.  The corresponding renormalized central-configuration equation and zero-energy Newtonian dual are corollaries to that conjecture.  Finally, the microscopic Hamiltonian is time-reversal invariant, yet the trajectories examined here exhibit two arrows away from a Janus point: the global scale grows, with leading Boltzmann entropy \(S_B\sim k_B(9N-20)\log R/2\), and the shape complexity grows as clustering develops~\cite{BarbourKoslowskiMercati_PhysRevLett.113.181101,BabbarSadkiPrakashSondhi_Fractons_2025}.
	
	We now turn to the cosmological interpretation and its scope.  In Newtonian cosmology, an FLRW-like dust expansion translates into several special requirements: homothetic motion requires a central configuration; large-scale homogeneity selects a restricted part of that set; spatial flatness fixes the Newtonian scale energy to its critical null value; and a scale-and-structure arrow is normally associated with a special smooth state, {i.e., a perturbed expanding homogeneous central configuration}~\cite{10.1093/qmath/os-5.1.64,Ellis_2013,Ellis_2015,Battye_2003,Penrose1979}.  Within the distinguished fracton model, the analogues of these requirements are tied to one reduced attractor.  Its fixed-point equation selects the central configuration and the homogeneous regular or coarse-grained profile; its scale equation selects both \(R\propto t^{2/3}\) and the zero-energy Newtonian dual; and its bidirectional Janus-point dynamics generates scale and complexity arrows without imposing a low-complexity condition at one temporal boundary.  Thus, for the analytic basins and numerical simulations actually studied, properties that are separately fine-tuned in the Newtonian description become correlated late-time consequences.  This is the sense---and the limited sense---in which the toy model resolves the cosmological fine-tunings posed in the main text.
	
	This fracton \(N\)-body system is not proposed as a replacement for relativistic cosmology. It evolves particles on a fixed Euclidean space with a distinguished time; it has no dynamical spacetime metric, causal light cones or relativistic constraint equations, and it omits radiation, pressure, the observed matter sectors, a cosmological constant, quantum fluctuations and their transfer to observables.  Its scale \(R\) is a many-particle size, not an FLRW metric scale factor, and its homogeneous density is a particle distribution, not an exact spacetime symmetry.  Standard FLRW cosmology, by contrast, is a sector of Einstein gravity with a causal perturbation theory and an {empirically estimated} matter content~\cite{dodelson2020modern,Planck:2018vyg}.  All comparisons made here are consequently intended to illuminate broad dynamical and conceptual mechanisms, not to assert precise physical equivalences.
	
	Viewed through this conceptual lens, the present mechanism and cosmic inflation share a central idea: a strong attractor removes sensitivity to a broad set of initial conditions in the macroscopic variables.  In both cases, trajectories are focused toward a highly uniform background, so apparent fine-tunings can be reinterpreted as attractor data~\cite{guth1981inflationary,Linde1983,RemmenCarroll2013}. This shifts the fine-tuning question from why the universe began on a special macroscopic trajectory to whether the dynamics possesses an appropriate attractor with a sufficiently broad basin, an explanatory strategy shared by inflation and our model despite their very different microscopic content. Inflation redshifts pre-existing curvature, shear, and matter inhomogeneity; at the same time, adiabatic curvature perturbations stretched beyond the Hubble radius remain conserved, and later seed structure formation~\cite{Wald1983,MukhanovFeldmanBrandenberger1992,dodelson2020modern}.  The homogenizing inflationary regime must end, and its energy must be transferred into the hot plasma through reheating before the later radiation- and matter-dominated epochs in which gravitational instability builds structure~\cite{AllahverdiEtAl2010}.  In this sense a viable inflationary history requires an exit from the homogenizing attractor before a distinct clustering epoch. 
    
    Here too, the fracton model suggests a different logical possibility.  Its Machian microscopic dynamics does not permit an isolated pair to support the long-lived bound motion seen inside the many-body state.  Bound two-body subsystems of this kind therefore require a suitable collective background and a separation of scales.  In the simulations they appear as the global motion approaches the conjectured homothetic clustered attractor, where approximately stable local structures coexist with large-scale homogeneity.  The data are thus consistent with the same microscopic dynamics both erasing sensitivity to initial {conditions} and enabling local subsystem dynamics, rather than being explicitly switched off before clustering begins.  There are partial precedents for pieces of this picture: fracton-gravity proposals make inertia depend on the surrounding matter distribution~\cite{Pretko2017c}; relational Janus-point models produce increasingly autonomous bound subsystems as complexity grows~\cite{BarbourKoslowskiMercati_PhysRevLett.113.181101}; and quantum graphity~\cite{KonopkaMarkopoulouSeverini2008} posits a highly connected phase without ordinary locality followed by a low-energy phase in which locality and subsystems emerge.
	
	We close with several concrete questions.  First, the intracluster dynamics should be derived systematically from the microscopic equations.  It would be interesting to determine whether any controlled large-\(N\) or scale-separated limit yields approximately Newtonian laws for internal cluster motion; the present results do not establish such a limit.  The same multiscale analysis should quantify whether local binding and internal motion feed back on the evolution of the global scale, providing a particle-model analogue of the backreaction problem in inhomogeneous cosmology~\cite{Buchert2000}.  The most immediate mathematical step is to prove, weaken or delimit \cref{conj:renormalized_dynamics}, including its asserted scale separation and effective unequal-mass Hamiltonian.
	
	Second, our analysis has emphasized the late-time clustered state rather than classified the full approach to it.  A natural test is whether typical large-\(N\) histories separate into two dynamical regimes: an initial homogenizing stage, giving an analog to early-universe smoothness, followed by a regime of local collapse and cluster formation similar to gravitational instability and structure formation.  Establishing such a sequence requires time-resolved observables and basin studies; it should not be inferred from the late-time data alone. In particular, it remains open whether generic trajectories reach, in finite time, a neighbourhood in which the local stability analysis applies, and hence approach a regular or clustered fixed point asymptotically in each time direction.  Beyond this toy framework, one may ask whether a related attractor can be embedded in relativistic cosmology, and whether continuum and tensor-gauge descriptions of fractons can connect to an Einstein-gravity regime without inheriting the nonrelativistic limitations emphasized above~\cite{Pretko2017c,JainJensen2022,Bidussi2023,SadkiPrakashSondhi_ContinuumFractons_2025}.
	
	Finally, the reduction itself can be made more intrinsic.  The variables in \cref{eq:shape_coordinates,eq:shape_t} remove the overall position scale but retain a distinguished clock~\cite{bravetti2022scalingsymmetriescontactreduction}.  A canonical gauge theory on shape space obtained through symplectic reduction~\cite{MarsdenWeinstein1974,GomesGrybKoslowski2011,Barbour2012}, or a full quotient by the independent position and momentum scalings, would provide a more complete relational formulation.

\acknowledgments
The authors thank Subir Sarkar, James Binney, and Julian Barbour for helpful discussions. A.P. is grateful to St. John's College and Rudolf Peierls Centre for Theoretical Physics at the University of Oxford for their hospitality during the completion of this work. D.P.J. acknowledges support from ICTP through the Associates Programme (2022--2027).  S.L.S. acknowledges support from Leverhulme International Professorship grant LIP-202-014 and EPSRC grant EP\slash X030881\slash 1.

\section*{Statement on the Use of Generative AI}
	Generative AI tools (OpenAI's ChatGPT and Codex) were used, for the most part, as assistive tools for editorial polishing; exploratory assistance with algebraic derivations and proof checking; generating, debugging, and documenting code for numerical analyses and plots derived from the resulting data; and supplementary cross-checks of calculations, references, and manuscript consistency. These uses were predominantly labour- and time-saving.

	We also wish to acknowledge two AI-assisted inputs that played a more substantive role in the scientific development of this work. AI-assisted literature exploration brought the Riesz-gas and equilibrium-measure literature to our attention, in particular the asymptotic-distribution formulas used in \cref{sec:regular_asymptotic_distributions,app:LargeNDistributions}. AI-assisted exploration also revealed a possibility that we had not anticipated: that a conditional local-stability analysis could be carried out. The analysis identified the restricted assumptions under which this result could be obtained and then worked through the corresponding derivation presented in \cref{sec:conditional_local_stability,app:StabilityDetails}. In our judgment, these two materially important results were unlikely to have emerged from our usual unaided research process.

	All AI outputs were treated as unverified suggestions rather than as evidence or independent validation. The authors independently located and checked the primary literature; reconstructed and verified all derivations and proofs; reviewed and tested all code, calculations, numerical results, and plotting outputs; and made all scientific and interpretive decisions. The authors take full responsibility for the accuracy, originality, and integrity of the manuscript.

    \appendix
\crefalias{section}{appendix}
\crefalias{subsection}{appendix}

	\section{Exact Three-dimensional Two-Particle Motion}
	\label{app:two_particle_3d}
	This appendix supplies the full two-particle calculation summarized in
	\cref{sec:Warmup,Scale_inv}.  We first treat the general
	\((\alpha,\beta)\) Hamiltonian and then recover the normalization used in
	the warm-up.
	
	\subsection{Relative Dynamics and Conserved Quantities}
	Set
	\(\mathbf r=\mathbf x_1-\mathbf x_2\),
	\(\mathbf p=\mathbf p_1-\mathbf p_2\), and \(r=|\mathbf r|\).  For two
	particles, \cref{eq:H_alpha_beta} and its relative equations are
	\begin{align}
		H
		&=|\mathbf p|^\alpha r^\beta ,\qquad
		\dot{\mathbf r}
		=2\alpha|\mathbf p|^{\alpha-2}\mathbf p\,r^\beta,\qquad 
		\dot{\mathbf p}
		=-2\beta|\mathbf p|^\alpha r^{\beta-2}\mathbf r .
		\label{eq:alphabeta_two_body_3d_eom}
	\end{align}
	We have a conserved fracton energy \(E_f>0\) and \(\alpha\ne0\), and work on a smooth patch away
	from any singular sets.  The center of position is fixed, and
	rotational invariance conserves
	\begin{equation}
		\boldsymbol\ell:=\mathbf r\times\mathbf p .
		\label{eq:alphabeta_two_body_angular_momentum}
	\end{equation}
	Thus \(\mathbf r(t)\) remains in the plane orthogonal to
\(\boldsymbol\ell\) whenever \(\boldsymbol\ell\neq0\), while
\(\boldsymbol\ell=0\) defines an invariant radial submanifold.
	
	\subsection{The Radial Submanifold and its Newtonian Dual}
	On \(\boldsymbol\ell=0\), the direction of \(\mathbf r\) is fixed.
	Introduce an independent Newtonian phase space with positions
	\(\mathbf x_a\) and canonical momenta \(\boldsymbol\pi_a\), and define
	\begin{equation}
		\kappa:=\frac{2\beta}{\alpha},\qquad
		g_{\alpha,\beta}(E_f)
		:=\alpha^2E_f^{\frac{2(\alpha-1)}{\alpha}} .
		\label{eq:alphabeta_dual_parameters}
	\end{equation}
	The pairwise Newtonian Hamiltonian
	\begin{equation}
		H^{\rm dual}(\mathbf x,\boldsymbol\pi;E_f)
		=\frac12\sum_{a=1}^2|\boldsymbol\pi_a|^2
		-g_{\alpha,\beta}(E_f)
		|\mathbf x_1-\mathbf x_2|^\kappa
		\label{eq:alphabeta_radial_newtonian_hamiltonian}
	\end{equation}
	generates precisely the radial fracton position-space trajectories when
	\(\boldsymbol\pi_a=\dot{\mathbf x}_a\).  Indeed, the fracton energy
	identity gives
	\begin{equation}
		\frac12\sum_{a=1}^2|\dot{\mathbf x}_a|^2
		=g_{\alpha,\beta}(E_f)r^\kappa ,
		\qquad\Longrightarrow\qquad
		E_{N}=0 ,
		\label{eq:alphabeta_radial_newtonian_energy}
	\end{equation}
	while direct differentiation of the radial fracton velocity gives
	\begin{equation}
		\ddot{\mathbf x}_1
		=2\alpha\beta
		E_f^{\frac{2(\alpha-1)}{\alpha}}
		r^{\frac{2\beta}{\alpha}-2}\mathbf r
		=g_{\alpha,\beta}(E_f)\kappa
		r^{\kappa-2}\mathbf r,
		\qquad
		\ddot{\mathbf x}_2=-\ddot{\mathbf x}_1 ,
		\label{eq:alphabeta_radial_newtonian_force}
	\end{equation}
	which is Hamilton's equation from
	\cref{eq:alphabeta_radial_newtonian_hamiltonian}.
	
	After the irrelevant center-of-mass translation and uniform motion are
	removed, the matching Newtonian initial data obey
	\begin{equation}
		\boldsymbol\pi_1+\boldsymbol\pi_2=0,\qquad
		\mathbf r\times(\boldsymbol\pi_1-\boldsymbol\pi_2)=0,\qquad
		E^{\rm dual}=0 .
		\label{eq:alphabeta_dual_initial_data}
	\end{equation}
	The last two conditions are genuine restrictions in the Newtonian
	problem: the data must be radial and lie on its zero-energy surface.  In
	the fracton problem, by contrast, the zero Newtonian energy follows
	identically from the freely chosen conserved \(E_f\).  The coupling
	\(g_{\alpha,\beta}(E_f)\) is likewise generated dynamically by \(E_f\),
	rather than inserted as a parameter of the fracton Hamiltonian.
	
	This does not generally extend to a nonradial orbit.  For example,
	the mechanical relative angular momentum built from the fracton
	velocity is
	\begin{equation}
		\mathbf r\times\dot{\mathbf r}
		=2\alpha E_f^{\frac{\alpha-2}{\alpha}}
		r^{\frac{2\beta}{\alpha}}\boldsymbol\ell .
		\label{eq:alphabeta_mechanical_angular_momentum}
	\end{equation}
	For a nontrivial interaction \(\beta\ne0\), this is not conserved along
	a generic nonradial orbit, whereas a position-only central Newtonian
	Hamiltonian must conserve it.  Thus
	\cref{eq:alphabeta_radial_newtonian_hamiltonian} is an exact dual of the
	radial invariant dynamics, not of the full three-dimensional fracton
	phase space.  The trivial \(\beta=0\) case reduces instead to free
	motion.
	
	\subsection{Exact Angular Motion and Asymptotic Aadialization}
	Let \(p_r=\mathbf p\cdot\widehat{\mathbf r}\),
	\(\ell=|\boldsymbol\ell|\), and
	\begin{equation}
		\nu:=1-\frac{\beta}{\alpha}
		=\frac{\alpha-\beta}{\alpha}.
	\end{equation}
	Energy and angular-momentum conservation imply
	\begin{equation}
		|\mathbf p|
		=E_f^{1/\alpha}r^{-\beta/\alpha},
		\qquad
		p_r^2
		=E_f^{2/\alpha}r^{-2\beta/\alpha}
		-\frac{\ell^2}{r^2}.
		\label{eq:alphabeta_two_body_radial_momentum}
	\end{equation}
	For \(\alpha\ne\beta\), choose \(t_*\) at the radial turning point.
	Integrating first the radial and then the angular equation gives
	\begin{align}
		r(t)^{2\nu}
		&=E_f^{-2/\alpha}
		\left[
		\ell^2+4(\alpha-\beta)^2E_f^2(t-t_*)^2
		\right],
		\label{eq:alphabeta_two_body_exact_radius}\\
		\phi(t)
		&=\phi_*+\frac{\alpha}{\alpha-\beta}
		\arctan\left[
		\frac{2(\alpha-\beta)E_f(t-t_*)}{\ell}
		\right],
		\qquad \ell\ne0 .
		\label{eq:alphabeta_two_body_exact_angle}
	\end{align}
	The turning point is a minimum separation for \(\nu>0\) and a maximum separation
	for \(\nu<0\).  On setting \(\ell=0\), one obtains the exact radial
	solution
	\begin{equation}
		r(t)
		=\left|
		2(\alpha-\beta)
		E_f^{\frac{\alpha-1}{\alpha}}(t-t_*)
		\right|^{\frac{\alpha}{\alpha-\beta}},
		\label{eq:alphabeta_two_body_3d_radial_solution}
	\end{equation}
	which is the one-dimensional result \eqref{eq:two_particle_alphabeta}.
	
	The exact solution also proves the asymptotic statement used in the
	main text.  If \(\chi\) is the angle between \(\mathbf p\) and
	\(\mathbf r\), then
	\begin{align}
		\sin\chi
		&=\frac{\ell}
		{\sqrt{\ell^2+
				4(\alpha-\beta)^2E_f^2(t-t_*)^2}},\qquad
		\cos\chi
		=\frac{2(\alpha-\beta)E_f(t-t_*)}
		{\sqrt{\ell^2+
				4(\alpha-\beta)^2E_f^2(t-t_*)^2}} .
		\label{eq:alphabeta_two_body_radial_alignment}
	\end{align}
	Hence every finite-\(\ell\) solution with \(\alpha\ne\beta\) approaches
a radial fixed point of the orientation dynamics and, since
\(\dot{\mathbf r}\propto\mathbf p\), the trajectory itself becomes
asymptotically radial, with
	\begin{equation}
		r(t)\sim
		\left|
		2(\alpha-\beta)
		E_f^{\frac{\alpha-1}{\alpha}}t
		\right|^{\frac{\alpha}{\alpha-\beta}} .
		\label{eq:alphabeta_two_body_asymptotic_radius}
	\end{equation}
	The full nonradial curve is not a trajectory of
	\cref{eq:alphabeta_radial_newtonian_hamiltonian}; only its leading
	asymptotic radial trajectory is reproduced by choosing the corresponding
	fine-tuned data \eqref{eq:alphabeta_dual_initial_data}.
	
	The case \(\alpha=\beta \neq 0\) is exceptional.  Here
	\(r|\mathbf p|=E_f^{1/\alpha}\), so \(\chi\) is constant rather than
	approaching a radial value.  The exact orbit is a logarithmic spiral. Choosing an arbitrary reference
time \(t_0\), with \(r_0=r(t_0)\) and \(\phi_0=\phi(t_0)\), one finds
	\begin{align}
		r(t)
		&=r_0\exp\left[
		2\alpha E_f^{\frac{\alpha-1}{\alpha}}
		\cos\chi\,(t-t_0)
		\right],
		\qquad
		\phi(t)
		=\phi_0+
		2\alpha\ell E_f^{\frac{\alpha-2}{\alpha}}(t-t_0).
		\label{eq:alphabeta_equal_exact_solution}
	\end{align}
	Only its \(\ell=0\) submanifold is radial and therefore described by the
	dual Newtonian Hamiltonian above.
	
	\subsection{Einstein-de Sitter Line}
	On the EdS line \(\alpha=-2\beta\) of
	\cref{eq:EdS_branch}, one has
	\(\nu=3/2\), \(\kappa=-1\), and
	\begin{align}
		r(t)^3
		&=E_f^{-2/\alpha}
		\left[\ell^2+9\alpha^2E_f^2(t-t_*)^2\right],
	\qquad
		\phi(t)
		=\phi_*+\frac23
		\arctan\left[
		\frac{3\alpha E_f(t-t_*)}{\ell}
		\right].
		\label{eq:alphabeta_two_body_eds_exact}
	\end{align}
	Every finite-\(\ell\) trajectory therefore satisfies
\(r(t)\to\infty\) as \(|t|\to\infty\), with
\(r(t)\propto|t|^{2/3}\), and is therefore unbounded. On the radial submanifold, the
	dual interaction is exactly gravitational,
	\begin{equation}
		H^{\rm dual}
		=\frac12\sum_{a=1}^2|\boldsymbol\pi_a|^2
		-\frac{G_{\rm eff}^{(2)}}{r},
		\qquad
		G_{\rm eff}^{(2)}
		=\alpha^2E_f^{\frac{2(\alpha-1)}{\alpha}},
		\qquad E_{N}=0 .
		\label{eq:alphabeta_two_body_eds_dual}
	\end{equation}
	
	Finally, the warm-up Hamiltonian with \(K(r)=1/r\) is
	\(H=\tfrac12|\mathbf p|^2/r\), whereas
	\cref{eq:H_alpha_beta} at \((\alpha,\beta)=(2,-1)\) has no factor of
	\(1/2\).  Accounting for the corresponding energy and time
	normalizations gives
	\begin{align}
		r(t)^3
		&=\frac{\ell^2}{2E_f}+18E_f(t-t_*)^2,
		\qquad
		\phi(t)
		=\phi_*+\frac23
		\arctan\left[\frac{6E_f(t-t_*)}{\ell}\right],
		\qquad \ell\ne0 ,
		\label{eq:two_fracton_3d_exact_solution}
	\end{align}
	and its radial Newtonian dual is
	\begin{equation}
		H^{\rm dual}
		=\frac12\sum_{a=1}^2|\boldsymbol\pi_a|^2-\frac{2E_f}{r}=0 .
	\end{equation}
	This recovers all the three-dimensional claims quoted in
	\cref{sec:Warmup}.

\subsection{Universal Angular Sweep}
\label{ang_sweep}

For \(\alpha\neq\beta\) and \(\ell\neq0\), the exact solution
\eqref{eq:alphabeta_two_body_exact_angle} immediately gives the total
angular sweep between the two asymptotic ends of the trajectory,
\begin{equation}
	|\Delta\phi|
	:=\left|\phi(+\infty)-\phi(-\infty)\right|
	=\left|
	\frac{\alpha\pi}{\alpha-\beta}
	\right|.
	\label{eq:alphabeta_total_angular_sweep}
\end{equation}
Remarkably, \(|\Delta\phi|\) is completely determined by the Hamiltonian
exponents \((\alpha,\beta)\), and is independent of both the conserved
energy \(E_f\) and the canonical angular momentum \(\ell\). Consequently,
every nonradial two-particle trajectory generated by the Hamiltonian
\eqref{eq:H_alpha_beta} undergoes the same total angular sweep.

Equation~\eqref{eq:alphabeta_total_angular_sweep} further implies that a
complete angular revolution occurs if and only if
\begin{equation}
	\left|
	\frac{\alpha}{\alpha-\beta}
	\right|
	\ge 2.
	\label{eq:alphabeta_complete_revolution}
\end{equation}
For the Einstein-de Sitter line \(\alpha=-2\beta\), one has
\begin{equation}
	|\Delta\phi|=\frac{2\pi}{3},
\end{equation}
so every nonradial trajectory undergoes a universal \(120^\circ\) total
angular sweep and never completes a full revolution.
	
\section{Central Configurations and Shape Potentials}
\label{app:CentralConfigs}

\subsection{Critical Points of the Constrained Shape Potential}
The shape potential is
\begin{equation}
	\Xi_q =
	\begin{cases}
		\displaystyle \sum_{i<j} |\mathbf{s}_i-\mathbf{s}_j|^q, & q\neq0,\\[1mm]
		\displaystyle \sum_{i<j} \log|\mathbf{s}_i-\mathbf{s}_j|, & q=0,
	\end{cases}
	\label{appeq:shape_potential}
\end{equation}
The logarithmic definition is the renormalized \(q\to0\) limit
\begin{equation}
	\Xi_0
	=\lim_{q\to0}
	\frac{\displaystyle\sum_{i<j}|\mathbf s_i-\mathbf s_j|^q
		-\binom N2}{q}.
	\label{appeq:log_shape_potential_limit}
\end{equation}
The subtracted term is independent of the configuration, but keeping it
explicit makes the \(q=0\) variational problem the smooth limit of the
power-law branch.  The shape coordinates obey
\begin{equation}
	\sum_{i=1}^N |\mathbf{s}_i|^2=1,\qquad
	\sum_{i=1}^N \mathbf{s}_i=0.
	\label{appeq:shape_coordinate_constraint}
\end{equation}
Let \(c_q=q\) for \(q\neq0\) and \(c_0=1\).  Then
\begin{equation}
	\nabla_{\mathbf s_i}\Xi_q=c_q\,\mathbf H_q^i,\qquad
	\mathbf H_q^i=\sum_{j\neq i}
	\frac{\mathbf s_i-\mathbf s_j}{|\mathbf s_i-\mathbf s_j|^{2-q}} .
\end{equation}
The logarithmic case has the same formula with \(q=0\), because
\(\nabla_{\mathbf r}\log|\mathbf r|=\mathbf r/|\mathbf r|^2\).

Introduce Lagrange multipliers by
\begin{equation}
	\mathcal L=\Xi_q
	-\frac{\lambda}{2}\left(\sum_i|\mathbf s_i|^2-1\right)
	-\mathbf b\cdot\sum_i\mathbf s_i . \label{eq:Xi_lagrange}
\end{equation}
Stationarity gives
\begin{equation}
	c_q\mathbf H_q^i=\lambda\mathbf s_i+\mathbf b.
\end{equation}
Summing over \(i\) and using \(\sum_i\mathbf s_i=0\) gives \(\mathbf b=0\).
Taking the scalar product with \(\mathbf s_i\), summing over \(i\), and
using \(\sum_i|\mathbf s_i|^2=1\) gives
\begin{equation}
	\lambda=c_q S_q,\qquad
	S_q:=\sum_i\mathbf s_i\cdot\mathbf H_q^i
	=\sum_{i<j}|\mathbf s_i-\mathbf s_j|^q .
\end{equation}
Thus every constrained critical point satisfies
\begin{equation}
	 {\mathbf H_q^i=S_q\mathbf s_i,\qquad i=1,\ldots,N,}
	\label{appeq:central_config}
\end{equation}
which is the central-configuration equation used in the main text. \cref{eq:Xi_lagrange} reduces to
\begin{equation}
	\mathcal L=\Xi_q
	-\frac{c_q S_q}{2}\left(\sum_i|\mathbf s_i|^2-1\right)
	. \label{appeq:Xi_lagrange_fixed}
\end{equation}

\subsection{The Fixed Second Moment Identity}
For any centered shape, the pair distances obey
\begin{equation}
	\sum_{i<j}|\mathbf s_i-\mathbf s_j|^2
	=N\sum_i|\mathbf s_i|^2-\left|\sum_i\mathbf s_i\right|^2=N .
	\label{appeq:S2identity}
\end{equation}
This identity is the finite-\(N\) origin of the harmonic constraint in the
continuum Riesz problem.  It also explains why \(\Xi_2=N\) is flat on the
shape sphere.

\subsection{Unequal-mass Central Configurations}
\label{app:UnequalMassCentralConfigs}
The preceding construction has a direct unequal-mass extension that is
useful for the renormalized configurations appearing in
\cref{mass_central_config}.  Let \(m_i>0\), let \(\mathbf P_i\) be the
canonical momentum of the \(i\)-th effective particle, and define its
specific momentum by \(\boldsymbol\pi_i=\mathbf P_i/m_i\).  The natural
mass-weighted version of \cref{eq:H_alpha_beta} is
\begin{equation}
	H^{(m)}_{(\alpha,\beta)}
	=\sum_{i<j}m_im_j
	\left|\boldsymbol\pi_i-\boldsymbol\pi_j\right|^\alpha
	\left|\mathbf x_i-\mathbf x_j\right|^\beta .
	\label{appeq:mass_hamiltonian}
\end{equation}
When the effective particles represent clusters of identical microscopic
particles, the factor \(m_im_j\) counts the number of inter-cluster pairs.
Writing \(M=\sum_i m_i\), the corresponding centered and scale-normalized
variables may be chosen as
\begin{align}
	\mathbf X_{\rm C}
	&=\frac{1}{M}\sum_i m_i\mathbf x_i,
	&
	\boldsymbol\pi_{\rm C}
	&=\frac{1}{M}\sum_i\mathbf P_i,
	&
	R^2
	&=\sum_i m_i|\mathbf x_i-\mathbf X_{\rm C}|^2,
	\notag\\
	\mathbf s_i
	&=\frac{\mathbf x_i-\mathbf X_{\rm C}}{R},
	&
	\mathbf u_i
	&=R^{\beta/\alpha}
	(\boldsymbol\pi_i-\boldsymbol\pi_{\rm C}).
	\label{appeq:mass_shape_variables}
\end{align}
They obey the mass-weighted constraints
\begin{equation}
	\sum_i m_i\mathbf s_i=0,\qquad
	\sum_i m_i|\mathbf s_i|^2=1,\qquad
	\sum_i m_i\mathbf u_i=0.
	\label{appeq:mass_shape_constraints}
\end{equation}

The shape potential and central-configuration force are replaced by
\begin{align}
	\Xi_{q,m}
	&=
	\begin{cases}
		\displaystyle\sum_{i<j}m_im_j|\mathbf s_i-\mathbf s_j|^q,
		& q\neq0,\\[1mm]
		\displaystyle\sum_{i<j}m_im_j\log|\mathbf s_i-\mathbf s_j|,
		& q=0,
	\end{cases}
	\label{appeq:mass_shape_potential}\\
	\mathbf H_{q,m}^i
	&=\sum_{j\neq i}m_j
	|\mathbf s_i-\mathbf s_j|^{q-2}
	(\mathbf s_i-\mathbf s_j),
	\notag\\
	S_{q,m}
	&:=\sum_i m_i\mathbf s_i\cdot\mathbf H_{q,m}^i
	=\sum_{i<j}m_im_j|\mathbf s_i-\mathbf s_j|^q .
	\label{appeq:mass_HS}
\end{align}
Repeating the Lagrange-multiplier argument with the constraints in
\cref{appeq:mass_shape_constraints} gives
\begin{equation}
	 {\mathbf H_{q,m}^i=S_{q,m}\mathbf s_i.}
	\label{appeq:mass_central_config}
\end{equation}
Thus the radial fixed points of \cref{appeq:mass_hamiltonian} are precisely
the unequal-mass central configurations of the weighted shape potential.
For \(q=-1\), \cref{appeq:mass_central_config} is the renormalized
Newtonian central-configuration equation \eqref{mass_central_config}, with
\(S_{-1,m}=\widetilde S_{-1}\).

For completeness, the fixed-second-moment identity also has the weighted
form
\begin{equation}
	\sum_{i<j}m_im_j|\mathbf s_i-\mathbf s_j|^2
	=M\sum_i m_i|\mathbf s_i|^2
	-\left|\sum_i m_i\mathbf s_i\right|^2
	=M .
	\label{appeq:mass_S2_identity}
\end{equation}
Consequently the harmonic-constraint interpretation and the flatness of
the \(q=2\) potential carry over with \(N\) replaced by the total mass
\(M\).

\section{Dictionary to Known Power-law Equilibrium Measures}
	\label{app:LargeNDistributions}

	Here we explain how known results for attractive-repulsive power-law
	energies give \cref{thm:distribution}.  Only two translations are needed:
	we express the particle cloud in order-one coordinates and match our shape
	energy to the normalization and sign conventions of Frank and
	Matzke~\cite[Theorems~1 and 2 and Lemma~10]{FrankMatzke2025}.  Their work
	proves the global character and uniqueness of the profiles quoted below;
	we do not repeat that analysis here.

Because a typical shape coordinate is of order \(N^{-1/2}\), introduce the
	order-one coordinates \(\mathbf x_i=\sqrt N\,\mathbf s_i\).  The normalized
	particle density is
	\begin{equation}
		\mathbf x_i=\sqrt N\,\mathbf s_i,\qquad
		\mu_N=\frac1N\sum_{i=1}^N\delta(\mathbf x- \mathbf x_i).
		\label{appeq:empirical_measure}
	\end{equation}
	Thus \(\mu_N\) places weight \(1/N\) at each particle.  The shape constraints
	simply say that this density has unit total weight, center at the origin,
	and unit mean-square radius:
	\begin{equation}
		\int_{\R^3}\dd\mu_N=1,\qquad
		\int_{\R^3}\mathbf x\,\dd\mu_N=0,\qquad
		\int_{\R^3}|\mathbf x|^2\dd\mu_N=1.
		\label{appeq:mu_constraints}
	\end{equation}
	For a continuum cloud described by \(\mu\), denote the average power-law pair
	interaction by
	\begin{equation}
		I_q[\mu]=\iint |\mathbf x-\mathbf y|^q
		\dd\mu(\mathbf x)\dd\mu(\mathbf y)\quad(q\neq0),
		\qquad
		I_0[\mu]=\iint\log|\mathbf x-\mathbf y|
		\dd\mu(\mathbf x)\dd\mu(\mathbf y).
		\label{appeq:Iq_definition}
	\end{equation}
	For \(N\) particles we omit \(i=j\), since a particle does not interact
	with itself, and define
	\begin{align}
		I_q^{\mathrm{off}}[\mu_N]
		&:=\frac1{N^2}\sum_{i\neq j}
		|\mathbf x_i-\mathbf x_j|^q,
		&&q\neq0, \notag\\
		I_0^{\mathrm{off}}[\mu_N]
		&:=\frac1{N^2}\sum_{i\neq j}
		\log|\mathbf x_i-\mathbf x_j|.
		\label{appeq:empirical_off_diagonal_energy}
	\end{align}
	In this notation,
	\begin{align}
		\Xi_q
		&=\frac12N^{2-q/2}I_q^{\mathrm{off}}[\mu_N]
		\quad(q\neq0), \notag\\
		\Xi_0
		&=\frac12N^2I_0^{\mathrm{off}}[\mu_N]
		-\frac14N(N-1)\log N.
		\label{appeq:finiteN_energy_scaling}
	\end{align}
	The last term at \(q=0\) is independent of the positions of the particles.
Therefore, maximizing or minimizing \(\Xi_q\) is exactly the same as
	maximizing or minimizing the corresponding empirical pair energy.

	The key simplification is that the quadratic pair interaction cannot
	distinguish two allowed shapes.  Indeed, the center and size constraints
	fix it to the same value, \(I_2=2\), for every cloud:
	\begin{equation}
		\frac12 I_2[\mu]
		=\int|\mathbf x|^2\dd\mu(\mathbf x)
		-\left|\int\mathbf x\,\dd\mu(\mathbf x)\right|^2
		=1.
		\label{appeq:quadratic_pair_identity}
	\end{equation}
	Frank and Matzke classify the global minima of the two-power energy
	\begin{equation}
		\mathcal F_{a,b}[\mu]
		=\frac12\iint\left(
		\frac{|\mathbf x-\mathbf y|^a}{a}
		-\frac{|\mathbf x-\mathbf y|^b}{b}\right)
		\dd\mu(\mathbf x)\dd\mu(\mathbf y),
		\label{appeq:FrankMatzke_energy}
	\end{equation}
	where \(r^b/b\) is replaced by \(\log r\) when \(b=0\).  Since the
	quadratic part is fixed on our shape space, their energy reduces there to
	\begin{align}
		\mathcal F_{2,q}[\mu]
		&=\frac12-\frac{1}{2q}I_q[\mu],
		&&q\neq0,\quad q<2,
		\label{appeq:F2q_dictionary}\\
		\mathcal F_{2,0}[\mu]
		&=\frac12-\frac12I_0[\mu],
		\label{appeq:F20_dictionary}\\
		\mathcal F_{q,2}[\mu]
		&=\frac{1}{2q}I_q[\mu]-\frac12,
		&&q>2.
		\label{appeq:Fq2_dictionary}
	\end{align}
	It follows directly that minimizing \(\mathcal F_{2,q}\) means minimizing
	\(I_q\) for \(q<0\) and maximizing \(I_q\) for \(0\le q<2\);
	minimizing \(\mathcal F_{q,2}\) is equivalent to minimizing \(I_q\) for
	\(q>2\).  In Ref.~\cite{FrankMatzke2025} the cloud is also free to choose
	its overall radius, whereas our shape convention fixes
	\(\langle|\mathbf x|^2\rangle=1\).  This difference affects only the scale,
	not the profile.  To see this, rescale any unit-size profile as
	\(\mathbf x\to\lambda\mathbf x\): its \(q\)-power energy scales as
	\(\lambda^q\), while its quadratic energy scales as \(\lambda^2\) (and the
	logarithmic energy shifts by \(\log\lambda\)).  Minimization over
	\(\lambda\) therefore chooses the preferred radius of each profile; the
	remaining comparison between profiles is precisely
	\eqref{appeq:F2q_dictionary}--\eqref{appeq:Fq2_dictionary}.  We may thus
	use their profile and simply recenter and rescale it to unit mean-square
	radius.

	In three dimensions, Theorem~2 of Ref.~\cite{FrankMatzke2025}, applied with
	\((a,b)=(2,q)\) and rescaled to our unit-mean-square-radius convention, gives for
	\(-3<q<1\)
	\begin{equation}
		\dd\mu_q(\mathbf x)
		=\frac{(R_q^2-|\mathbf x|^2)^{-(q+1)/2}}
		{2\pi R_q^{\,2-q}B\!\left(\frac32,\frac{1-q}{2}\right)}
		\mathbf 1_{\{|\mathbf x|<R_q\}}\dd^3x,
		\qquad
		R_q^2=\frac{4-q}{3}.
		\label{appeq:classified_ball}
	\end{equation}
	The density exponent in the cited result becomes
	\((2-q-d)/2=-(q+1)/2\) for \(d=3\).  Its mean-square radius is
	\(3R_q^2/(4-q)\), so setting this equal to one gives the stated \(R_q\).
	The logarithmic case \(q=0\) is included by continuity.

		Theorem~1 of Ref.~\cite{FrankMatzke2025} instead gives a uniform shell:
	take \((a,b)=(2,q)\) for \(1\le q<2\), and \((a,b)=(q,2)\) for
	\(2<q<4\).  Unit mean-square radius places the shell at \(|\mathbf x|=1\):
	\begin{equation}
		\dd\mu_q=\frac{\dd\Omega}{4\pi},\qquad |\mathbf x|=1.
		\label{appeq:classified_shell}
	\end{equation}
	At \(q=2\), \(I_2=2\) is constant on the constraint surface, so there is
	no distinguished extremum.  The endpoint \(q=4\) is likewise excluded from
	\cref{thm:distribution}: it is the non-unique exceptional case
	\((a,b)=(4,2)\) in the cited classification.

	Finally, express the result in the original shape coordinates.  If
	\(\rho_q(\mathbf x)\) is the normalized number density in the
	order-one coordinates, then the particle-number density in shape space is
	\begin{equation}
		\nu_{q,N}(\mathbf s)
		=N^{5/2}\rho_q(\sqrt N\,\mathbf s).
		\label{appeq:pushforward_density}
	\end{equation}
	For the shell, the same conversion gives
	\[
		\nu_{q,N}(\mathbf s)
		=\frac{N^2}{4\pi}\delta\!\left(|\mathbf s|-N^{-1/2}\right).
	\]
	These are the ball and shell formulas stated in \cref{thm:distribution}.

The cited theorems determine the global continuum extremum.  After the sign
	mapping above (and a harmless additive constant), the
	discrete-to-continuum limit for the full class of
	attractive-repulsive kernels used here follows from
	Ref.~\cite[Theorem~5 and Proposition~6]{CarazzatoPratelliTopaloglu2026}.
	In particular, a sequence of discrete global extrema whose energy per pair
	approaches the continuum value is precompact up to translations, and every
	macroscopic limit is a continuum extremum.  Our center-of-mass condition
	fixes the translation.  Since the continuum profile is unique, every smooth,
	macroscopic density measurement approaches its value for the profile above.
	This statement concerns global extrema only: it neither classifies other
	finite-\(N\) central configurations nor proves that the dynamics selects the
	global extremum.

\section{Details of the Local Stability Calculation}
\label{app:StabilityDetails}

We spell out the calculation used in \cref{thm:stability}.  Throughout this
appendix a bold symbol such as \(\mathbf z\) denotes the whole list
\((\mathbf z_1,\ldots,\mathbf z_N)\), and
\begin{equation}
	\langle\mathbf y,\mathbf z\rangle
	:=\sum_i\mathbf y_i\cdot\mathbf z_i .
\end{equation}
For pair differences we write
\[
\mathbf z_{ij}:=\mathbf z_i-\mathbf z_j,\qquad
\mathbf s_{ij}:=\mathbf s_i-\mathbf s_j,\qquad
s_{ij}:=|\mathbf s_{ij}| .
\]
The central configuration is assumed collision-free, so \(s_{ij}>0\).

\subsection{The Fixed Point and the Allowed Perturbations}
The autonomous shape equations used in the main text have the form
\begin{equation}
	\frac{\dd\mathbf s_i}{\dd\tau}
	=\alpha\left(\mathbf F_i-A\mathbf s_i\right),\qquad
	\frac{\dd\mathbf u_i}{\dd\tau}
	=-\beta\left(\mathbf G_i-A\mathbf u_i\right),
	\label{appeq:shape_equations_recalled}
\end{equation}
where
\begin{equation}
	\mathbf F_i=\sum_{j\ne i}
	|\mathbf s_{ij}|^\beta|\mathbf u_{ij}|^{\alpha-2}\mathbf u_{ij},
	\qquad
	\mathbf G_i=\sum_{j\ne i}
	|\mathbf u_{ij}|^\alpha|\mathbf s_{ij}|^{\beta-2}\mathbf s_{ij},
	\qquad
	A=\sum_i\mathbf s_i\cdot\mathbf F_i .
\end{equation}
At a radial fixed point
\begin{equation}
	\mathbf u_i=\gamma\mathbf s_i,\qquad \gamma\neq0.
\end{equation}
Let
\begin{equation}
	q:=\alpha+\beta,\qquad
	c:=\gamma|\gamma|^{\alpha-2}.
\end{equation}
Then
\begin{equation}
	\mathbf F_i=c\,\mathbf H_q^i,\qquad
	\mathbf H_q^i=\sum_{j\ne i}s_{ij}^{q-2}\mathbf s_{ij}.
\end{equation}
The fixed-point equation is the central-configuration equation
\begin{equation}
	\mathbf H_q^i=S_q\mathbf s_i,\qquad
	S_q=\sum_{i<j}s_{ij}^q. \label{appeq:stability_central_config}
\end{equation}
Consequently \(A=cS_q\) at the fixed point.

Now perturb the fixed point by
\begin{equation}
	\mathbf s_i\mapsto\mathbf s_i+\epsilon\mathbf a_i,\qquad
	\mathbf u_i\mapsto\gamma\mathbf s_i+\epsilon\mathbf b_i .
	\label{appeq:ab_perturbations}
\end{equation}
The position constraints \(\sum_i\mathbf s_i=0\) and
\(\sum_i|\mathbf s_i|^2=1\), and the momentum-centering constraint
\(\sum_i\mathbf u_i=0\), give the linear constraints
\begin{equation}
	\sum_i\mathbf a_i=0,\qquad
	\langle\mathbf s,\mathbf a\rangle=0,\qquad
	\sum_i\mathbf b_i=0 .
	\label{appeq:linear_constraints}
\end{equation}
We also work on a fixed value of the conserved shape energy
\[
H^0_{\alpha,\beta}
=\sum_{i<j}|\mathbf u_{ij}|^\alpha|\mathbf s_{ij}|^\beta .
\]
For one local term in the Hamiltonian, the first variation, using \(\mathbf u_{ij}=\gamma\mathbf s_{ij}\), is
\begin{equation}
	\delta h_{ij}
	=s_{ij}^{q-2}
	\left[
	\beta|\gamma|^\alpha\,\mathbf s_{ij}\cdot\mathbf a_{ij}
	+\alpha c\,\mathbf s_{ij}\cdot\mathbf b_{ij}
	\right].
\end{equation}
The useful pair-sum identity is
\begin{equation}
	\sum_{i<j}s_{ij}^{q-2}\mathbf s_{ij}\cdot\mathbf z_{ij}
	=\sum_i\mathbf z_i\cdot\mathbf H_q^i
	=S_q\langle\mathbf s,\mathbf z\rangle ,
	\label{appeq:pair_sum_identity}
\end{equation}
where the last equality uses the fixed point central configuration condition in \cref{appeq:stability_central_config}.  Applying
\cref{appeq:pair_sum_identity} once with \(\mathbf z=\mathbf a\) and once
with \(\mathbf z=\mathbf b\) gives
\begin{equation}
	\delta H^0_{\alpha,\beta}
	=\beta|\gamma|^\alpha S_q\langle\mathbf s,\mathbf a\rangle
	+\alpha cS_q\langle\mathbf s,\mathbf b\rangle .
\end{equation}
The first term vanishes by \(\langle\mathbf s,\mathbf a\rangle=0\).  Since
\(\alpha cS_q\neq0\) on the branch considered here, fixed energy gives
\begin{equation}
	 {\langle\mathbf s,\mathbf b\rangle=0 .}
	\label{appeq:fixed_energy_constraint}
\end{equation}

\paragraph{The pair operators:}
We next define the three linear operations used below.  For any centered
vector list \(\mathbf z\),
\begin{align}
	(L\mathbf z)_i&:=\sum_{j\ne i}s_{ij}^{q-2}\mathbf z_{ij},
	\label{appeq:L_definition}\\
	(P\mathbf z)_i&:=\sum_{j\ne i}s_{ij}^{q-4}\mathbf s_{ij}
	\left(\mathbf s_{ij}\cdot\mathbf z_{ij}\right),
	\label{appeq:P_definition}\\
	(I\mathbf z)_i&:=\mathbf z_i .
\end{align}
The operator \(L\) keeps the full pair displacement \(\mathbf z_{ij}\).
The operator \(P\) keeps only the component of \(\mathbf z_{ij}\) parallel
to the original pair direction \(\mathbf s_{ij}\). Their symmetry is a pairwise statement.  For any two centered lists
\(\mathbf y,\mathbf z\),
\begin{align}
	\langle\mathbf y,L\mathbf z\rangle
	&=\sum_{i<j}s_{ij}^{q-2}\mathbf y_{ij}\cdot\mathbf z_{ij}
	=\langle L\mathbf y,\mathbf z\rangle,
	\notag\\
	\langle\mathbf y,P\mathbf z\rangle
	&=\sum_{i<j}s_{ij}^{q-4}
	(\mathbf s_{ij}\cdot\mathbf y_{ij})
	(\mathbf s_{ij}\cdot\mathbf z_{ij})
	=\langle P\mathbf y,\mathbf z\rangle .
	\label{appeq:LP_symmetry}
\end{align}
In the first line, the two ordered contributions \((i,j)\) and \((j,i)\)
combine into the single unordered-pair term
\(\mathbf y_{ij}\cdot\mathbf z_{ij}\).  In the second line they combine
into the product of the two components parallel to \(\mathbf s_{ij}\).

The constrained-Hessian operator, with the overall factor \(q\) stripped
off, is
\begin{equation}
	Q_q:=L+(q-2)P-S_qI,
	\label{appeq:Q_definition}
\end{equation}
and the auxiliary operator appearing in the dynamics is
\begin{equation}
	C_q:=L-2P+S_qI.
	\label{appeq:C_definition}
\end{equation}
These operators too are symmetric with respect to
\(\langle\cdot,\cdot\rangle\).  The quadratic form of \(Q_q\) is
\begin{align}
	\mathcal{Q}_q[a]:=  \langle\mathbf a,Q_q\mathbf a\rangle
	={}&\sum_{i<j}s_{ij}^{q-2}
	\left[
	|\mathbf a_{ij}|^2
	+(q-2)\frac{(\mathbf s_{ij}\cdot\mathbf a_{ij})^2}{s_{ij}^2}
	\right]
	-S_q\sum_i|\mathbf a_i|^2 .
	\label{appeq:Q_quadratic_form}
\end{align}
Indeed, under
\(\mathbf s_i\mapsto\mathbf s_i+\epsilon\mathbf a_i\), the pair
difference changes as
\(\mathbf s_{ij}\mapsto\mathbf s_{ij}+\epsilon\mathbf a_{ij}\).
Differentiating a single pair term gives
\begin{equation}
	\delta^2|\mathbf s_{ij}|^q
	=qs_{ij}^{q-2}|\mathbf a_{ij}|^2
	+q(q-2)s_{ij}^{q-4}
	(\mathbf s_{ij}\cdot\mathbf a_{ij})^2,
\end{equation}
and summing this over pairs gives the ambient, straight-line second
variation.  The physical shape variation is constrained to the unit shape
sphere.  Since \(\langle\mathbf s,\mathbf a\rangle=0\), a curve that stays
on that sphere to second order is
\[
\mathbf s_i(\epsilon)
=\frac{\mathbf s_i+\epsilon\mathbf a_i}
{\sqrt{1+\epsilon^2\sum_k|\mathbf a_k|^2}}
=\mathbf s_i+\epsilon\mathbf a_i
-\frac{\epsilon^2}{2}
\left(\sum_k|\mathbf a_k|^2\right)\mathbf s_i
+O(\epsilon^3).
\]
At a central configuration,
\(\nabla_{\mathbf s_i}\Xi_q=qS_q\mathbf s_i\).  The second-order radial
correction in this normalized curve therefore contributes
\[
\sum_i\nabla_{\mathbf s_i}\Xi_q\cdot\mathbf s_i''(0)
=-qS_q\sum_i|\mathbf a_i|^2 .
\]
Equivalently, this is the Lagrange-multiplier contribution from the
constrained functional
\(\Xi_q-(qS_q/2)(\sum_i|\mathbf s_i|^2-1)\) shown in \cref{appeq:Xi_lagrange_fixed}.  Thus, with
\(\delta^2_{\mathrm{shape}}\) denoting the second variation restricted to the
unit shape sphere, we have
\begin{equation}
	 {\delta^2_{\mathrm{shape}}\Xi_q
		=q\mathcal{Q}_q[a] .}
	\label{appeq:Xi_second_variation}
\end{equation}
This is the sign input from the shape potential.  For regular minima with
\(-3<q<0\), and for regular maxima with \(0<q<2\), it gives
\(\mathcal{Q}_q[a]<0\) on every nonzero physical
shape-distortion mode.

The second quadratic form is
\begin{align}
	\langle\mathbf z,C_q\mathbf z\rangle
	={}&\sum_{i<j}s_{ij}^{q-2}
	\left[
	|\mathbf z_{ij}|^2
	-2\frac{(\mathbf s_{ij}\cdot\mathbf z_{ij})^2}{s_{ij}^2}
	\right]
	+S_q\sum_i|\mathbf z_i|^2 .
	\label{appeq:C_quadratic_form}
\end{align}
The positivity of this form is the extra hypothesis in
\cref{thm:stability}; it is not implied just by being an extremum of
\(\Xi_q\).

\paragraph{Linearizing the shape equations:}
The only derivatives needed for the force terms are the direct pair
derivatives
\begin{align}
	\delta_{\mathbf s}
	\left(s_{ij}^{m-2}\mathbf s_{ij}\right)
	&=s_{ij}^{m-2}
	\left[
	\mathbf a_{ij}
	+(m-2)\mathbf s_{ij}
	\frac{\mathbf s_{ij}\cdot\mathbf a_{ij}}{s_{ij}^2}
	\right],
	\notag\\
	\delta_{\mathbf u}
	\left(|\mathbf u_{ij}|^{m-2}\mathbf u_{ij}\right)
	&=|\mathbf u_{ij}|^{m-2}
	\left[
	\mathbf b_{ij}
	+(m-2)\mathbf u_{ij}
	\frac{\mathbf u_{ij}\cdot\mathbf b_{ij}}{|\mathbf u_{ij}|^2}
	\right],
	\label{appeq:basic_vector_derivative}
\end{align}
Using \cref{appeq:basic_vector_derivative} pair by pair at the radial
fixed point gives
\begin{align}
	\delta\mathbf F&=c\beta P\mathbf a+|\gamma|^{\alpha-2}
	\{L+(\alpha-2)P\}\mathbf b, \notag\\
	\delta\mathbf G&= c \alpha P\mathbf b+|\gamma|^\alpha
	\{L+(\beta-2)P\}\mathbf a .
	\label{appeq:FG_variations}
\end{align}

We also need the variation of the scalar \(A=\langle\mathbf s,\mathbf F\rangle\),
\[
\delta A=\langle\mathbf a,\mathbf F\rangle
+\langle\mathbf s,\delta\mathbf F\rangle .
\]
At the fixed point \(\mathbf F=cS_q\mathbf s\), and
\[
L\mathbf s=P\mathbf s=S_q\mathbf s.
\]
Using \cref{appeq:FG_variations} and the symmetry in
\cref{appeq:LP_symmetry}, this gives
\begin{align*}
	\delta A
	={}&cS_q\langle\mathbf s,\mathbf a\rangle
	+c\beta\langle\mathbf s,P\mathbf a\rangle
	+|\gamma|^{\alpha-2}
	\langle\mathbf s,\{L+(\alpha-2)P\}\mathbf b\rangle\\
	={}&cS_q(1+\beta)\langle\mathbf s,\mathbf a\rangle
	+|\gamma|^{\alpha-2}(\alpha-1)S_q
	\langle\mathbf s,\mathbf b\rangle .
\end{align*}
The first term vanishes by the unit-shape constraint in
\cref{appeq:linear_constraints}; the second vanishes by the fixed-energy
constraint in \cref{appeq:fixed_energy_constraint}.  Therefore
\begin{equation}
	\delta A=0.
	\label{appeq:delta_A_zero}
\end{equation}

It is helpful to divide out the fixed number \(\gamma\) from the momentum
perturbation.  Define
\begin{equation}
	\mathbf z:=\frac{\mathbf b}{\gamma},\qquad
	\mathbf w:=\mathbf z-\mathbf a=\frac{\mathbf b}{\gamma}-\mathbf a .
	\label{appeq:w_definition}
\end{equation}
Thus \(\mathbf w=0\) means that the perturbation preserves the radial
relation \(\mathbf u=\gamma\mathbf s\) to first order.

Substituting \cref{appeq:FG_variations,appeq:delta_A_zero} into the
linearization of \cref{appeq:shape_equations_recalled}, and then replacing
\(\mathbf z\) by \(\mathbf a+\mathbf w\), gives
\begin{align}
	\frac{\dd\mathbf a}{\dd\tau}
	&=\alpha c\left[
	Q_q\mathbf a+\{L+(\alpha-2)P\}\mathbf w
	\right],
	\label{appeq:linear_a_general}\\
	\frac{\dd\mathbf w}{\dd\tau}
	&=-c\left[
	qQ_q\mathbf a+
	\{\alpha(L+(q-2)P)-\beta S_qI\}\mathbf w
	\right].
	\label{appeq:linear_w_general}
\end{align}
These two equations are the fixed-energy linearization for general
\(\alpha,\beta\).  Notice that the fixed points depend only on
\(q=\alpha+\beta\), but the linearized dynamics still remembers \(\alpha\) and
\(\beta\) separately.

\paragraph{Reduction on the EdS branch:}
On the EdS branch,
\begin{equation}
	\alpha=2q,\qquad \beta=-q .
\end{equation}
The expanding fixed point has \(qc>0\).  We therefore use the rescaled time
\begin{equation}
	\rho:=qc\,\tau
	=q\gamma|\gamma|^{2(q-1)}\tau .
	\label{appeq:rho_time}
\end{equation}
In this appendix \(\rho\) is a time variable, unrelated to the density
\(\rho(\mathbf x)\) used in \cref{app:LargeNDistributions}.  Because
\(qc>0\) on the expanding branch, increasing \(\rho\) is the same time
orientation as increasing \(\tau\).

With this time variable, \cref{appeq:linear_a_general,appeq:linear_w_general}
become
\begin{align}
	\frac{\dd\mathbf a}{\dd\rho}
	&=2\left[
	Q_q\mathbf a+\{L+(2q-2)P\}\mathbf w
	\right],
	\label{appeq:linear_a_EdS}\\
	\frac{\dd\mathbf w}{\dd\rho}
	&=-Q_q\mathbf a
	-\left[2\{L+(q-2)P\}+S_qI\right]\mathbf w .
	\label{appeq:linear_w_EdS}
\end{align}
Now define one final perturbation variable
\begin{equation}
	\mathbf v:=\mathbf a+2\mathbf w .
	\label{appeq:v_definition}
\end{equation}
Adding \cref{appeq:linear_a_EdS} to twice
\cref{appeq:linear_w_EdS} cancels the \(Q_q\mathbf a\) terms and gives
\begin{equation}
	\frac{\dd\mathbf v}{\dd\rho}
	=-2(L-2P+S_qI)\mathbf w
	=-2C_q\mathbf w .
	\label{appeq:v_first_order}
\end{equation}
Also, since \(\mathbf a=\mathbf v-2\mathbf w\),
\cref{appeq:linear_w_EdS} becomes
\begin{equation}
	\frac{\dd\mathbf w}{\dd\rho}
	=-Q_q\mathbf v-3S_q\mathbf w .
	\label{appeq:w_first_order}
\end{equation}
Finally differentiate \cref{appeq:v_first_order} once and substitute
\cref{appeq:w_first_order}:
\begin{align}
	\frac{\dd^2\mathbf v}{\dd\rho^2}
	&=-2C_q\frac{\dd\mathbf w}{\dd\rho}\notag\\
	&=2C_qQ_q\mathbf v+6S_qC_q\mathbf w\notag\\
	&=2C_qQ_q\mathbf v
	-3S_q\frac{\dd\mathbf v}{\dd\rho}.
\end{align}
Thus the stability problem reduces to
\begin{equation}
	 {
		\frac{\dd^2\mathbf v}{\dd\rho^2}
		+3S_q\frac{\dd\mathbf v}{\dd\rho}
		-2C_qQ_q\mathbf v=0.}
	\label{appeq:damped_equation}
\end{equation}
The middle term is ordinary friction, and \(S_q>0\).

\paragraph{The sign argument:}
Assume \(C_q\) is positive on the physical perturbation space.  Since
\(C_q\) and \(Q_q\) are symmetric, \(C_qQ_q\) is similar to the symmetric
matrix \(C_q^{1/2}Q_qC_q^{1/2}\).  Therefore its eigenvalues are real, and
its signs are the same as those of \(Q_q\).

For the regular extrema covered by \cref{thm:stability}, \(Q_q\) is
negative on physical shape distortions.  Hence each physical mode obeys
\begin{equation}
	C_qQ_q\mathbf v=-\kappa\mathbf v,\qquad \kappa>0 .
\end{equation}
For a mode \(\mathbf v\propto e^{\lambda\rho}\),
\cref{appeq:damped_equation} gives
\begin{equation}
	\lambda^2+3S_q\lambda+2\kappa=0 .
\end{equation}
If the roots are real, both are negative; if they are complex, their real
part is \(-3S_q/2<0\).  The expanding branch is therefore linearly
asymptotically stable.

If instead \(Q_q\) has a positive physical direction and \(C_q\) is still
positive, then \(C_qQ_q\mathbf v=+\kappa\mathbf v\).  The characteristic
equation becomes
\[
\lambda^2+3S_q\lambda-2\kappa=0,
\]
whose roots have negative product.  One root is positive, so that mode is
unstable.  This is the instability statement for saddle-type central configurations.

\subsection{Unequal-mass Stability Criterion}
\label{app:UnequalMassStability}
For \cref{appeq:mass_hamiltonian}, the stability calculation above has a
direct mass-weighted analogue.  We record the required replacements rather
than repeat the linearization.  The natural inner product on perturbations is
\begin{equation}
	\langle\mathbf y,\mathbf z\rangle_m
	:=\sum_i m_i\mathbf y_i\cdot\mathbf z_i .
	\label{appeq:mass_inner_product}
\end{equation}
Accordingly, the physical linearized constraints become
\begin{equation}
	\sum_i m_i\mathbf a_i=0,\qquad
	\langle\mathbf s,\mathbf a\rangle_m=0,\qquad
	\sum_i m_i\mathbf b_i=0,\qquad
	\langle\mathbf s,\mathbf b\rangle_m=0,
	\label{appeq:mass_linear_constraints}
\end{equation}
where the last condition again follows by restricting to a fixed-energy
surface.

The pair operators are
\begin{align}
	(L_m\mathbf z)_i
	&:=\sum_{j\neq i}m_j s_{ij}^{q-2}\mathbf z_{ij},
	\notag\\
	(P_m\mathbf z)_i
	&:=\sum_{j\neq i}m_j s_{ij}^{q-4}\mathbf s_{ij}
	(\mathbf s_{ij}\cdot\mathbf z_{ij}).
	\label{appeq:mass_LP}
\end{align}
Both are self-adjoint with respect to
\cref{appeq:mass_inner_product}; for example,
\begin{equation}
	\langle\mathbf y,L_m\mathbf z\rangle_m
	=\sum_{i<j}m_im_j s_{ij}^{q-2}
	\mathbf y_{ij}\cdot\mathbf z_{ij}.
\end{equation}
The two operators controlling the stability problem are therefore
\begin{equation}
	Q_{q,m}:=L_m+(q-2)P_m-S_{q,m}I,\qquad
	C_{q,m}:=L_m-2P_m+S_{q,m}I .
	\label{appeq:mass_CQ}
\end{equation}
For \(q\neq0\), the constrained Hessian of the weighted shape potential is
\begin{equation}
	\delta^2_{\rm shape}\Xi_{q,m}
	=q\langle\mathbf a,Q_{q,m}\mathbf a\rangle_m .
	\label{appeq:mass_second_variation}
\end{equation}

On the EdS branch, the definitions
\(\mathbf w=\mathbf b/\gamma-\mathbf a\),
\(\mathbf v=\mathbf a+2\mathbf w\), and
\(\rho=qc\tau\) reduce the unequal-mass linearization to
\begin{equation}
	\frac{\dd^2\mathbf v}{\dd\rho^2}
	+3S_{q,m}\frac{\dd\mathbf v}{\dd\rho}
	-2C_{q,m}Q_{q,m}\mathbf v=0 .
	\label{appeq:mass_damped_equation}
\end{equation}
The sign argument is unchanged: if \(Q_{q,m}<0\) and \(C_{q,m}>0\) on
the physical perturbation space after quotienting common rotations, the
expanding fixed point is linearly asymptotically stable.  In particular,
for the distinguished model \(q=-1\), a nondegenerate minimum of
\(\Xi_{-1,m}\) supplies the required negative sign of \(Q_{-1,m}\).
Positivity of \(C_{-1,m}\), however, remains an additional mass-dependent
hypothesis, just as in the equal-mass analysis.

This weighted formulation can be written in an ordinary Euclidean inner
product by setting
\(\widehat{\mathbf z}_i=\sqrt{m_i}\,\mathbf z_i\).  The conjugated
operators are then symmetric, so the spectral sign proof proceeds exactly
as above.  This change of variables does not remove the masses from the
operators and therefore does not reduce a general unequal-mass
configuration to an equal-mass one.  Finally, the criterion concerns the
effective point-cluster degrees of freedom.  Stability of a microscopic
clustered state with nontrivial internal motion additionally requires
control of the internal and finite-size modes, which is beyond the
fixed-point analysis considered here.

\subsection{Coordinate Check for the Equilateral \texorpdfstring{\(N=3\)}{N=3} Proposition}
\label{app:TriangleCheck}
This subsection gives the coordinate calculation behind \cref{prop:equilateral_stability}. 
Choose
\begin{equation}
	\mathbf s_1=\left(\frac1{\sqrt3},0,0\right),\quad
	\mathbf s_2=\left(-\frac1{2\sqrt3},\frac12,0\right),\quad
	\mathbf s_3=\left(-\frac1{2\sqrt3},-\frac12,0\right).
\end{equation}
Then \(\sum_i\mathbf s_i=0\), \(\sum_i|\mathbf s_i|^2=1\), every side has
length one, and \(S_q=3\). This satisfies the central configuration equation $\mathbf{H}^i_q = 3 \mathbf{s}_i$.  Because all side lengths are one, the weights
in \(L\) are all one.  Hence, for any centered perturbation,
\[
(L\mathbf z)_i=\sum_{j\ne i}(\mathbf z_i-\mathbf z_j)
=3\mathbf z_i-\sum_j\mathbf z_j=3\mathbf z_i .
\]
The tangent space splits into three common rotations and two genuine shape
distortions.  Any centered in-plane perturbation of the equilateral
triangle can be written as \(\mathbf z_i=M\mathbf s_i\) for a
\(2\times2\) matrix \(M\).  Decomposing \(M\) into its trace,
antisymmetric, and symmetric traceless parts separates the uniform scale
direction, the in-plane rotation, and the two genuine shape distortions.
The scale direction is removed by the unit-shape constraint
\(\langle\mathbf s,\mathbf z\rangle=0\), and the rotation is a common
rotational zero mode.  Thus the remaining in-plane shape modes have
symmetric traceless \(M\).  For a common infinitesimal rotation,
\begin{equation}
	\mathbf a_i = \Omega \mathbf s_i,\qquad \mathbf b_i=\Omega  \mathbf u_i =\gamma\,\Omega \mathbf s_i, \qquad \Omega = \begin{pmatrix}
		0 & -1 & 0 \\
		1 & 0 & 0\\
		0 & 0 & 0 
	\end{pmatrix}.
\end{equation}
\(\mathbf s_{ij}\cdot\mathbf z_{ij}=0\) for every pair and therefore
\(P\mathbf z=0\).  The two genuine shape modes can be represented as
\(\mathbf z_i=M\mathbf s_i\), where \(M\) is a symmetric, traceless \(2\times2\) matrix acting in the triangle plane.  Explicitly, for such a
mode one may take
\[
\mathbf a_i=\xi\,M\mathbf s_i,\qquad
\mathbf b_i=\chi\,M\mathbf s_i,
\qquad i=1,2,3,
\]
where \(\xi\) and \(\chi\) are scalar amplitudes for the position and
momentum parts of that one mode.  There are two independent choices of
\(M\) with the following representation on the full 3d coordinates,
\begin{equation}
	M_1 = \begin{pmatrix}
		1 & 0 & 0\\
		0 & -1 & 0\\
		0 & 0 & 0
	\end{pmatrix},\qquad M_2 = \begin{pmatrix}
		0 & 1 & 0\\
		1 & 0 & 0 \\
		0 & 0 & 0
	\end{pmatrix}
\end{equation}
The symmetric traceless condition makes
\(\sum_i\mathbf a_i=\sum_i\mathbf b_i=0\) and
\(\langle\mathbf s,\mathbf a\rangle
=\langle\mathbf s,\mathbf b\rangle=0\).  Substituting the two basis choices
for \(M\) in \cref{appeq:P_definition} gives 
\(P\mathbf z=(3/2)\mathbf z\).  Therefore
\begin{equation}
	C_q=6I-2P
	=
	\begin{cases}
		6I,&\text{rotations},\\
		3I,&\text{shape distortions},
	\end{cases}
\end{equation}
so \(C_q\) is strictly positive on the full centered tangent space; the common rotational zero modes are quotiented only when interpreting the linearized dynamics.
Similarly
\begin{equation}
	Q_q=L+(q-2)P-S_qI=(q-2)P.
\end{equation}
Thus \(Q_q=0\) on common rotations and
\(Q_q=\frac32(q-2)I\) on the two genuine shape modes.  For every \(q<2\),
the physical Hessian sign is the stable sign \(Q_q<0\).

For the distinguished model \(q=-1\), the distortion modes have
\(C_{-1}=3I\), \(Q_{-1}=-9I/2\), and \(S_{-1}=3\).  The characteristic
equation is
\begin{equation}
	\lambda^2+9\lambda+27=0,
\end{equation}
with roots
\begin{equation}
	\lambda_\pm=-\frac92\pm\frac{3\sqrt3}{2}\,i .
\end{equation}
The nonzero rotational companions have \(\lambda=-9\).  After quotienting
the zero common rotations, all physical eigenvalues have negative real
part.

\subsection{The Collinear \texorpdfstring{\(N=3\)}{N=3} Instability}
\label{app:CollinearCheck}
The same calculation gives a simple analytic check that the collinear
three-body central configuration is unstable.  Take
\begin{equation}
	\mathbf s_1=\left(-\frac1{\sqrt2},0,0\right),\qquad
	\mathbf s_2=(0,0,0),\qquad
	\mathbf s_3=\left(\frac1{\sqrt2},0,0\right).
\end{equation}
The constraints \(\sum_i\mathbf s_i=0\) and
\(\sum_i|\mathbf s_i|^2=1\) are immediate.  The pair separations are
\(s_{12}=s_{23}=1/\sqrt2\) and \(s_{13}=\sqrt2\).  Therefore
\begin{equation}
	S_q=\sum_{i<j}s_{ij}^{q}
	=2^{1-q/2}+2^{q/2}.
\end{equation}
For instance, for particle \(1\),
\[
\mathbf H_q^1
=-\left[
\left(\frac1{\sqrt2}\right)^{q-1}
+\left(\sqrt2\right)^{q-1}
\right]\hat{\mathbf x}
=S_q\mathbf s_1,
\]
and the equations for particles \(2\) and \(3\) follow by symmetry.
Thus this collinear configuration is a \(q\)-central configuration for the
full branch, not only for \(q=-1\).

Now consider the transverse perturbation
\begin{equation}
	\mathbf z_1=(0,1,0),\qquad
	\mathbf z_2=(0,-2,0),\qquad
	\mathbf z_3=(0,1,0).
\end{equation}
It is centered, satisfies \(\langle\mathbf s,\mathbf z\rangle=0\), and is
orthogonal to the common-rotation zero modes.  Since all
\(\mathbf s_{ij}\) are parallel to the \(x\)-axis while all
\(\mathbf z_{ij}\) are transverse, \(P\mathbf z=0\).  The weighted
Laplacian \(L\) has nearest-neighbor weights
\[
s_{12}^{q-2}=s_{23}^{q-2}=2^{1-q/2}.
\]
The long edge does not contribute because \(\mathbf z_1=\mathbf z_3\).
Thus
\begin{equation}
	L\mathbf z=3\,2^{1-q/2}\mathbf z .
\end{equation}
Consequently
\begin{equation}
	Q_q\mathbf z
	=\left(2^{2-q/2}-2^{q/2}\right)\mathbf z,\qquad
	C_q\mathbf z
	=\left(2^{3-q/2}+2^{q/2}\right)\mathbf z .
\end{equation}
For \(q<2\), the \(Q_q\) eigenvalue is positive.  This is the saddle sign:
the collinear central configuration has a physical direction opposite to
the stable sign of the regular extrema.  In the expanding time \(\rho\),
the mode obeys
\begin{equation}
	\lambda^2+3S_q\lambda
	-2\left(2^{3-q/2}+2^{q/2}\right)
	\left(2^{2-q/2}-2^{q/2}\right)=0.
\end{equation}
For \(q<2\) the constant term is negative, so one root is positive:
\begin{equation}
	\lambda_+
	=\frac{-3S_q+
		\sqrt{9S_q^2
			+8\left(2^{3-q/2}+2^{q/2}\right)
			\left(2^{2-q/2}-2^{q/2}\right)}}{2}>0 .
\end{equation}
Thus the expanding collinear \(N=3\) central configuration is linearly
unstable throughout the nonzero EdS range \(q<2\) covered by the stability
discussion.  For the distinguished model \(q=-1\), this reduces to
\[
S_{-1}=\frac{5\sqrt2}{2},\qquad
Q_{-1}\mathbf z=\frac{7\sqrt2}{2}\mathbf z,\qquad
C_{-1}\mathbf z=\frac{17\sqrt2}{2}\mathbf z,
\]
and hence
\[
\lambda^2+\frac{15\sqrt2}{2}\lambda-119=0,\qquad
\lambda_+=-\frac{15\sqrt2}{4}
+\frac12\sqrt{\frac{1177}{2}}>0 .
\]
This is the analytic three-body counterpart of the collinear instability
seen numerically in \cref{fig:Eig_spec}.

\section{Numerical Implementation for the Distinguished Model}
\label{app:NumericalImplementation}
This appendix summarizes the numerical procedure used for the finite-\(N\)
results in the distinguished model.  We spell out how \cref{eq:shape_coordinates_eom} was evaluated,
linearized, and projected onto the physical perturbation space.

Throughout this appendix \((\alpha,\beta)=(-2,1)\), so the reduced time is
\(\tau\), \(d\tau=dt/R^{3/2}\).  We use
\[
\mathbf s_{ij}:=\mathbf s_i-\mathbf s_j,\qquad
\mathbf u_{ij}:=\mathbf u_i-\mathbf u_j,\qquad
s_{ij}:=|\mathbf s_{ij}|.
\]
The reduced autonomous equations used in the numerics are
\begin{align}
	\dot{\mathbf s}_i&=-2(\mathbf F_i-A\mathbf s_i),\notag\\
	\dot{\mathbf u}_i&=A\mathbf u_i-\mathbf G_i,
	\label{appeq:numerical_reduced_flow}
\end{align}
where a dot means \(d/d\tau\), and
\begin{align}
	\mathbf F_i&=\sum_{j\neq i}|\mathbf u_{ij}|^{-4}
	\mathbf u_{ij}\,s_{ij},\notag\\
	\mathbf G_i&=\sum_{j\neq i}|\mathbf u_{ij}|^{-2}
	\frac{\mathbf s_{ij}}{s_{ij}},\notag\\
	A&=\sum_i\mathbf s_i\cdot\mathbf F_i .
	\label{appeq:numerical_FGA}
\end{align}
These are simply \cref{eq:shape_coordinates_eom} specialized to the
distinguished model.

\subsection{Fixed Points used for the Spectra}
A candidate central configuration is first centered and normalized:
\begin{equation}
	\sum_i\mathbf s_i=0,\qquad
	\sum_i|\mathbf s_i|^2=1.
	\label{appeq:numerical_shape_constraints}
\end{equation}
The radial fixed point is then formed by setting
\begin{equation}
	\mathbf u_i=\gamma\,\mathbf s_i .
	\label{appeq:numerical_radial_fixed_point}
\end{equation}
In the spectra shown in \cref{fig:Eig_spec}, \(\gamma=-1\) is the expanding
branch and \(\gamma=+1\) is the contracting branch.  The shape itself must
satisfy
\begin{equation}
	\mathbf H_{-1}^i
	=\sum_{j\neq i}\frac{\mathbf s_{ij}}{s_{ij}^{3}}
	=S_{-1}\mathbf s_i,\qquad
	S_{-1}=\sum_{i<j}\frac1{s_{ij}} .
	\label{appeq:numerical_central_config}
\end{equation}
This condition is checked directly before linearizing the flow.  Equivalently,
if \(\mathbf X_*=(\mathbf s_1,\ldots,\mathbf s_N,\mathbf u_1,\ldots,\mathbf u_N)\)
and \(\mathcal V(\mathbf X)\) denotes the right-hand side of
\cref{appeq:numerical_reduced_flow}, the fixed-point residual is
\begin{equation}
	r_\infty=\max_a |(\mathcal V(\mathbf X_*))_a| .
	\label{appeq:numerical_residual}
\end{equation}
The spectra are computed only after this residual is ; in practice the
tolerance used for accepting a fixed point was \(r_\infty\lesssim10^{-8}\).

\subsection{Finite-difference Jacobian}
The full state vector has \(6N\) real components.  The unprojected Jacobian
is
\begin{equation}
	J=\left.\frac{\partial\mathcal V}{\partial\mathbf X}\right|_{\mathbf X_*}
	\in\mathbb R^{6N\times6N}.
\end{equation}
We evaluate this derivative by symmetric finite differences.  For the
coordinate unit vector \(\mathbf e_k\),
\begin{equation}
	J_{\cdot k}\simeq
	\frac{\mathcal V(\mathbf X_*+\varepsilon\mathbf e_k)
		-\mathcal V(\mathbf X_*-\varepsilon\mathbf e_k)}
	{2\varepsilon},
	\qquad \varepsilon=10^{-7}.
	\label{appeq:numerical_finite_difference}
\end{equation}

\subsection{Projection to the Physical Tangent Space}
The eigenvalues of the full \(J\) are not the physical stability spectrum,
because \(J\) acts on directions that either violate the constraints or
move along redundant symmetry directions.  We remove those directions before
diagonalizing.

Let
\[
\langle\mathbf a,\mathbf b\rangle
:=\sum_i\mathbf a_i\cdot\mathbf b_i .
\]
The constraint normals and neutral generators used in the projection are:
\begin{align}
	&\delta\mathbf s_i=\mathbf e_a,\quad
	\delta\mathbf u_i=0,
	&&a=1,2,3, \notag\\
	&\delta\mathbf s_i=0,\quad
	\delta\mathbf u_i=\mathbf e_a,
	&&a=1,2,3, \notag\\
	&\delta\mathbf s_i=\mathbf s_i,\quad
	\delta\mathbf u_i=0, \notag\\
	&\delta\mathbf s_i=0,\quad
	\delta\mathbf u_i=\mathbf s_i, \notag\\
	&\delta\mathbf s_i=\boldsymbol\omega_a\times\mathbf s_i,\quad
	\delta\mathbf u_i=\boldsymbol\omega_a\times\mathbf u_i,
	&&a=1,2,3 .
	\label{appeq:numerical_removed_vectors}
\end{align}
The first two lines enforce the centering constraints in \(\mathbf s\) and
\(\mathbf u\).  The third line is the normal to
\(\sum_i|\mathbf s_i|^2=1\).  The fourth line is the infinitesimal displacement
along the one-parameter family \(\mathbf u_i=\gamma\mathbf s_i\), which contributes one additional neutral direction.  The last
line removes common spatial rotations.  For a generic non-collinear shape
these give eleven independent directions.  If a special configuration has a
dependent or zero generator, it is discarded during orthonormalization.

After Gram-Schmidt orthonormalization of the removed vectors, let
\(Q_{\rm rem}\) be the matrix whose columns are the independent
orthonormal directions to be removed.  The orthogonal projector onto the
remaining tangent space is
\begin{equation}
	\Pi_{\rm phys}=I-Q_{\rm rem}Q_{\rm rem}^T .
\end{equation}
An orthonormal basis \(Q_{\rm phys}\) for the image of
\(\Pi_{\rm phys}\) is obtained by keeping the nonzero singular-vector
directions of this projector.  The matrix whose spectrum is plotted is then
\begin{equation}
	J_{\rm phys}=Q_{\rm phys}^T J Q_{\rm phys}.
	\label{appeq:numerical_projected_jacobian}
\end{equation}
A fixed point is linearly asymptotically stable in the reduced dynamics
when every eigenvalue of \(J_{\rm phys}\) has negative real part.  A single
eigenvalue with positive real part is an instability.

	\section{Calculation of Boltzmann Entropy}
	\label{app:Entropy}
	Consider the general model \(H_{(\alpha,\beta)}\), with
	\(\alpha\neq0\).  We fix \(N\), the energy \(E\), the conserved centers
	\(\mathbf X_{\rm C}\) and \(\mathbf P_{\rm C}\), and the intrinsic angular
	momentum
	\(\mathbf J=\sum_i(\mathbf x_i-\mathbf X_{\rm C})\times
	(\mathbf p_i-\mathbf P_{\rm C})\).  The fixed-charge phase-space volume is, up to an overall
	normalization,
	\begin{equation}
		\Omega_{\mathcal M}=\mathcal C\int\prod_i\dd^3x_i\,\dd^3p_i\,
		\delta(H_{(\alpha,\beta)}-E)
		\delta^{(3)}(\mathbf X_{\rm C}-\mathbf X_0)
		\delta^{(3)}(\mathbf P_{\rm C}-\mathbf P_0)
		\delta^{(3)}(\mathbf J-\mathbf J_0).
		\label{appeq:entropy_fixed_charge_volume}
	\end{equation}
	We are interested in the volume of
	\(\mathcal M\) where the scale takes values
	\(\bar R\leq R<\bar R+\Delta R\) with $(\Delta R)<<R$. We can write this as $\Omega_{\bar R\leq R<\bar R+\Delta R} \approx \Omega_{\bar {R}} \Delta R$ where
	\begin{equation}
		\Omega_{\bar {R}} = \mathcal C\int\prod_i\dd^3x_i\,\dd^3p_i\,
		\delta(R - \bar{R})\delta(H_{(\alpha,\beta)}-E)
		\delta^{(3)}(\mathbf X_{\rm C}-\mathbf X_0)
		\delta^{(3)}(\mathbf P_{\rm C}-\mathbf P_0)
		\delta^{(3)}(\mathbf J-\mathbf J_0). \label{appeq:entropy_fixed_charge_volume_scale}
	\end{equation}
	To compute \cref{appeq:entropy_fixed_charge_volume_scale}, we now use the scale-shape reduction introduced in
	\cref{eq:shape_coordinates}, 
	\begin{equation}
		\mathbf{s}_i = \frac{\mathbf{x}_i-\mathbf{X}_{\rm C}}{R}, \quad \mathbf{u}_i = R^{\beta/\alpha}\left(\mathbf{p}_i-\mathbf{P}_{\rm C}\right),    \quad    R = \sqrt{\sum_{i=1}^N |\mathbf{x}_i-\mathbf{X}_{\rm C}|^2}. 
	\end{equation}
	satisfying the constraints in \cref{eq:shape_coordinate_constraints}. The
	Hamiltonian \(H^0_{(\alpha,\beta)}\) becomes scale independent.   
	\cref{appeq:entropy_fixed_charge_volume_scale} then becomes
	\begin{align}
		\Omega_{\bar R}
		={}&\mathcal C\,\bar R^{d\nu-1}
		\int\prod_{i=1}^N\dd^3s_i\,\dd^3u_i\,
		\delta^{(3)}\!\left(\sum_i\mathbf s_i\right)
		\delta\!\left(\sum_i|\mathbf s_i|^2-1\right)\notag\\
		&\times
		\delta^{(3)}\!\left(\sum_i\mathbf u_i\right)
		\delta\!\left(H^0_{(\alpha,\beta)}(\mathbf s,\mathbf u)-E\right)
		\delta^{(3)}\!\left(\bar R^\nu\mathbf j-\mathbf J_0\right),
		\label{appeq:entropy_reduced_fixed_scale}
	\end{align}
	where \(d=3N-3\) and 
	\begin{align}
		&\prod_{i=1}^N\dd^3x_i\,\dd^3p_i\,
		\delta^{(3)}(\mathbf X_{\rm C}-\mathbf X_0)
		\delta^{(3)}(\mathbf P_{\rm C}-\mathbf P_0)\notag\\
		&\qquad\propto R^{d\nu-1}\dd R
		\prod_{i=1}^N\dd^3s_i\,\dd^3u_i\,
		\delta^{(3)}\!\left(\sum_i\mathbf s_i\right)
		\delta\!\left(\sum_i|\mathbf s_i|^2-1\right)
		\delta^{(3)}\!\left(\sum_i\mathbf u_i\right),\notag\\
		&\qquad
		\mathbf J=R^\nu\mathbf j,\quad
		\mathbf j=\sum_i\mathbf s_i\times\mathbf u_i,\quad \nu=1-\frac{\beta}{\alpha} .
		\label{appeq:entropy_general_scaling}
	\end{align}
	The proportionality constant in this measure is absorbed into
	\(\mathcal C\). The remaining angular-momentum delta function obeys
	\begin{equation}
		\delta^{(3)}\!\left(\bar R^\nu\mathbf j-\mathbf J_0\right)
		=\bar R^{-3\nu}
		\delta^{(3)}\!\left(
		\mathbf j-\frac{\mathbf J_0}{\bar R^\nu}\right).
		\label{appeq:entropy_angular_delta}
	\end{equation}
	It is useful to introduce
	\begin{equation}
		a_{\alpha,\beta}
		=(d-3)\nu-1
		=3(N-2)\left(1-\frac{\beta}{\alpha}\right)-1
		\label{appeq:entropy_general_exponent}
	\end{equation}
	and the residual shape-momentum integral
	\begin{align}
		Z_{\alpha,\beta}(E,\boldsymbol\ell)
		:={}&\int\prod_{i=1}^N\dd^3s_i\,\dd^3u_i\,
		\delta^{(3)}\!\left(\sum_i\mathbf s_i\right)
		\delta\!\left(\sum_i|\mathbf s_i|^2-1\right)\notag\\
		&\times
		\delta^{(3)}\!\left(\sum_i\mathbf u_i\right)
		\delta\!\left(H^0_{(\alpha,\beta)}(\mathbf s,\mathbf u)-E\right)
		\delta^{(3)}(\mathbf j-\boldsymbol\ell).
		\label{appeq:entropy_formal_Z}
	\end{align}
	Then
	\begin{equation}
		\Omega_{\bar R}
		=\mathcal C\,\bar R^{a_{\alpha,\beta}}
		Z_{\alpha,\beta}\!\left(
		E,\frac{\mathbf J_0}{\bar R^\nu}\right).
		\label{appeq:entropy_fixed_scale_Z}
	\end{equation}
	For \(\nu>0\), fixed physical angular momentum becomes negligible in the
	reduced variables when \(\bar R\) is large.  Formally taking this limit gives
	\begin{equation}
		\Omega_{\bar R\leq R<\bar R+\Delta R}
		\simeq
		\mathcal C \bar R^{a_{\alpha,\beta}} \Delta R\,
		Z_{\alpha,\beta}(E,\mathbf0).
		\label{appeq:entropy_formal_factorization}
	\end{equation}
	This separates the scale factor \(\bar R^{a_{\alpha,\beta}}\Delta R\)
	from the residual integral \(Z_{\alpha,\beta}\).  The latter is nevertheless divergent for the
	many-body \(\alpha<0\) models of interest: reduced momentum space is
	noncompact, and relative momentum boosts can become arbitrarily large
	without changing the fixed energy.
	
	A finite Boltzmann volume is obtained by refining the macrostate with the
	additional reduced momentum observable
	\begin{equation}
		P_u(\mathbf u)
		=\frac12\sum_{i<j}|\mathbf u_i-\mathbf u_j|^2
		=\frac N2\sum_i|\mathbf u_i|^2.
		\label{appeq:entropy_Pu}
	\end{equation}
	This is the three-dimensional, scale-reduced version of the
	momentum-complexity macro-observable used in
	Ref.~\cite{BabbarSadkiPrakashSondhi_Fractons_2025}.  If \(P_\pi\) denotes
	the same expression formed from the physical centered momenta, then
	\(P_\pi=R^{-2\beta/\alpha}P_u\). For the macrocell \(p_u\leq P_u<p_u+\Delta p_u\), replace \cref{appeq:entropy_formal_Z} by $Z_{\alpha,\beta}(E,\boldsymbol\ell;p_u\leq P_u<p_u+\Delta p_u) \approx Z_{\alpha,\beta}(E,\boldsymbol\ell;p_u) \Delta p_u $ where
	\begin{align}
		Z_{\alpha,\beta}(E,\boldsymbol\ell;p_u)
		:={}&\int\prod_{i=1}^N\dd^3s_i\,\dd^3u_i\,
		\delta^{(3)}\!\left(\sum_i\mathbf s_i\right)
		\delta\!\left(\sum_i|\mathbf s_i|^2-1\right)
		\delta^{(3)}\!\left(\sum_i\mathbf u_i\right)\notag\\
		&\times
		\delta\!\left(H^0_{(\alpha,\beta)}(\mathbf s,\mathbf u)-E\right)
		\delta^{(3)}(\mathbf j-\boldsymbol\ell)
		\delta\!\left(P_u(\mathbf u)-p_u\right).
		\label{appeq:entropy_finite_Z}
	\end{align}
	Because a finite \(P_u\) bin bounds all the reduced momenta, this integral
	is finite for regular constraint values. The phase space volume we use to compute the Boltzmann entropy is 
	\begin{equation}
		\Omega_{\bar R\leq R<\bar R+\Delta R;~p_u\leq P_u<p_u+\Delta p_u}
		\simeq
		\mathcal C \bar R^{a_{\alpha,\beta}} Z_{\alpha,\beta}(E,\mathbf J_0/\bar R^\nu;p_u) \left(\Delta R \Delta p_u\right)
		\label{appeq:entropy_finite_factorization}
	\end{equation}
	The Boltzmann entropy of this macrocell is therefore
	\begin{equation}
		S_B(\bar R,p_u)
		=k_B\log\!\left[
		\frac{\mathcal C\,\bar R^{a_{\alpha,\beta}}
			Z_{\alpha,\beta}(E,\mathbf J_0/\bar R^\nu;p_u)\,
			\Delta R\,\Delta p_u}
		{\Omega_*}\right],
		\label{appeq:entropy_general}
	\end{equation}
	where \(\Omega_*\) is a reference phase-space volume.  The dependence on
	the fixed energy and reduced angular momentum is left implicit in the
	notation \(S_B(\bar R,p_u)\).  For \(\nu>0\), at fixed conserved charges
	and sufficiently large \(\bar R\), the residual \(\bar R\)-dependence of
	\(Z_{\alpha,\beta}(E,\mathbf J_0/\bar R^\nu;p_u)\) drops out.  
   \[Z_{\alpha,\beta}(E,\mathbf J_0/\bar R^\nu;p_u) \approx Z_{\alpha,\beta}(E,0;p_u)\] 
    The leading scale dependence of the Boltzmann entropy is therefore
	\begin{equation}
		S_B(\bar R,p_u)
		\approx k_B\,a_{\alpha,\beta}\log\!\left(\frac{\bar R}{R_*}\right)
		+\text{\(\bar R\)-independent terms},
		\label{appeq:entropy_general_large_scale}
	\end{equation}
	where \(R_*\) is an arbitrary reference scale. For the distinguished model \((\alpha,\beta)=(-2,1)\),
	\begin{equation}
		\nu=\frac32,\qquad
		a_{-2,1}=\frac{9N-20}{2},\qquad
		\boldsymbol\ell=\frac{\mathbf J_0}{\bar R^{3/2}},\qquad
		P_\pi=R P_u .
		\label{appeq:entropy_distinguished_parameters}
	\end{equation}
	The bar only distinguishes the fixed macrovalue from the integration
	variable \(R\).  After the volume integral has been performed, it may be
	dropped, giving the notation used in the main text.

	\bibliography{biblio}{}

@article{RemmenCarroll2013,
  author = {Remmen, Grant N. and Carroll, Sean M.},
  title = {{Attractor Solutions in Scalar-Field Cosmology}},
  journal = {Physical Review D},
  volume = {88},
  number = {8},
  pages = {083518},
  year = {2013},
  doi = {10.1103/PhysRevD.88.083518},
  eprint = {1309.2611},
  archivePrefix = {arXiv},
  primaryClass = {gr-qc}
}

@article{Wald1983,
  author = {Wald, Robert M.},
  title = {{Asymptotic Behavior of Homogeneous Cosmological Models in the Presence of a Positive Cosmological Constant}},
  journal = {Physical Review D},
  volume = {28},
  number = {8},
  pages = {2118--2120},
  year = {1983},
  doi = {10.1103/PhysRevD.28.2118}
}

@article{MukhanovFeldmanBrandenberger1992,
  author = {Mukhanov, Viatcheslav F. and Feldman, H. A. and Brandenberger, Robert H.},
  title = {{Theory of Cosmological Perturbations}},
  journal = {Physics Reports},
  volume = {215},
  number = {5--6},
  pages = {203--333},
  year = {1992},
  doi = {10.1016/0370-1573(92)90044-Z}
}

@article{AllahverdiEtAl2010,
  author = {Allahverdi, Rouzbeh and Brandenberger, Robert and Cyr-Racine, Francis-Yan and Mazumdar, Anupam},
  title = {{Reheating in Inflationary Cosmology: Theory and Applications}},
  journal = {Annual Review of Nuclear and Particle Science},
  volume = {60},
  pages = {27--51},
  year = {2010},
  doi = {10.1146/annurev.nucl.012809.104511},
  eprint = {1001.2600},
  archivePrefix = {arXiv},
  primaryClass = {hep-th}
}

@article{KonopkaMarkopoulouSeverini2008,
  author = {Konopka, Tomasz and Markopoulou, Fotini and Severini, Simone},
  title = {{Quantum Graphity: A Model of Emergent Locality}},
  journal = {Physical Review D},
  volume = {77},
  number = {10},
  pages = {104029},
  year = {2008},
  doi = {10.1103/PhysRevD.77.104029},
  eprint = {0801.0861},
  archivePrefix = {arXiv},
  primaryClass = {hep-th}
}

@article{Buchert2000,
  author = {Buchert, Thomas},
  title = {{On Average Properties of Inhomogeneous Fluids in General Relativity I: Dust Cosmologies}},
  journal = {General Relativity and Gravitation},
  volume = {32},
  pages = {105--125},
  year = {2000},
  doi = {10.1023/A:1001800617177},
  eprint = {gr-qc/9906015},
  archivePrefix = {arXiv}
}

@article{MarsdenWeinstein1974,
  author = {Marsden, Jerrold and Weinstein, Alan},
  title = {{Reduction of Symplectic Manifolds with Symmetry}},
  journal = {Reports on Mathematical Physics},
  volume = {5},
  number = {1},
  pages = {121--130},
  year = {1974},
  doi = {10.1016/0034-4877(74)90021-4}
}

@article{BabbarSadkiPrakashSondhi_Fractons_2025,
  author = {Babbar, Aryaman and Sadki, Ylias and Prakash, Abhishodh and Sondhi, S. L.},
  title = {{Classical fractons: Local chaos, global broken ergodicity, and an arrow of time}},
  journal = {Physical Review B},
  volume = {111},
  number = {24},
  pages = {245134},
  year = {2025},
  doi = {10.1103/g2l1-s2vy},
  eprint = {2501.12445},
  archivePrefix = {arXiv},
  primaryClass = {cond-mat.stat-mech}
}

@article{SadkiPrakashSondhiArovas_PhaseSpaceFractons_2026,
  author = {Sadki, Ylias and Prakash, Abhishodh and Sondhi, S. L. and Arovas, Daniel P.},
  title = {{Phase space fractons}},
  journal = {Physical Review Letters},
  volume = {136},
  number = {12},
  pages = {126504},
  year = {2026},
  doi = {10.1103/b974-mpkc},
  eprint = {2502.02650},
  archivePrefix = {arXiv},
  primaryClass = {cond-mat.stat-mech}
}

@article{SadkiPrakashSondhi_ContinuumFractons_2025,
  author = {Sadki, Ylias and Prakash, Abhishodh and Sondhi, S. L.},
  title = {{Continuum fractons: Quantization and the few-body problem}},
  journal = {Physical Review B},
  volume = {114},
  number = {4},
  pages = {045102},
  year = {2026},
  doi = {10.1103/2pt1-d469},
  eprint = {2510.00110},
  archivePrefix = {arXiv},
  primaryClass = {cond-mat.str-el}
}

@article{ClassenHowesSenesePrakash_FreezingTransitions_2025,
  author = {Classen-Howes, Jonathan and Senese, Riccardo and Prakash, Abhishodh},
  title = {{Universal Freezing Transitions of Dipole-Conserving Chains}},
  journal = {Physical Review B},
  volume = {112},
  pages = {125148},
  year = {2025},
  doi = {10.1103/2h1v-yx5l},
  eprint = {2408.10321},
  archivePrefix = {arXiv},
  primaryClass = {cond-mat.str-el}
}

@article{PrakashGorielySondhi_Fractons_2024,
  author = {Prakash, Abhishodh and Goriely, Alain and Sondhi, S. L.},
  title = {{Classical nonrelativistic fractons}},
  journal = {Physical Review B},
  volume = {109},
  number = {5},
  pages = {054313},
  year = {2024},
  doi = {10.1103/PhysRevB.109.054313}
}

@article{PrakashSadkiSondhi_Fractons_2024,
  author = {Prakash, Abhishodh and Sadki, Ylias and Sondhi, S. L.},
  title = {{Machian fractons, Hamiltonian attractors, and nonequilibrium steady states}},
  journal = {Physical Review B},
  volume = {110},
  number = {2},
  pages = {024305},
  year = {2024},
  doi = {10.1103/PhysRevB.110.024305}
}

@article{10.1093/qmath/os-5.1.64,
  author = {Milne, E. A.},
  title = {{A Newtonian Expanding Universe}},
  journal = {The Quarterly Journal of Mathematics},
  volume = {os-5},
  number = {1},
  pages = {64--72},
  year = {1934},
  doi = {10.1093/qmath/os-5.1.64}
}

@article{Bagla_1997,
  author = {Bagla, J. S. and Padmanabhan, T.},
  title = {{Cosmological N-body simulations}},
  journal = {Pramana},
  volume = {49},
  number = {2},
  pages = {161--192},
  year = {1997},
  doi = {10.1007/BF02845853}
}

@article{Ellis_2013,
  author = {Ellis, George F. R. and Gibbons, Gary W.},
  title = {{Discrete Newtonian cosmology}},
  journal = {Classical and Quantum Gravity},
  volume = {31},
  number = {2},
  pages = {025003},
  year = {2014},
  doi = {10.1088/0264-9381/31/2/025003},
  eprint = {1308.1852},
  archivePrefix = {arXiv},
  primaryClass = {astro-ph.CO}
}

@book{landau1976mechanics,
  author = {Landau, L. D. and Lifshitz, E. M.},
  title = {{Mechanics}},
  series = {{Course of Theoretical Physics}},
  volume = {1},
  edition = {3},
  publisher = {Butterworth-Heinemann},
  address = {Oxford},
  year = {1976},
  isbn = {9780750628969}
}

@article{Anderson_2006,
  author = {Anderson, Edward},
  title = {{Relational particle models: I. Reconciliation with standard classical and quantum theory}},
  journal = {Classical and Quantum Gravity},
  volume = {23},
  number = {7},
  pages = {2469--2490},
  year = {2006},
  doi = {10.1088/0264-9381/23/7/016}
}

@article{Barbour_2003,
  author = {Barbour, Julian},
  title = {{Scale-invariant gravity: particle dynamics}},
  journal = {Classical and Quantum Gravity},
  volume = {20},
  number = {8},
  pages = {1543--1570},
  year = {2003},
  doi = {10.1088/0264-9381/20/8/310},
  eprint = {gr-qc/0211021},
  archivePrefix = {arXiv}
}

@article{10.1098/rspa.1982.0102,
  author = {Barbour, J. B. and Bertotti, B.},
  title = {{Mach's principle and the structure of dynamical theories}},
  journal = {Proceedings of the Royal Society of London. A. Mathematical and Physical Sciences},
  volume = {382},
  number = {1783},
  pages = {295--306},
  year = {1982},
  doi = {10.1098/rspa.1982.0102}
}

@article{bravetti2022scalingsymmetriescontactreduction,
  author = {Bravetti, Alessandro and Jackman, Connor and Sloan, David},
  title = {{Scaling symmetries, contact reduction and Poincar{\'e}'s dream}},
  journal = {Journal of Physics A: Mathematical and Theoretical},
  volume = {56},
  number = {43},
  pages = {435203},
  year = {2023},
  doi = {10.1088/1751-8121/acfddd},
  eprint = {2206.09911},
  archivePrefix = {arXiv},
  primaryClass = {math-ph}
}

@article{Battye_2003,
  author = {Battye, Richard A. and Gibbons, Gary W. and Sutcliffe, Paul M.},
  title = {{Central configurations in three dimensions}},
  journal = {Proceedings of the Royal Society A: Mathematical, Physical and Engineering Sciences},
  volume = {459},
  number = {2032},
  pages = {911--943},
  year = {2003},
  doi = {10.1098/rspa.2002.1061},
  eprint = {hep-th/0201101},
  archivePrefix = {arXiv}
}

@article{Ellis_2015,
  author = {Ellis, George F. R. and Gibbons, Gary W.},
  title = {{Discrete Newtonian cosmology: perturbations}},
  journal = {Classical and Quantum Gravity},
  volume = {32},
  number = {5},
  pages = {055001},
  year = {2015},
  doi = {10.1088/0264-9381/32/5/055001},
  eprint = {1409.0395},
  archivePrefix = {arXiv},
  primaryClass = {gr-qc}
}

@article{Lourenco:2026uto,
  author = {Louren{\c{c}}o, Maria I. R. and Barbour, Julian and Lobo, Francisco S. N.},
  title = {{Scale Invariance, Variety and Central Configurations}},
  eprint = {2602.11225},
  archivePrefix = {arXiv},
  primaryClass = {physics.hist-ph},
  year = {2026}
}

@article{Lourenco:2026lbr,
  author = {Louren{\c{c}}o, Maria I. R. and Barbour, Julian and Lobo, Francisco S. N.},
  title = {{Emergence of measured geometry in self-gravitating systems}},
  journal = {Physical Review D},
  volume = {113},
  number = {10},
  pages = {104068},
  year = {2026},
  doi = {10.1103/dgnr-kgdw},
  eprint = {2602.18115},
  archivePrefix = {arXiv},
  primaryClass = {gr-qc}
}

@article{BarbourKoslowskiMercati_PhysRevLett.113.181101,
  author = {Barbour, Julian and Koslowski, Tim and Mercati, Flavio},
  title = {{Identification of a Gravitational Arrow of Time}},
  journal = {Physical Review Letters},
  volume = {113},
  number = {18},
  pages = {181101},
  year = {2014},
  doi = {10.1103/PhysRevLett.113.181101},
  eprint = {1409.0917},
  archivePrefix = {arXiv},
  primaryClass = {gr-qc}
}

@article{gryb2025accountarrowtimescale,
      title={{An account of the arrow of time when scale is surplus}}, 
      author={Sean Gryb and Simon Friederich},
      year={2025},
      eprint={2510.03041},
      archivePrefix={arXiv},
      primaryClass={physics.hist-ph},
      url={https://arxiv.org/abs/2510.03041}, 
}

@article{GoldsteinTumulkaZanghi_arrow_PhysRevD.94.023520,
  title = {{Is the hypothesis about a low entropy initial state of the Universe necessary for explaining the arrow of time?}},
  author = {Goldstein, Sheldon and Tumulka, Roderich and Zangh\`{\i}, Nino},
  journal = {Phys. Rev. D},
  volume = {94},
  issue = {2},
  pages = {023520},
  numpages = {7},
  year = {2016},
  month = {Jul},
  publisher = {American Physical Society},
  doi = {10.1103/PhysRevD.94.023520},
  url = {https://link.aps.org/doi/10.1103/PhysRevD.94.023520}
}

@article{Chamon2005,
  author = {Chamon, Claudio},
  title = {{Quantum Glassiness in Strongly Correlated Clean Systems: An Example of Topological Overprotection}},
  journal = {Physical Review Letters},
  volume = {94},
  number = {4},
  pages = {040402},
  year = {2005},
  doi = {10.1103/PhysRevLett.94.040402}
}

@article{Haah2011,
  author = {Haah, Jeongwan},
  title = {{Local stabilizer codes in three dimensions without string logical operators}},
  journal = {Physical Review A},
  volume = {83},
  number = {4},
  pages = {042330},
  year = {2011},
  doi = {10.1103/PhysRevA.83.042330},
  eprint = {1101.1962},
  archivePrefix = {arXiv},
  primaryClass = {quant-ph}
}

@article{Vijay2015,
  author = {Vijay, Sagar and Haah, Jeongwan and Fu, Liang},
  title = {{A new kind of topological quantum order: A dimensional hierarchy of quasiparticles built from stationary excitations}},
  journal = {Physical Review B},
  volume = {92},
  number = {23},
  pages = {235136},
  year = {2015},
  doi = {10.1103/PhysRevB.92.235136}
}

@article{Pretko2017a,
  author = {Pretko, Michael},
  title = {{Subdimensional particle structure of higher rank U(1) spin liquids}},
  journal = {Physical Review B},
  volume = {95},
  number = {11},
  pages = {115139},
  year = {2017},
  doi = {10.1103/PhysRevB.95.115139},
  eprint = {1604.05329},
  archivePrefix = {arXiv},
  primaryClass = {cond-mat.str-el}
}

@article{Pretko2017b,
  author = {Pretko, Michael},
  title = {{Generalized electromagnetism of subdimensional particles: A spin liquid story}},
  journal = {Physical Review B},
  volume = {96},
  number = {3},
  pages = {035119},
  year = {2017},
  doi = {10.1103/PhysRevB.96.035119},
  eprint = {1606.08857},
  archivePrefix = {arXiv},
  primaryClass = {cond-mat.str-el}
}

@article{PretkoRadzihovsky2018,
  author = {Pretko, Michael and Radzihovsky, Leo},
  title = {{Fracton-Elasticity Duality}},
  journal = {Physical Review Letters},
  volume = {120},
  number = {19},
  pages = {195301},
  year = {2018},
  doi = {10.1103/PhysRevLett.120.195301},
  eprint = {1711.11044},
  archivePrefix = {arXiv},
  primaryClass = {cond-mat.str-el}
}

@article{Gromov2019,
  author = {Gromov, Andrey},
  title = {{Chiral Topological Elasticity and Fracton Order}},
  journal = {Physical Review Letters},
  volume = {122},
  number = {7},
  pages = {076403},
  year = {2019},
  doi = {10.1103/PhysRevLett.122.076403},
  eprint = {1712.06600},
  archivePrefix = {arXiv},
  primaryClass = {cond-mat.str-el}
}

@article{Pretko2017c,
  author = {Pretko, Michael},
  title = {{Emergent gravity of fractons: Mach's principle revisited}},
  journal = {Physical Review D},
  volume = {96},
  number = {2},
  pages = {024051},
  year = {2017},
  doi = {10.1103/PhysRevD.96.024051},
  eprint = {1702.07613},
  archivePrefix = {arXiv},
  primaryClass = {cond-mat.str-el}
}

@article{Yan2019,
  author = {Yan, Han},
  title = {{Hyperbolic fracton model, subsystem symmetry, and holography}},
  journal = {Physical Review B},
  volume = {99},
  number = {15},
  pages = {155126},
  year = {2019},
  doi = {10.1103/PhysRevB.99.155126},
  eprint = {1807.05942},
  archivePrefix = {arXiv},
  primaryClass = {hep-th}
}

@article{Gorantla2020,
  author = {Gorantla, Pranay and Lam, Ho Tat and Seiberg, Nathan and Shao, Shu-Heng},
  title = {{A modified Villain formulation of fractons and other exotic theories}},
  journal = {Journal of Mathematical Physics},
  volume = {62},
  number = {10},
  pages = {102301},
  year = {2021},
  doi = {10.1063/5.0060808},
  eprint = {2103.01257},
  archivePrefix = {arXiv},
  primaryClass = {cond-mat.str-el}
}

@article{SeibergShao2021,
  author = {Seiberg, Nathan and Shao, Shu-Heng},
  title = {{Exotic Symmetries, Duality, and Fractons in 2+1-Dimensional Quantum Field Theory}},
  journal = {SciPost Physics},
  volume = {10},
  number = {2},
  pages = {027},
  year = {2021},
  doi = {10.21468/SciPostPhys.10.2.027},
  eprint = {2003.10466},
  archivePrefix = {arXiv},
  primaryClass = {cond-mat.str-el}
}

@article{SlagleKim2017,
  author = {Slagle, Kevin and Kim, Yong Baek},
  title = {{Fracton topological order from nearest-neighbor two-spin interactions and dualities}},
  journal = {Physical Review B},
  volume = {96},
  number = {16},
  pages = {165106},
  year = {2017},
  doi = {10.1103/PhysRevB.96.165106},
  eprint = {1704.03870},
  archivePrefix = {arXiv},
  primaryClass = {cond-mat.str-el}
}

@article{Gromov2019curved,
  author = {Gromov, Andrey},
  title = {{Towards Classification of Fracton Phases: The Multipole Algebra}},
  journal = {Physical Review X},
  volume = {9},
  number = {3},
  pages = {031035},
  year = {2019},
  doi = {10.1103/PhysRevX.9.031035},
  eprint = {1812.05104},
  archivePrefix = {arXiv},
  primaryClass = {cond-mat.str-el}
}

@article{LMBFGSB_1995SJSC...16.1190B,
  author = {Byrd, Richard H. and Lu, Peihuang and Nocedal, Jorge and Zhu, Ciyou},
  title = {{A Limited Memory Algorithm for Bound Constrained Optimization}},
  journal = {SIAM Journal on Scientific Computing},
  volume = {16},
  number = {5},
  pages = {1190--1208},
  year = {1995},
  doi = {10.1137/0916069}
}

@article{DiacuPerezChavelaSantoprete2006,
  author = {Diacu, Florin and P{\'e}rez-Chavela, Ernesto and Santoprete, Manuele},
  title = {{Central configurations and total collisions for quasihomogeneous n-body problems}},
  journal = {Nonlinear Analysis: Theory, Methods \& Applications},
  volume = {65},
  number = {7},
  pages = {1425--1439},
  year = {2006},
  doi = {10.1016/j.na.2005.10.023}
}

@article{CarazzatoPratelliTopaloglu2026,
  author  = {Carazzato, Davide and Pratelli, Aldo and Topaloglu, Ihsan},
  title   = {{Particle Approximation of Nonlocal Interaction Energies}},
  journal = {Nonlinear Analysis},
  volume  = {263},
  pages   = {113974},
  year    = {2026},
  doi     = {10.1016/j.na.2025.113974},
  eprint  = {2506.02905},
  archivePrefix = {arXiv},
  primaryClass = {math.AP}
}

@article{FrankMatzke2025,
  author  = {Frank, Rupert L. and Matzke, Ryan W.},
  title   = {{Minimizers for an Aggregation Model with
             Attractive--Repulsive Interaction}},
  journal = {Archive for Rational Mechanics and Analysis},
  volume  = {249},
  pages   = {15},
  year    = {2025},
  doi     = {10.1007/s00205-025-02084-1}
}

@article{Maderna2013HomogeneousNBody,
  author = {Maderna, Ezequiel},
  title = {{Minimizing configurations and Hamilton--Jacobi equations of homogeneous N-body problems}},
  journal = {Regular and Chaotic Dynamics},
  volume = {18},
  number = {6},
  pages = {656--673},
  year = {2013},
  doi = {10.1134/S1560354713060063},
  eprint = {1308.5578},
  archivePrefix = {arXiv},
  primaryClass = {math.DS}
}

@article{CarrollChen2004spontaneousinflationoriginarrow,
      title={{Spontaneous Inflation and the Origin of the Arrow of Time}}, 
      author={Sean M. Carroll and Jennifer Chen},
      year={2004},
      eprint={hep-th/0410270},
      archivePrefix={arXiv},
      primaryClass={hep-th},
      url={https://arxiv.org/abs/hep-th/0410270}, 
}

@article{CarrollChen_2005,
   title={{Does Inflation Provide Natural Initial Conditions for the Universe?}},
   volume={14},
   ISSN={1793-6594},
   url={http://dx.doi.org/10.1142/S0218271805008054},
   DOI={10.1142/s0218271805008054},
   number={12},
   journal={International Journal of Modern Physics D},
   publisher={World Scientific Pub Co Pte Lt},
   author={Carroll, Sean M. and Chen, Jennifer},
   year={2005},
   month=Dec, pages={2335–2339} }

@article{JainJensen2022,
  author = {Jain, Akash and Jensen, Kristan},
  title = {{Fractons in curved space}},
  journal = {SciPost Physics},
  volume = {12},
  number = {4},
  pages = {142},
  year = {2022},
  doi = {10.21468/SciPostPhys.12.4.142},
  eprint = {2111.03973},
  archivePrefix = {arXiv},
  primaryClass = {hep-th}
}

@article{Bidussi2023,
  author = {Bidussi, Leo and Hartong, Jelle and Have, Emil and Musaeus, J{\o}rgen and Prohazka, Stefan},
  title = {{Fractons, dipole symmetries and curved spacetime}},
  journal = {SciPost Physics},
  volume = {12},
  number = {6},
  pages = {205},
  year = {2022},
  doi = {10.21468/SciPostPhys.12.6.205},
  eprint = {2111.03668},
  archivePrefix = {arXiv},
  primaryClass = {hep-th}
}

@article{GomesGrybKoslowski2011,
  author  = {H. Gomes and S. Gryb and T. Koslowski},
  title   = {{Einstein gravity as a 3D conformally invariant theory}},
  journal = {Classical and Quantum Gravity},
  volume  = {28},
  number  = {4},
  pages   = {045005},
  year    = {2011},
  doi     = {10.1088/0264-9381/28/4/045005}
}

@incollection{Barbour2012,
  title={{Shape dynamics. An introduction}},
  author={Barbour, Julian},
  booktitle={Quantum field theory and gravity: Conceptual and mathematical advances in the search for a unified framework},
  editor={Finster, Felix and M{\"u}ller, Olaf and Nardmann, Marc and Tolksdorf, J{\"u}rgen and Zeidler, Eberhard},
  pages={257--297},
  year={2012},
  publisher={Birkh{\"a}user Basel},
  doi={10.1007/978-3-0348-0043-3_13}
}

@book{Strogatz2015,
  author = {Strogatz, Steven H.},
  title = {{Nonlinear Dynamics and Chaos: With Applications to Physics, Biology, Chemistry, and Engineering}},
  edition = {2},
  publisher = {Westview Press},
  address = {Boulder, Colorado},
  year = {2015},
  isbn = {9780813349107}
}

@article{WilsonKogut1974,
  author = {Wilson, Kenneth G. and Kogut, John},
  title = {{The Renormalization Group and the Epsilon Expansion}},
  journal = {Physics Reports},
  volume = {12},
  number = {2},
  pages = {75--199},
  year = {1974},
  doi = {10.1016/0370-1573(74)90023-4}
}

@article{Bertschinger,
author = {Bertschinger, Edmund},
year = {1998},
month = {September},
pages = {599--654},
title = {{Simulations of structure formation in the universe}},
volume = {36},
journal = {Annual Review of Astronomy and Astrophysics},
doi = {10.1146/annurev.astro.36.1.599}
}

@book{Arnold1978,
  author    = {V. I. Arnold},
  title     = {{Mathematical Methods of Classical Mechanics}},
  publisher = {Springer},
  year      = {1978},
  note      = {Section 11, ``The Method of Similarity,'' in Chapter 2, for mechanical similarity and scaling symmetries}
}

@incollection{Penrose1979,
  author    = {Roger Penrose},
  title     = {{Singularities and Time-Asymmetry}},
  booktitle = {General Relativity: An Einstein Centenary Survey},
  editor    = {S.~W.~Hawking and W.~Israel},
  pages     = {581--638},
  publisher = {Cambridge University Press},
  year      = {1979}
}

@article{Linde1983,
  author  = {A.~D.~Linde},
  title   = {{Chaotic inflation}},
  journal = {Physics Letters B},
  volume  = {129},
  pages   = {177--181},
  year    = {1983}
}

@book{dodelson2020modern,
  title={{Modern Cosmology}},
  author={Dodelson, Scott and Schmidt, Fabian},
  edition={2nd},
  year={2020},
  publisher={Academic Press},
  isbn={978-0128159484}
}

@article{Planck:2018vyg,
  author       = {Planck Collaboration and Aghanim, N. and others},
  title        = {{Planck 2018 results. VI. Cosmological parameters}},
  journal      = {Astronomy \& Astrophysics},
  volume       = {641},
  pages        = {A6},
  year         = {2020},
  doi          = {10.1051/0004-6361/201833910},
  eprint       = {1807.06209},
  archivePrefix= {arXiv},
  primaryClass = {astro-ph.CO}
}

@book{Peebles1993,
  author    = {P. J. E. Peebles},
  title     = {{Principles of Physical Cosmology}},
  publisher = {Princeton University Press},
  year      = {1993}
}

@book{albert2000time,
  title={{Time and Chance}},
  author={Albert, David Z.},
  year={2000},
  publisher={Harvard University Press},
  address={Cambridge, MA}
}

@article{guth1981inflationary,
  title={{Inflationary universe: A possible solution to the horizon and flatness problems}},
  author={Guth, Alan H.},
  journal={Physical Review D},
  volume={23},
  number={2},
  pages={347--356},
  year={1981},
  publisher={American Physical Society},
  doi={10.1103/PhysRevD.23.347}
}

@article{einstein1945influence,
  title={{The influence of the expansion of space on the gravitation fields surrounding the individual stars}},
  author={Einstein, Albert and Straus, Ernst G.},
  journal={Reviews of Modern Physics},
  volume={17},
  number={2-3},
  pages={120--124},
  year={1945},
  publisher={American Physical Society},
  doi={10.1103/RevModPhys.17.120}
}

@article{carrera2010influence,
  title={{Influence of global cosmological expansion on local dynamics and kinematics}},
  author={Carrera, Matteo and Giulini, Domenico},
  journal={Reviews of Modern Physics},
  volume={82},
  number={1},
  pages={169--208},
  year={2010},
  publisher={American Physical Society},
  doi={10.1103/RevModPhys.82.169}
}

@article{brans1961mach,
  title={{Mach's Principle and a Relativistic Theory of Gravitation}},
  author={Brans, Carl and Dicke, Robert H.},
  journal={Physical Review},
  volume={124},
  number={3},
  pages={925--935},
  year={1961},
  publisher={APS},
  doi={10.1103/PhysRev.124.925}
}

@book{barbour1995mach,
  title={{Mach's Principle: From Newton's Bucket to Quantum Gravity}},
  editor={Barbour, Julian B. and Pfister, Herbert},
  series={Einstein Studies, Vol. 6},
  year={1995},
  publisher={Birkh{\"a}user},
  address={Boston},
  isbn={978-0817638238}
}

@article{sciama1953origin,
  title={{On the origin of inertia}},
  author={Sciama, Dennis W.},
  journal={Monthly Notices of the Royal Astronomical Society},
  volume={113},
  number={1},
  pages={34--42},
  year={1953},
  doi={10.1093/mnras/113.1.34}
}

@article{Sloan2018,
  author  = {Sloan, David},
  title   = {{Dynamical similarity}},
  journal = {Physical Review D},
  volume  = {97},
  number  = {12},
  pages   = {123541},
  year    = {2018},
  doi     = {10.1103/PhysRevD.97.123541}
}
	\bibliographystyle{jhep}
\end{document}